\colorlet{myPurple}{blue!40!red}
\colorlet{myCyan}{cyan!60!gray}
\colorlet{myRed}{blue!55!gray}
\pgfplotsset{compat=1.14}
\newcommand{\sket}[1]{{\ensuremath{\lvert#1\rangle}}}
\newcommand{\lket}[1]{{\ensuremath{\left\lvert#1\right\rangle}}}
\newcommand{\ket}[1]{\if@display\lket{#1}\else\sket{#1}\fi}
\newcommand{\sbra}[1]{{\ensuremath{\langle#1\rvert}}}
\newcommand{\lbra}[1]{{\ensuremath{\left\langle#1\right\rvert}}}
\newcommand{\bra}[1]{\if@display\lbra{#1}\else\sbra{#1}\fi}
\newcommand{\sbraket}[2]{{\ensuremath{\langle#1\rvert#2\rangle}}}
\newcommand{\lbraket}[2]{{\ensuremath{\left\langle#1\!\left\rvert\vphantom{#1}#2\right.\!\right\rangle}}}
\newcommand{\braket}[2]{\if@display\lbraket{#1}{#2}\else\sbraket{#1}{#2}\fi}
\newcommand{\sketbra}[2]{{\ensuremath{\lvert #1\rangle\!\langle #2\rvert}}}
\newcommand{\lketbra}[2]{{\ensuremath{\left\lvert #1\right\rangle\!\!\left\langle #2\right\rvert}}}
\newcommand{\ketbra}[2]{\if@display\lketbra{#1}{#2}\else\sketbra{#1}{#2}\fi}
\newcommand{\proj}[1]{\ketbra{#1}{#1}}
\newcommand{\tp}{\otimes}
\newcommand{\tr}{\textrm{Tr}}
\newcommand{\rA}{\text{A}}
\newcommand{\rB}{\text{B}}
\newcommand{\rC}{\text{C}}
\newcommand{\rP}{\text{P}}
\newcommand{\M}{\mathsf{M}}
\newcommand{\N}{\mathsf{N}}
\theoremstyle{plain}
\newtheorem{thm}{Theorem}
\newtheorem{lem}{Lemma}
\newtheorem{lemma}{Lemma}
\newtheorem{definition}{Definition}
\newtheorem{prop}[thm]{Proposition}
\newtheorem{cor}[thm]{Corollary}
\DeclarePairedDelimiter{\ceil}{\lceil}{\rceil}
\newcommand{\idd}{\mathds{1}}
\newcommand{\W}{\mathsf{W}}
\newcommand{\X}{\mathsf{X}}
\newcommand{\Y}{\mathsf{Y}}
\newcommand{\Z}{\mathsf{Z}}
\newcommand{\ab}{\mathbf{a}}
\newcommand{\xb}{\mathbf{x}}
\newcommand{\bb}{\mathbf{b}}
\newcommand{\yb}{\mathbf{y}}
\newcommand{\CHSH}{\mathsf{CHSH}}
\newcommand{\Ss}{\mathsf{P}}
\newcommand{\T}{\mathsf{Q}}
\begin{document}

\title{Quantum networks self-test all entangled states}

\author{Ivan \v{S}upi\'{c}}
\affiliation{CNRS, LIP6, Sorbonne Universit\'{e}, 4 place Jussieu, 75005 Paris, France}
\author{Joseph Bowles}
\affiliation{ICFO-Institut de Ciencies Fotoniques, The Barcelona Institute of Science and Technology, 08860 Castelldefels (Barcelona), Spain}
\author{Marc-Olivier Renou}
\affiliation{ICFO-Institut de Ciencies Fotoniques, The Barcelona Institute of Science and Technology, 08860 Castelldefels (Barcelona), Spain}
\author{Antonio Ac\'{i}n}
\affiliation{ICFO-Institut de Ciencies Fotoniques, The Barcelona Institute of Science and Technology, 08860 Castelldefels (Barcelona), Spain}
\affiliation{ICREA - Instituci\'{o} Catalana de Recerca i Estudis Avan\c cats, 08010 Barcelona, Spain}
\author{Matty J. Hoban}
\affiliation{Department of Computing, Goldsmiths, University of London, New Cross, London SE14 6NW, United Kingdom}
\affiliation{Cambridge Quantum Computing Ltd}

\date{\today}

\begin{abstract}
Certifying quantum properties with minimal assumptions is a fundamental problem in quantum information science. 
Self-testing is a method to infer the underlying physics of a quantum experiment only from the measured statistics.  
While all bipartite pure entangled states can be self-tested, little is known about how to self-test quantum states of an arbitrary number of systems. Here, we introduce a framework for network-assisted self-testing and use it to self-test any pure entangled quantum state of an arbitrary number of systems. 
The scheme requires the preparation of a number of singlets that scales linearly with the number of systems, and the implementation of standard projective and Bell measurements, all feasible with current technology.
When all the network constraints are exploited, the obtained self-testing  certification is stronger than what is achievable in any Bell-type scenario. Our work does not only solve an open question in the field, but also shows how properly designed networks offer new opportunities for the certification of quantum phenomena. 
\end{abstract}

\maketitle

The difficulty in classically simulating quantum systems offers radically new avenues for information processing; this difficulty even becomes an impossibility when causality constraints are imposed, such as in a Bell test~\cite{Bell1964}. These new possibilities bring challenges, as experimental demonstrations require a very precise control of the quantum devices involved. It is therefore crucial for the development of quantum information technologies to design tools to certify the correct functioning of complex quantum devices using our limited classical information processing capabilities~\cite{CertReview2020}.

A ubiquitous form of certification is to determine that a system is in a particular quantum state. Standard state tomography~\cite{Fano,Vogel} achieves this by performing measurements on the system to certify and compare the obtained results with the predictions from the Born rule. This method is described as \textit{device-dependent}, as it assumes that measurements are perfectly characterised (an unrealistic assumption in many setups).
Measurements can also be certified device-dependently, through the preparation of, in turn, perfectly characterized quantum states, introducing a form of circularity in the procedure. 
The strongest form of device certification should then minimise the assumptions made: it should be based solely on experimental data and make very few assumptions about the devices involved, without requiring any detailed characterization of them. 
To attain this form of certification the \textit{device-independent} framework~\cite{acin2007device,Pironio09,colbeck}, in which quantum devices are modelled as uncharacterised `black boxes' with only classical interaction (inputs and outputs) with these boxes, offers a solution. 
Being a data-driven framework, in order to certify genuine quantum properties in the device-independent approach it is necessary to observe statistics without any classical analogue (for example correlations violating a Bell inequality). 

Self-testing pushes device-independent quantum certification to its strongest form: modulo some symmetries inherent to the device-independent framework, self-testing protocols certify the precise form of the quantum state and/or quantum measurements only from the statistics they generate \cite{upi2019selftesting}. The concept of self-testing was introduced in~\cite{Mayers2004} (see also~\cite{Mayers98}, and~\cite{summerswerner} for a precursor result) and relies on a standard Bell test in which local measurements are performed on an entangled state. The authors of~\cite{Mayers2004} showed the existence of statistics (correlations) such that black boxes reproducing them must essentially prepare the maximally entangled state $\ket{\phi^+}=\frac{1}{\sqrt{2}}(\ket{00}+\ket{11})$. These Bell correlations, therefore, self-test the state $\ket{\phi^+}$. 
These seminal works opened a new research program: what is the ultimate power of self-testing? In particular, can one design self-testing protocols for any quantum state? This question remains open since more than twenty years.
In a seminal result, Coladangelo \emph{et al.} showed that all pure bipartite states can be self tested~\cite{Coladangelo2017b}. In the general multipartite case however, and despite partial progress for some restricted families of states~\cite{graphstates,Wstate,Wstate2,SCAA,Flavio}, little is known. This should not come as a surprise, as entanglement is much richer and harder to characterize in the multipartite case. 

In this work, we answer this question by providing a self-testing protocol for any pure state of an arbitrary number of systems with arbitrary local dimension. To do so, we introduce the concept of \emph{network assisted self-testing}, in which a network made of non-trusted states and measurements is employed to self-test a target quantum device. We first show how to self-state a generic pure state of $N$ particles using a network structure that utilises $N$ additional maximally entangled states, standard single-particle projection and two-particle Bell measurements; all of which are within experimental reach. This result is obtained under standard-self testing assumptions, i.e.\ without assuming the causal constraints associated to the network geometry. Considering these additional constraints, we also show how our results imply a type of certification that is strictly stronger than any possible in standard self-testing.

\begin{figure}
    \centering
    \includegraphics[scale=0.9]{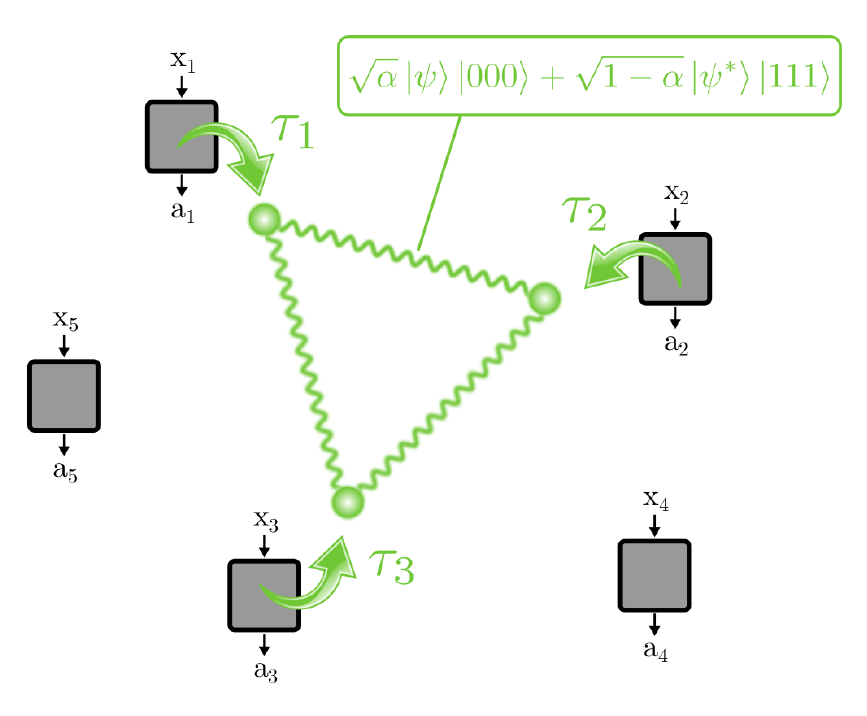}
    \caption{(a) Network-assisted Self-testing in a scenario for $M=5$ devices where the aim is to self-test a three-party state $\ket{\psi}$. We consider that the state $\ket{\psi}$ can be self-tested if there exist quantum correlations $P(a_1...a_5|x_1...x_5)$ between all devices which imply the existence of local channels $\tau_1, ..., \tau_3$ which extract a superposition of $\ket{\psi}$ and $\ket{\psi^*}$ from a subset of three devices. This superposition is ``flagged", with local ancilla qubits which indicate whether the state is conjugated or not, see Equation~\ref{extractedstate}. See the appendices for a comparison with other existing definitions of self-testing.}
    \label{fig:selftesting_definition}
\end{figure}

\emph{Standard Self-testing}---
Self-testing aims at characterising the informational content of quantum devices.  It is defined in a standard Bell test in which an $N$-party state is distributed among $N$ observers who can run $m$ possible measurements of $r$ possible results. We label the choice of measurement by each party by $x_i=1,\ldots,m$, and the obtained result by $a_i=1,\ldots,r$, with $i=1,\ldots,N$. The resulting statistics is described by the conditioned probability distribution $P(a_1...a_N|x_1...x_N)$, which we simply dub as correlations. In a quantum realisation, these observed correlations read
\begin{equation}
\label{eq:qcorr}
    P(a_1...a_N|x_1...x_N)=\tr\left(\rho_N M_{a_1|x_1}\otimes\ldots\otimes M_{a_N|x_N}\right),
\end{equation}
where $\rho_N$ denotes the states shared by the $N$ observers and $M_{a_1|x_1}$ the positive operators defining their local measurements. 
The goal of self-testing is to certify the state and/or measurements in Eq.~\eqref{eq:qcorr} only from the observed correlations. This is done in the device-independent scenario, assuming the validity of Eq.~\eqref{eq:qcorr} but without invoking any assumptions on the state and measurements appearing in it. 
In particular the dimension is not fixed, and as a result the state in Eq.~\eqref{eq:qcorr} is taken to be pure, $\rho_N=\proj{\psi_N}$. It is also standard to assume that measurements are projective as correlations from arbitrary measurements can be simulated by projective measurements. 
In order to arrive at the correlations in Eq.~\eqref{eq:qcorr} there is an assumption that in each run of the experiment the state and measurement operators are the same. 

The device-independent formulation of self-testing has important consequences. First, self-testing is limited to pure states. Second, the state can only be specified modulo some unavoidable symmetries. 
For instance, a local rotation (or unitary) of one part of the state, compensated by a rotation of the corresponding measurement device, results in the same statistics. 
Also, a source creating extra unmeasured degrees of freedom produces the same statistics. Moreover, one cannot discriminate between a given quantum realisation, and its complex conjugate: they also lead to the same statistics. The formal definition of state self-testing we use in this work takes into account all these facts as follows. 

\begin{definition}[State Self-testing]\label{def:SelfTest}
A state $\ket{\psi_N}$ can be self-tested if there exist correlations $P(a_1...a_N|x_1...x_N)$ such that for any quantum realisation of them, there exists a set of $N$ local quantum maps $\tau_1,\cdots,\tau_N$ that when applied to the unknown state of the system extract the state 
\begin{align}\label{extractedstate}
\ket{\psi^\alpha_N}=\sqrt{\alpha}\ket{\psi_N}\ket{0}^{\otimes N}+\sqrt{1-\alpha}\ket{\psi_N^*}\ket{1}^{\otimes N}
\end{align}
for some unknown $\alpha\in[0,1]$. 
\end{definition}

Operationally, this definition implies that from the measurement statistics one can conclude that the target state $\ket{\psi_N}$ can be extracted from the underlying state in the experiment. 
This extraction procedure is defined by a set of $N$ local quantum channels, i.e. deterministic quantum operations that can be physically realised.
Importantly, these channels describing the extraction do not need to be implemented. 
All that is required is a proof that such channels exist, implying the target state can be extracted. 
The extraction procedure takes care of the freedom inherent to self-testing in unmeasured additional degrees of freedom or local rotations. 
The symmetry with respect to complex conjugation is reflected by the unknown linear combination of the target state and its complex conjugate with auxiliary systems appearing in the definition of the extracted state $\ket{\psi^\alpha_N}$. 
In fact, the correlations derived when making local measurements $M_{a_i\vert x_i}$ on $\ket{\psi_N}$ can also be obtained by making local measurements $M_{a_i\vert x_i}\otimes\ket{0}\bra{0}+M^*_{a_i\vert x_i}\otimes\ket{1}\bra{1}$ on $\ket{\psi^\alpha_N}$ independently of $\alpha$. 
Extracting state $\ket{\psi^\alpha_N}$ is in principle the most one can hope for when using only the observed correlations in the Bell test. 
Note that in the particular case of bipartite systems, the issue with complex conjugation does not appear because every pure state is real when rotated to its Schmidt decomposition, i.e. one can always take $\alpha=1$ as in~\cite{Coladangelo2017a}.

\begin{figure*}
    \centering
    \includegraphics[width=\textwidth]{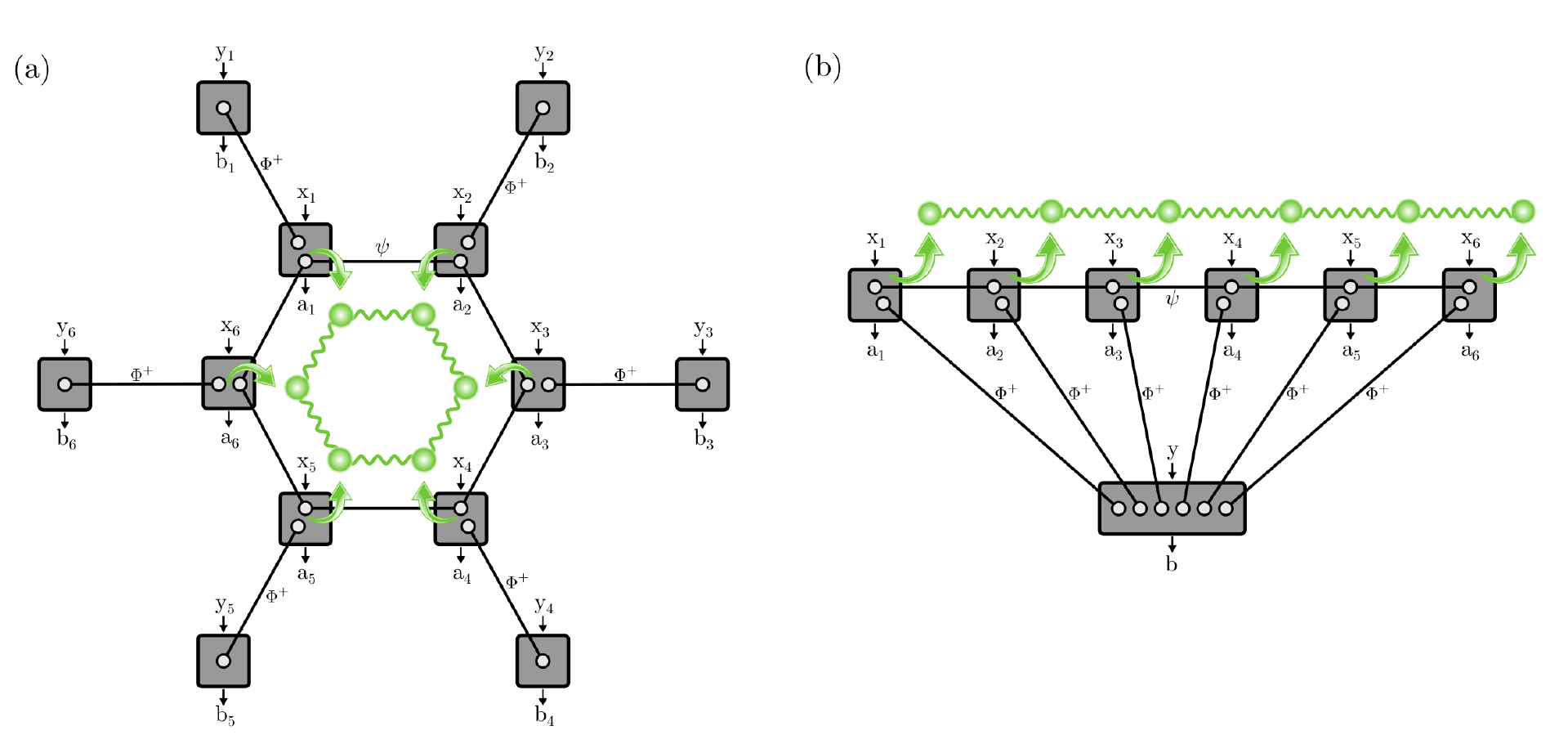}
    \caption{
    The network scenario we use to prove that any pure state $\ket{\psi_N}$ can be self-tested in a network of $M=2N$ parties composed of $N$ main parties $A_1, ..., A_N$, each being assisted by an auxiliary party $B_1, ..., B_N$ (here $N=6$). In order to establish the self-testing correlations, the main parties (holding the state $\ket{\psi}$ to be self-tested) each share an additional maximally entangled state with their corresponding auxiliary party. By making joint measurements at the main parties, one can show that the resulting correlations imply the existence of local channels that extract the state $\ket{\psi_N^\alpha}$ (for some $\alpha$) from these devices. (b) The network scenario we use to prove fully network-assisted self-testing (See Theorem \ref{thm:MainResult2}). Here, we have only a single auxiliary party that receives input $y=(y_1,\cdots, y_N)$ and outputs $b=(b_1,\cdots,b_N)$. The measurement strategy used to establish the self-testing correlations is the same as in (a), where now all the auxiliary parties are grouped together. Assuming the independence of the sources defined by this network structure, we can prove that the extracted state must be $\ket{\psi_N}$ or its complex conjugate.}
    \label{fig:scenarios}
\end{figure*}

\begin{figure*}[hbt!]
    \centering
    \includegraphics[width=\textwidth]{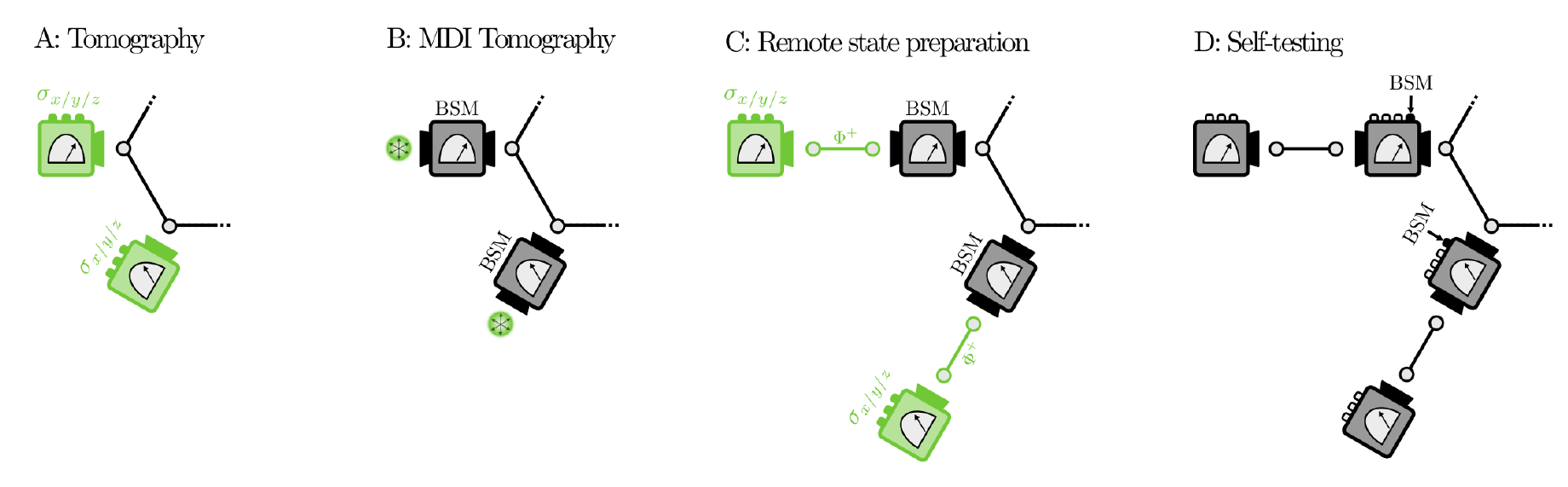}
    \caption{The ingredients required for our self-testing protocols. We start with quantum state tomography (A), and through adapting the set-up we end up in the self-testing scenario (D), where we wish to perform the same task as tomography but through an \emph{uncharacterised} setup. Characterised, or trusted, devices are (in green) are eventually replaced with uncharacterised, or untrusted devices (in black). For illustrative purposes we focus on qubits.\\
(A) Local Tomography: by measuring the three (characterised) Pauli operators $\sigma_x,\sigma_y, \sigma_z$, we characterise a state $\ket{\psi_N}$ shared among $N$ parties (of which two are shown here).\\
(B) Measurement-Device-Independent (MDI) Tomography:
 this modification of tomography removes the assumption that measurement devices are perfectly characterised, and introduce auxiliary systems in informationally-complete single-qubit states, e.g. the eigenstates of the three Pauli operators. If Bell-state measurements are performed, this is equivalent to tomography, but a priori we do not need to assume this \cite{MDIST}.\\
(C) MDI Tomography assisted with Remote State Preparation (RSP):
the auxiliary states used in MDI Tomography can be obtained by measuring one half a maximally-entangled system. The reduced post-measurement state is thus informationally-complete.\\
(D) Network Self-Testing:
the characterised devices in MDI Tomography with RSP are replaced with an initially uncharacterised measurement device and bipartite system shared between an original node and an auxiliary node. The maximally-entangled two-qubit state and Pauli measurements on one qubit are then self-tested; for this we use maximal violations of three different Clauser-Horne-Shimony-Holt inequalities~\cite{Clauser_1969} as shown in the appendices. Thus a priori characterisation is now unnecessary.}
    \label{fig:DITomography}
\end{figure*}

\emph{Network-assisted Self-testing}---
Our work is based on the idea that networks provide new avenues for quantum certification.
For that, we introduce the framework of network-assisted self-testing, and of \textit{fully} network-assisted self-testing, where in the latter case additional network constraints are assumed. 

The general idea of the framework is to construct a network of $M\geq N$ parties and use it to generate statistics that guarantees the presence of the target state $\ket{\psi_N}$ in the network. 
The first variant of the network-assisted framework is very similar to the standard Bell scenario and assumes that the generated statistics has the same form as in Eq.~\eqref{eq:qcorr}, now for a system made of $M$ particles.  

%
\begin{definition}[Network-assisted Self-testing]\label{def:NetSelfTest}
A state $\ket{\psi_N}$ can be self-tested in a network of $M\geq N$ parties if there exist quantum correlations $P(a_1...a_M|x_1...x_M)$ such that, for any $M$-partite quantum realisation of them, there exists a set of $N$ local quantum maps $\tau_1,\cdots,\tau_N$ that when applied to a subset of $N$ parties of the unknown state of the system extract the state 
\begin{align}\label{extractedstatetwo}
\ket{\psi^\alpha_N}=\sqrt{\alpha}\ket{\psi_N}\ket{0}^{\otimes N}+\sqrt{1-\alpha}\ket{\psi_N^*}\ket{1}^{\otimes N}
\end{align}
for some unknown $\alpha\in[0,1]$. 
\end{definition}
This definition is illustrated in Fig~\ref{fig:selftesting_definition}. Note that network-assisted self-testing includes standard self-testing by taking $M=N$. Again, a self-testing statement in terms of $\ket{\psi^\alpha_N}$ is the most one can hope if the only assumption is that the observed correlations have the form of Eq.~\eqref{eq:qcorr}. We can now present the first main result of our work.

\begin{thm}\label{thm:MainResult1}
Any $N$-party pure state $\ket{\psi_N}$ can be self-tested according to Definition \ref{def:NetSelfTest} in a network involving $M=2N$ parties assisted with $N$ maximally entangled states.
\end{thm}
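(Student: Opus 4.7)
The plan is to leverage the network architecture of Fig.~\ref{fig:scenarios}(a) together with the chain of reductions illustrated in Fig.~\ref{fig:DITomography}. Concretely, I would construct a target correlation in which, for one family of inputs, each main/auxiliary pair $(A_i,B_i)$ maximally violates three CHSH inequalities, one associated to each Pauli operator, so that by standard CHSH rigidity the bipartite state shared between them is self-tested to be a two-qubit maximally entangled state (up to local isometry and complex conjugation) and three inputs of $B_i$ are certified to be $\sigma_x,\sigma_y,\sigma_z$. For a second family of inputs, $B_i$ uses these Pauli measurements to remotely prepare on $A_i$'s ``entangled half'' the six tomographically complete eigenstates $\{\ket{0},\ket{1},\ket{\pm},\ket{\pm i}\}$, while $A_i$ performs a two-qubit Bell-state measurement on this half jointly with its share of the unknown $N$-party state. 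These are exactly the ingredients of the MDI state-tomography with RSP construction: the Bell measurement teleports the remotely prepared qubit into $A_i$'s slot of the unknown state, so the joint output probabilities reproduce $\bra{\phi_1\ldots\phi_N}\rho_N\ket{\phi_1\ldots\phi_N}$ for the product basis $\bigotimes_i\ket{\phi_i}$. I fix the target correlations to be those obtained when $\rho_N=\proj{\psi_N}$.

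Given any quantum realisation of these correlations, the next step is to apply at each site $(A_i,B_i)$ the local isometry provided by the CHSH self-testing statement. After these isometries, every $A_iB_i$ pair factorises as a canonical maximally entangled qubit pair tensored with junk, and each $B_i$'s relevant measurements act as the canonical Paulis, up to a coherent conjugation flag $\ket{0}_{F_i}$ vs $\ket{1}_{F_i}$ on a local ancilla recording the two possible signs of ``$\sigma_y$''. The $N$ remaining qubits carrying the physical information about $\ket{\psi_N}$ end up living in a state $\ket{\chi}$ on which $A_i$'s Bell measurement acts as a standard projective Bell measurement. A direct teleportation calculation then shows that, for every Bell outcome pattern and every choice of RSP inputs, the probabilities contributed by $\ket{\chi}$ on the branch where every flag reads $\ket{0}$ reproduce those computed from $\proj{\psi_N}$, and those on the all-ones flag branch reproduce those computed from $\proj{\psi_N^*}$; informational completeness of the single-qubit tomographic set then forces $\ket{\chi}=\ket{\psi_N}$ and $\ket{\chi}=\ket{\psi_N^*}$ respectively.

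Combining the two flag branches yields the extracted state $\ket{\psi_N^\alpha}$ of Eq.~\eqref{extractedstatetwo}, with $\alpha\in[0,1]$ set by the squared weights of the two branches in the post-isometry decomposition; the composite operation at each main party is the required local channel $\tau_i$. The step I expect to be the main obstacle is ruling out mixed flag patterns $\ket{s_1\ldots s_N}$ with $s_i\in\{0,1\}$ not all equal: such patterns would correspond to partial complex conjugation of $\ket{\psi_N}$ and must be shown to be absent from the post-isometry state. The argument is that MDI-tomography data on any such branch involves products of $\sigma_y$'s with mismatched signs, and therefore cannot match the observed correlations whenever $\ket{\psi_N}$ has non-vanishing imaginary parts in the computational basis; so the decomposition must collapse onto the two coherent branches with all flags aligned. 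Once this coherence argument is made rigorous, the rest of the proof is bookkeeping around CHSH rigidity and the MDI-tomography correspondence, both of which can be invoked as established building blocks.
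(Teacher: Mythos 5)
Your overall architecture coincides with the paper's: 3-CHSH rigidity at each $(A_i,B_i)$ pair to certify $\ket{\phi^+}$ and the three Pauli observables up to a coherent conjugation flag, followed by Bell-measurement teleportation of the unknown state to the auxiliary side and tomography there. The genuine gap is in the step you yourself identify as the main obstacle, the exclusion of mixed flag patterns, where the argument you sketch is incorrect. First, the condition you invoke --- that $\ket{\psi_N}$ have non-vanishing imaginary parts in the computational basis --- is the wrong hypothesis. The mixed-flag branch labelled by a pattern $\hat{\ab}$ contributes a term of the form $\alpha_{\hat{\ab}}\,\rho_{\hat{\ab}}^{T_{\hat{\ab}}}$ (a partial transpose over the conjugated sites), and what rules it out is \emph{positivity}: if each branch is forced to satisfy $\rho_{\hat{\ab}}^{T_{\hat{\ab}}}=\proj{\psi_N}$, then $\rho_{\hat{\ab}}=\proj{\psi_N}^{T_{\hat{\ab}}}$ must be a state, which fails by the PPT criterion precisely when $\ket{\psi_N}$ is entangled across the cut defined by $\hat{\ab}$. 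The relevant assumption is therefore genuine multipartite entanglement of $\ket{\psi_N}$ (the paper handles non-GME states by factorising them first), not the presence of imaginary amplitudes: a real GME state still requires the exclusion, and your ``mismatched $\sigma_y$ signs'' detector would see nothing there.

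Second, your argument implicitly assumes each flag branch must \emph{individually} reproduce the target statistics, but the observed correlations only constrain the \emph{sum} $\sum_{\hat{\ab}}\alpha_{\hat{\ab}}\rho_{\hat{\ab}}^{T_{\hat{\ab}}}=\proj{\psi_N}$. Getting from this to the per-branch statement requires the spectral fact that the partial transpose of any density operator has maximal eigenvalue at most $1$, with equality only for states separable across the transposed cut; combined with $\sum\alpha_{\hat{\ab}}=1$ this forces $\bra{\psi_N}\rho_{\hat{\ab}}^{T_{\hat{\ab}}}\ket{\psi_N}=1$ for every branch with $\alpha_{\hat{\ab}}\neq 0$, and only then does the PPT contradiction apply. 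This is exactly the content of the paper's Proposition in Appendix~\ref{app:Step2} (built on the spectrum of $\proj{\phi}^{T_B}$ and its corollary), and it is the missing technical lemma in your proposal; without it, the collapse onto the two all-aligned branches does not follow.
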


The network used to prove the theorem is shown in Figure~\ref{fig:scenarios}a. It consists of $M=2N$ parties, where each local, main party of $\ket{\psi_N}$ shares a maximally entangled state of dimension $2^k$ with an additional auxiliary party. Here $k$ is the smallest integer such that $2^k$ is larger than the local Hilbert space dimension of $\ket{\psi_N}$. 

While the proof of the theorem is given in the appendices, here we just provide the basic intuition behind the construction of the network-assisted self-testing protocol. The ingredients in the proof are presented in Fig.~\ref{fig:DITomography}. The idea is to turn a standard device-dependent tomography process, where measurements are characterised and trusted, into a self-testing protocol in which the measurements are not characterised. To do so, we introduce $N$ auxiliary parties that each share a maximally entangled state with one of the original $N$ main parties; the main parties will now have two systems. The auxiliary and main parties perform local measurements that self-test their shared maximally entangled state.  The maximally entangled state can now be seen as a resource for essentially teleporting the initial local state of each main party to an auxiliary system. To perform this teleportation,  a Bell measurement can be performed at each of the main parties. In addition to the measurements self-testing the maximally entangled state, the auxiliary parties perform tomographically complete measurements, thus tomography is performed on these teleported states. Recall that tomography defines a one-to-one map between probabilities and states using trusted measurements. Thus, once we remove the trust on the measurements, we now have a one-to-one map between the observed correlations and the state $\ket{\psi_N}$, always up to the relevant symmetries.


Remarkably, the use of networks allows for the introduction of another form of self-testing protocols. This second variant exploits the fact that, in a general network scenario, there may be causal constraints enforced by the network geometry that are not covered by Eq.~\eqref{eq:qcorr}. 
In particular, the network may consist of independent preparations of quantum states, which imply that not only the measurements but also the state in the network has a tensor-product form. Such source independence assumptions are common in the study of network nonlocality, see \cite{BranciardBilocality, FritzBeyondBell, RenouGenuineTriangle,nonindependent}, and \cite{Tavakoli2021reviewNetworkNonloc} for a review. 
For instance, if the network consists of two states prepared by two independent sources, as in a standard entanglement-swapping experiment, the network state is the tensor product of the two preparations. 
These constraints are also natural in the network of Fig.~\ref{fig:selftesting_definition}b but were not used in the proof of Theorem~\ref{thm:MainResult1}, which simply assume the validity of Eq.~\eqref{eq:qcorr}. 
In general, when enforcing also the independent-preparation constraints, the correlations in a network have the form 
\begin{equation}
\label{eq:netcorr}
    P(a_1...a_M|x_1...x_M)=\tr(\bigotimes_i\rho_i \bigotimes_j M_{a_j|x_j}),
\end{equation}
where we slightly abuse the notation and simply represent the independence constraints of the states and measurements in the network by the generic tensor products. 
The fact that the constraints on the observed correlations~\eqref{eq:netcorr} are now stronger than those associated to the Bell tests used in standard self-testing, Eq.~\eqref{eq:qcorr}, opens the possibility of stronger self-testing statements. 
In particular, using state-independence constraints it may be possible to exclude all the fully correlated states in $\ket{\psi^\alpha_N}$ and restrict the extraction process to the two extreme non-correlated cases $\alpha=0,1$, that is, the target state or its complex conjugate. This idea was recently used to prove the necessity of complex numbers in Quantum Theory~\cite{RenouRealQT}.
All these considerations lead to the following stronger definition of network-assisted self-testing, which is impossible when using standard Bell tests.

\begin{definition}[Fully Network-assisted Self-testing]\label{def:FullNetSelfTest}
A state $\ket{\psi_N}$ can be self-tested in a network of $M\geq N$ parties if there exist quantum correlations $P(a_1...a_M|x_1...x_M)$ respecting the network independence constraints such that, for any $M$-partite quantum realisation of them, there exists a set of $N$ local quantum maps $\tau_1,\cdots,\tau_N$ that when applied to a subset of $N$ parties of the unknown state of the system extract either the state $\ket{\psi_N}$ or $\ket{\psi_N^*}$.
\end{definition}

Our second main result is to prove that all states can also be self-tested according to this stronger definition. 

\begin{thm}\label{thm:MainResult2}
Any $N$-party pure state $\ket{\psi_N}$ can be self-tested according to definition \ref{def:FullNetSelfTest} in a network involving $M=N+1$ parties assisted with $N$ maximally entangled states.
\end{thm}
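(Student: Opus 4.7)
The plan is to build directly on Theorem~\ref{thm:MainResult1}, invoking the source independence built into Equation~\eqref{eq:netcorr} to collapse the flag superposition $\sqrt{\alpha}\ket{\psi_N}\ket{0}^{\otimes N}+\sqrt{1-\alpha}\ket{\psi_N^*}\ket{1}^{\otimes N}$ onto one of its two extremes. The measurement strategy is exactly the one used in Figure~\ref{fig:scenarios}a, transplanted into the network of Figure~\ref{fig:scenarios}b: each main party $A_i$ either performs a Bell measurement between its share of $\ket{\psi_N}$ and its half of the maximally entangled state with $B$, or CHSH-type single-qubit measurements on that half only; the single auxiliary party $B$ implements a tomographically complete single-qubit measurement on each of its $N$ subsystems, using the $i$-th coordinate of the tuple input $y$ and output $b$ to address subsystem $i$.

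Under the network independence constraint, the joint state of the $N+1$ parties factorises as $\rho_\psi \otimes \bigotimes_i \rho_i$, where $\rho_\psi$ is the pure state emitted by the source of $\ket{\psi_N}$ and each $\rho_i$ is the pure state emitted by the $i$-th maximally entangled source. Restricting to the CHSH-type inputs at $A_i$ and the $i$-th subsystem of $B$, standard bipartite self-testing yields local isometries $\Phi_{A_i}\otimes\Phi_{B_i}$ certifying $\rho_i = \ket{\phi^+}\!\bra{\phi^+}$ and certifying the associated measurement operators up to a complex-conjugation flag ancilla local to source $i$. Because each isometry acts on a distinct Hilbert space, the $N$ flag ancillas sit on $N$ disjoint subsystems, and the product structure of the sources forces the joint flag register to be a tensor product state rather than the coherent superposition $\sqrt{\alpha}\ket{0}^{\otimes N}+\sqrt{1-\alpha}\ket{1}^{\otimes N}$ that was responsible for $\ket{\psi^\alpha_N}$ in Theorem~\ref{thm:MainResult1}.

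Rerunning the argument of Theorem~\ref{thm:MainResult1} with this refined decomposition, the Bell measurements at the $A_i$ teleport $\rho_\psi$ onto $B$'s $N$ subsystems, where $B$'s tomography sees each subsystem either directly or in complex-conjugated form according to the (now independent) flag values. Since $\rho_\psi$ is itself an independent pure state, not a superposition, and since the only flag configurations compatible with correlations matching the ideal measurement of $\ket{\psi_N}$ are the two uniform ones (all-zero or all-one), one concludes that $\rho_\psi \in \{\ket{\psi_N}\!\bra{\psi_N},\ket{\psi_N^*}\!\bra{\psi_N^*}\}$. The extraction channels $\tau_i$ can then be chosen simply as the projections onto $A_i$'s share of $\rho_\psi$, producing either $\ket{\psi_N}$ or $\ket{\psi_N^*}$ on the $N$ main parties, as required by Definition~\ref{def:FullNetSelfTest}.

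The main obstacle is the flag-localisation step: one must verify that the self-testing isometry for source $i$ places the complex-conjugation ancilla strictly on Hilbert spaces local to $A_i$ and to $B$'s $i$-th subsystem, and that $B$'s joint tomographic measurement does not secretly transfer flag information between sources. The natural way to handle this is to first certify, using only the CHSH-type subprotocols, that $B$'s measurement decomposes as a product across its $N$ subsystems, and then to use source independence to rule out any partial-conjugation pattern (where only a proper subset of the $N$ subsystems is conjugated) as incompatible with the observed correlations, in the spirit of the network-based separation arguments used to exclude real-valued quantum theory~\cite{RenouRealQT}.
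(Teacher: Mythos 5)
Your argument is correct in substance but takes a genuinely different route from the paper's. The paper's reference experiment for Theorem~\ref{thm:MainResult2} gives the single auxiliary party two \emph{extra} inputs, $\lozenge$ and $\blacklozenge$, on which it performs parallel Bell-state measurements on adjacent pairs of its subsystems; the resulting correlations, Eq.~\eqref{eq:AlignFullyNetworkAssisted}, together with source independence, eliminate every non-uniform branch of the conjugation-flag register already at the level of the device-characterisation lemma (Lemma~\ref{lemmachshFNA}), so the tomographic device is certified to be either exactly the reference Paulis or exactly their global conjugate \emph{before the target state enters the argument}, and the partial-transpose proposition of Appendix~\ref{app:Step2} is then not needed. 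You instead keep the Theorem~\ref{thm:MainResult1} pipeline intact — implicitly leaning on Proposition~\ref{prop} (genuine multipartite entanglement plus the spectrum bound on partial transposes) to exclude non-uniform flag patterns from the tomography data — and only then invoke independence to argue that a flag register that is simultaneously a product across the $N$ independent singlet sources and supported on $\mathrm{span}\{\ket{0}^{\otimes N},\ket{1}^{\otimes N}\}$ must be one of the two uniform states. That combination does close the argument (it is the paper's final ``product versus entangled across $A_1B_1|\cdots|A_NB_N$'' step, run after Proposition~\ref{prop} rather than after the $\lozenge/\blacklozenge$ constraints), and you correctly isolate the real obstacle — factorising $B$'s joint measurement across its subsystems — which the paper resolves exactly as you suggest, via the 3-CHSH relations $\rB^{(j)}_{y_j,\yb}\ket{\Psi}=(\rA_0^{(j)}\pm\rA_1^{(j)})\ket{\Psi}/\sqrt{2}$ combined with the product form of the sources. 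The trade-off: your route needs no extra settings for $B$, but it only works when $\ket{\psi_N}$ is genuinely multipartite entangled (for a product target, partial conjugations aligned with the factors cannot be excluded by tomography), whereas the paper's route characterises the devices independently of the target. Two minor corrections: the independence that collapses the superposition is that of the $N$ singlet sources carrying the flag ancillae, not the purity of $\rho_\psi$ as you write; and the extraction map is not a ``projection onto $A_i$'s share'' but the SWAP/teleportation-type isometry of Appendix~\ref{app:Step3FullyNetworkAssisted}, since the certified state is only located up to a local isometry.
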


The proof is given in the appendices but the idea behind is intuitive. The network is similar to the one used in the proof of Theorem~\ref{thm:MainResult1}, but now the independence constraints are also enforced upon the states. Because this independence is assumed, in some sense, we not need to separate out each system into an individual node. That is, there is no need to have $N$ auxiliary nodes as in Fig.~\ref{fig:selftesting_definition}, and they can now be grouped into a single node sharing the $N$ maximally entangled states with the $N$ systems in $\ket{\psi_N}$. The resulting network is shown in Fig.~\ref{fig:scenarios}b. 

Let us conclude this letter with several remarks. Our protocol is the same for any state, and consists of a fixed structure to which any pure state can be plugged in. 
The number of auxiliary systems required to self-test an $N$-partite state scales linearly in $N$, and the protocol is composed of the same repeated element for each subsystem. 
Importantly, in practice, the devices utilised in our modular scheme, namely maximally-entangled states and Bell measurements, are some of the most elementary building blocks of the quantum internet. Hence, our protocol is particularly appealing for practical purposes. 
Although we have not proven the noise tolerance of this scheme, future work could utilise standard techniques in self-testing to prove its robustness. 
In one particular direction, it would be useful to adapt the well known numerical SWAP technique to our scenario~\cite{swap}, so that numerical bounds on robustness could be found. Finally, our method gives a rather generic way of translating device-dependent schemes to their device-independent counterparts, in our case from tomography to self-testing protocols, that could be generalised to other certification procedures~\cite{badescu,BancalSelfTestEntMeas,RenouSelfTestEntdMeas}. This could then lead to new device-independent protocols for quantum key distribution and verifiable distributed quantum computation \cite{ruv,verifierleash}, especially those requiring multipartite states \cite{mckaguebqp}.

~\\
\textbf{Acknowledgements}
I.\v{S}. acknowledges funding from ERC grant QUSCO. M.-O.R. is supported by the supported by the grant PCI2021-122022-2B financed by MCIN/AEI/10.13039/501100011033 and by the European Union NextGenerationEU/PRTR and acknowledge the Swiss National Fund Early Mobility Grants P2GEP2\_19144. M.J.H. acknowledges the FQXi large grant ``The Emergence of Agents from
Causal Order" (FQXi FFF Grant number FQXi-RFP-1803B). We acknowledge support from the Government of Spain (FIS2020-TRANQI, Retos Quspin and Severo Ochoa CEX2019-000910-S), Fundacio Cellex, Fundacio Mir-Puig, Generalitat de Catalunya (CERCA, AGAUR SGR 1381 and QuantumCAT), the ERC AdG CERQUTE, the AXA Chair in Quantum Information Science and the Swiss NCCR SwissMap.




\bibliographystyle{alphaurl}
\bibliography{multipartite}



\newpage

\onecolumngrid
\appendix

\appendixpage
\addappheadtotoc


In the following, we provide the framework for network-assisted self-testing of all pure multipartite quantum states and explain our proof in appendix~\ref{app:GlobalStrategyNetworkAssisted}. We give a rigorous detailed proof in appendices~\ref{app:Step1NetworkAssisted},~\ref{app:Step2},~\ref{app:Step3NetworkAssisted}. We also explain in appendix~\ref{app:GlobalStrategyFullyNetworkAssisted} how to adapt our ideas to the additional hypothesis of source independence in the network, in order to obtain fully network-assisted self-testing of all pure multipartite quantum states. 
For simplicity, we concentrate on qubit states: extension to qudit states is given in appendix~\ref{app:ExtensionQudits}. 
We also restrict ourselves to pure genuinely multipartite entangled quantum states, as when a pure state is not genuinely multipartite entangled, it can be written as the product of several pure genuinely multipartite entangled states of smaller size, and one can then self test each of these states separately.

\section{Network-assisted self-testing}\label{app:GlobalStrategyNetworkAssisted}

We first introduce in appendix~\ref{app:NotationsNetworkAssisted} the definition of network assisted self-testing, as well as the concepts of (i) ideal reference experiment and (ii) performed physical experiment that produces the self testing correlations.
Then, in appendix~\ref{app:ProofStrategyNetworkAssisted}, we outline the proof of our result, which is decomposed in three steps.

\subsection{Notations, definitions and setup}\label{app:NotationsNetworkAssisted}

~\\
\textbf{Reference and physical experiments.}
The experiment involving the setup  we want to self-test is called the \emph{reference experiment}. It consists of the reference state $\ket{\psi_N}$ to be self-tested (assumed to be pure multipartite genuinely entangled), reference measurements (in our case, they will be Pauli and Bell state measurements), and additional auxiliary states (in our case, they will be several maximally entangled bipartite qubit states). 
The actual experiment performed in the laboratory is named the \emph{physical experiment}, consisting of the physical state $\ket{\Psi}$ (which include all states created in the physical experiment, without assuming independent preparation) and physical measurements. The only assumptions made about the physical experiment is that it can be described by quantum theory, that no signaling is unforced, and that the observed correlations are the same as in the reference experiment.
\\
\\
\textbf{The reference experiment for network assisted self-testing.}
The reference experiment is the ideal experiment which should be done to obtain self-testing, see Figure~\ref{fig:ReferenceExperimentNetworkAssisted}. It involves $2N$ parties, decomposed into $N$ \emph{main parties} labeled $A_1, ..., A_N$ and $N$ \emph{auxiliary parties} labeled $B_1, ..., B_N$, the reference state $\ket{\psi_N}$, and $N$ copies of the Bell state $\ket{\phi^+}=\frac{1}{\sqrt{2}}(\ket{00}+\ket{11})$. In the following, we write $\mathbb{A}\equiv A_1 ... A_N$ and $\mathbb{B}\equiv B_1 ... B_N$. The $N$ main parties are sharing the reference state $\ket{\psi_N}_{\mathbb{A}}$ together, and each main party $A_j$ is associated with her auxiliary party $B_j$, with whom she shares a state $\ket{\phi^+}_{A_jB_j}$.

The main party $A_j$ (resp. auxiliary party $B_j$) receives inputs $x_j\in\{0,...,5,\diamond\}$ (resp. $y_j\in\{0,1,2\}$). On all inputs except $\diamond$, the main party $A_j$ (resp. auxiliary party $B_j$) provides a bit output $a_j\in\{0,1\}$ (resp. $b_j\in\{0,1\}$). 
These outputs are obtained via local Pauli measurements of the pair  maximally violating the 3-CHSH inequalities \cite{Clauser_1969}, composed of the sum of three versions of the same CHSH inequality up to permutations~\cite{Bowles_2018PRL} (see the reference correlations below). More precisely, the reference observables of each auxiliary party $B_j$ are three Pauli observables $\sigma_0:=\sigma_z$, $\sigma_1:=\sigma_x$ and $\sigma_2:=\sigma_y$  and each main party $A_j$ performs six different measurements, corresponding to the reference observable $(\sigma_x\pm \sigma_z)/\sqrt{2}$, $(\sigma_x\pm \sigma_y)/\sqrt{2}$ and $(\sigma_y\pm \sigma_z)/\sqrt{2}$. On the last input $\diamond$, each main party performs a full Bell state measurement in basis $\ket{\phi^\pm}=\frac{1}{\sqrt{2}}(\ket{00}\pm\ket{11}),\ket{\psi^\pm}=\frac{1}{\sqrt{2}}(\ket{01}\pm\ket{10})$, providing two bits of output.
The corresponding measurements projectors of parties $A_j, B_j$ are written $\M'_{a_j|x_j},N'_{b_j|y_j}$. 
~\\
\\
\textbf{The physical experiment for network assisted self-testing.}
The physical experiment consists in the experiments actually performed in the laboratory, seen by an observer, who should be convinced that there exist a channel, defined by a completely-positive trace-preserving (CPTP) map, which can extract the state $\ket{\psi_N}$ from the experimental devices. This physical experiment provides the same correlations as the reference experiment, detailed below. However, the observer does not assume anything about the internal functioning of the physical experiment, not even its network causal structure. They aim to \emph{deduce} the existence of a CPTP map which can extract $\ket{\psi_N}$ from the devices in the physical experiment that give rise to the correlations she observes.

We denote the full state all main and auxiliary parties share as $\ket{\Psi}^{A_1,\cdots,A_N,B_1,\cdots, B_N} \equiv \ket{\Psi}$ and write the measurement operator of $A_j$ on input $x_j$, for output $a_j$ (resp. $B_j$, $y_j$, $b_j$), as a projector $\M_{a_j|x_j}$ (resp. $\N_{b_j|y_j}$). When the output $a_j$ or $b_j$ is a bit (on all inputs except $x_j=\diamond$), we introduce dichotomic observables $\rA_{x_j}^{(j)} = \sum_{a_j}(-1)^{a_j}\M_{a_j|x_j}$ and $\rB_{y_j}^{(j)} = \sum_{b_j}(-1)^{b_j}\N_{b_j|y_j}$. 
\\
\\
\textbf{Reference correlations in the network assisted self-testing experiment.}
The only assumptions made on the physical experiment are that it can be described by quantum mechanics, satisfies the no signalling constraints and achieves the same correlations as in the reference experiment, as follows:
\begin{tcolorbox}[title=Reference correlations in the network-assisted scenario,title filled]
\begin{enumerate}
    \item In between every pair of a main party and its auxiliary party $A_j, B_j$, the following 3-CHSH inequality is maximally violated:
    \begin{equation}\label{eq:3CHSHViolationNetworkAssisted}
        \forall j, \CHSH_j(0,1;0,1)+\CHSH_j(2,3;0,2)+\CHSH_j(4,5;1,2) = 6\sqrt{2},
    \end{equation}
    where $\CHSH_j(x_1,x_2,y_1,y_2)=\langle A^{(j)}_{x_1}B^{(j)}_{y_1}\rangle +\langle A^{(j)}_{x_1}B^{(j)}_{y_2}\rangle+\langle A^{(j)}_{x_2}B^{(j)}_{y_1}\rangle-\langle A^{(j)}_{x_2}B^{(j)}_{x_2}\rangle$.
    \item When all main parties perform the measurement $\diamond$, the statistics correspond to the ideal one obtained via the teleportation of the state $\ket{\psi_N}$ to the auxiliary parties up to a correction unitary indexed by $a_j\in\{00,01,10,11\}$, who perform a complete tomography of the teleported particle with the Pauli measurements:
    \begin{equation}\label{eq:TomographyTeleportedStateNetworkAssisted}
        \forall k \in \{0,1,2\}, \bra{\Psi}\bigotimes_{j=1}^N\M_{a_j|\diamond}^{(j)}\otimes \rB_{k}^{j}\ket{\Psi} = \frac{1}{4^N}\bra{\psi_N}\bigotimes_{j=1}^N U_{a_j}^\dagger\sigma_{k}^{j}U_{a_j}\ket{\psi_N},
    \end{equation}
    where $\sigma_{0} \equiv \sigma_z$, $\sigma_{1} \equiv \sigma_x$, $\sigma_2 \equiv \sigma_y$, $U_{00} \equiv \idd$, $U_{01} \equiv \sigma_z$, $U_{10} \equiv \sigma_x$, $U_{11} \equiv \sigma_z\sigma_x$.
\end{enumerate}
\end{tcolorbox}
~\\
\textbf{Network assisted self-testing in the physical experiment.}
The general aim of self-testing is to show, given that the correlations of the physical experiment are the same as the one of the reference experiment, that one can extract the reference state $\ket{\psi_N}$ from the physical state $\ket\Psi$.
Several nonequivalent definitions can be introduced. In the following, we concentrate on network-assisted self-testing, which we define as:

\begin{definition}[Network-assisted Self-testing]\label{def:NetSelfTest2}
A state $\ket{\psi_N}$ can be self-tested in a network of $M\geq N$ parties if there exist quantum correlations $P(a_1...a_M|x_1...x_M)$ such that, for any $M$-partite quantum realisation of them, there exists a CPTP map $\Phi$ that extract the state $\ket{\psi_N}$ from $\ket\Psi$, up to a globally heralded conjugation: 
\begin{align}\label{eq:extractedstatetwo}
\Phi(\Psi)=\ket{\psi^\alpha_N}=\sqrt{\alpha}\ket{\psi_N}\ket{0}^{\otimes N}+\sqrt{1-\alpha}\ket{\psi_N^*}\ket{1}^{\otimes N}
\end{align}
for some unknown $\alpha\in[0,1]$. 
\end{definition}

If such CPTP map $\Phi$ exists, we say that we can extract the reference state from the physical one. 
Note that other definition of self-testing asks that $M=N$, and that the map $\Phi$ is an isometry. As discussed in the main text, since self-testing statements are built only from the observed correlations, one cannot hope to have a description of the extracted state with respect to the reference state finer than what encapsulated by eq.~\eqref{eq:extractedstatetwo}.

Note also that, from the point of view of the correlations they generate among the $N$ parties, any of the states in eq.~\eqref{eq:extractedstatetwo}, $\ket{\psi^\alpha_N}$ is at least as powerful as the reference state $\ket{\psi_N}$. In fact, any correlations obtained by performing some local measurements, for simplicity just denoted by $M_j$, on $\ket{\psi_N}$, can also be obtained by, first, each party measuring the extra qubit in $\ket{\psi^\alpha_N}$, and then performing measurement $M_j$ (or its complex conjugate $M_j^*$) when the result of the qubit measurement is 0 (or 1).


The first main result of our work is to provide a reference experiment attaining the network assisted self-testing of any $N$-partite pure state $\ket{\psi_N}$.

\begin{figure}
    \centering
    \includegraphics[scale=1.0]{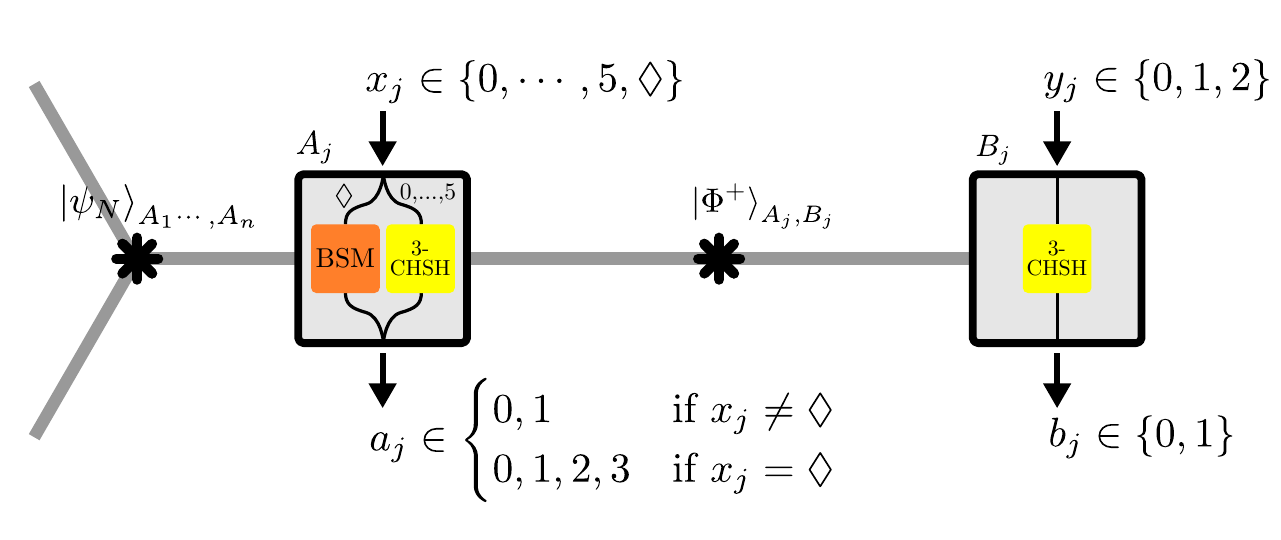}
    \caption{The reference experiment. Every main party shares with the corresponding auxiliary party a maximally entangled pair of qubits. The measurements performed by auxiliary parties can be self-tested. This effectively allows to perform tomography of the state that is teleported to the auxiliary parties when all main parties apply Bell state measurements. }
    \label{fig:ReferenceExperimentNetworkAssisted}
\end{figure}

\subsection{The proof}\label{app:ProofStrategyNetworkAssisted}


This section gives an informal proof of our construction of network assisted self-testing. The proof is rigorously developed in appendices~\ref{app:Step1NetworkAssisted},~\ref{app:Step2},~\ref{app:Step3NetworkAssisted}. 

Note first that in the reference experiment, when all main parties perform the Bell state measurement on inputs $x_j=\diamond$, they teleport the state $\ket{\psi_N}$ to the auxiliary parties, via the $N$ maximally entangled pairs of qubits $\ket{\phi_+}$ they individually share with each auxiliary party.
This teleportation process lies in the heart of our proof, which ultimately consists in self-testing the state teleported to the auxiliary parties.

Our proof consist in three steps.
We first prove Lemma~\ref{lemma:Lemma13CHSHinformal}, which allows one to self tests the tomographic measurement devices of the auxiliary parties, thanks to the maximal violation of the 3-CHSH inequality between each main/auxiliary party pair $A_j, B_j$ (only using the property eq.~\eqref{eq:3CHSHViolationNetworkAssisted} of the correlations).
Importantly, our characterisation of the tomographic measurement devices can only be done up to a complex conjugation, mapping $\sigma_y$ to $\sigma_y^*=-\sigma_y$. We solve this technical difficulty in a second step, with Lemma~\ref{lemma:Lemma2informal}, which characterises how well such an up-to-conjugation tomographic device can specify a state.
At last, a third step proves our main result through Theorem~\ref{thm:TheoremStep3informal}, which uses these characterised tomographic measures to self test the state that is teleported to the auxiliary parties when the main parties perform the teleporting Bell state measurement on input $x_j=\diamond$ (only using the property eq.~\eqref{eq:TomographyTeleportedStateNetworkAssisted} of the correlations).

\subsubsection{Step 1: Self testing the auxiliary parties tomographic devices.}
 
In the first step of our proof, we develop a subroutine to self-test the tomographically complete set of measurements performed by the auxiliary parties $B_j$ 
. This subroutine is actually the self-test of a maximally entangled pair of qubits, shared between each main and its corresponding auxiliary party. 

It is known that the maximal violation of the CHSH inequality is enough to self-test the maximally entangled pair of qubits and a pair of anticommuting observables \cite{summerswerner}. Superimposing three versions of the CHSH inequality into a 3-CHSH inequality allows us to self-test three mutually anticommuting observables \cite{APVW16,Bowles_2018PRL}. 
Such three observables can always be mapped to three Pauli operators, with a caveat that $\sigma_y$ is self-tested up to complex conjugation. More precisely, there exist two extra qubit spaces, one for each party, containing two correlated qubits. Depending on her extra qubit value, a party apply either $\sigma_y$ or $\sigma_y^*=-\sigma_y$. As these two extra qubits are correlated, this eventual conjugation of $\sigma_y$ is performed by the two parties in a correlated way. 

Hence, the observed maximal violation of the 3-CHSH inequality in the physical experiment implies that the auxiliary party measures three Pauli observables on their inputs (up to complex conjugation), independently of the input of the main parties. Note that any other construction self-testing the two-qubit maximally entangled state and the three Pauli measurements, up to complex conjugation, such as that in~\cite{McKagueMosca}, can be used for this part.
In Section~\ref{app:Step1NetworkAssisted}, we formalise this first step in the following lemma, which we write in its informal version:
\begin{tcolorbox}
\begin{lemma}[informal]\label{lemma:Lemma13CHSHinformal}
Let the physical state $\ket{\Psi}$ and dichotomic observables $\{\rA_j\}_{j=0}^5$ and $\{\rB_j\}_{j=0}^2$ reproduce the maximal violation of all the 3-$\CHSH$ inequalities between the $N$ pairs of main and auxiliary parties. 
Then, up to local ancillary systems and a local isometry, 
$$\ket{\Psi} \approx \ket{\mathrm{aux}}\bigotimes_{j=0}^{N-1}\ket{\phi^+}_{A_j'B_{j}'}.$$
Moreover, the physical experiment auxiliary party operators $B^{(j)}_0, B^{(j)}_1$ acts as the reference operators $\sigma^{(j)}_z, \sigma^{(j)}_x$ over the state, whereas $B^{(j)}_2$ acts as the reference operators $\sigma^{(j)}_y$ over the state, up to a $\sigma_z$ correction on the auxiliary system. E.g.: 
\begin{align*}
  B^{(0)}_0\cdot \ket{\Psi} &\approx \ket{\mathrm{aux}}\otimes (\sigma^{(0)}_z\cdot\ket{\phi^+})\bigotimes_{j=1}^{N-1}\ket{\phi^+}  \\
  B^{(0)}_2\cdot \ket{\Psi} &\approx \sigma_z^{(0)}\ket{\mathrm{aux}}\otimes (\sigma^{(0)}_y\cdot\ket{\phi^+})\bigotimes_{j=1}^{N-1}\ket{\phi^+}
\end{align*}
\end{lemma}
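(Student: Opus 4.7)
The plan is to reduce the statement to the single-pair case, invoke a known self-testing result for three mutually anticommuting observables, and then iterate the argument across the $N$ independent main/auxiliary pairs while carefully tracking the $\sigma_y$ complex-conjugation ambiguity via a shared ancilla.

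First I would use the fact that each constituent CHSH expression appearing in Eq.~\eqref{eq:3CHSHViolationNetworkAssisted} is upper bounded by Tsirelson's value $2\sqrt{2}$, so the sum reaching $6\sqrt{2}$ forces each of the three CHSH values, separately and for every $j$, to be maximally violated. This reduces the input structure to three CHSH games between $A_j$ and $B_j$: one using $(\rA^{(j)}_0,\rA^{(j)}_1;\rB^{(j)}_0,\rB^{(j)}_1)$, one using $(\rA^{(j)}_2,\rA^{(j)}_3;\rB^{(j)}_0,\rB^{(j)}_2)$, and one using $(\rA^{(j)}_4,\rA^{(j)}_5;\rB^{(j)}_1,\rB^{(j)}_2)$. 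By the standard SOS decomposition of the CHSH Bell operator, each maximal violation forces, on the reduced state seen by the pair $A_jB_j$, the two observables of $B_j$ involved in that CHSH to anticommute. Combining the three CHSH games shows that $\rB^{(j)}_0, \rB^{(j)}_1, \rB^{(j)}_2$ are pairwise anticommuting dichotomic observables, and similarly the three pairs of $A_j$-observables can be rotated to anticommuting sums/differences of Pauli operators.

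Next I would invoke the characterisation (as in~\cite{APVW16,Bowles_2018PRL,McKagueMosca}) that three pairwise anticommuting $\pm 1$ observables on a Hilbert space are, up to a local isometry and a controlled complex conjugation, equivalent to $\sigma_z,\sigma_x,\sigma_y$: the subalgebra they generate is isomorphic to either the Pauli algebra or its complex conjugate (the two irreducible representations of the relevant Clifford algebra), and an arbitrary representation decomposes as a direct sum of these two, indexed by a classical flag qubit. Writing the isometry $\Phi_j$ that SWAPs the state $\ket{\Psi}$ onto a fresh pair $A'_jB'_j$, the maximal CHSH violations force
\begin{equation*}
\Phi_j(\ket{\Psi}) = \ket{\mathrm{aux}_j} \otimes \ket{\phi^+}_{A'_jB'_j},
\end{equation*}
with $\rB^{(j)}_0,\rB^{(j)}_1$ acting as $\sigma_z,\sigma_x$ on $B'_j$, and $\rB^{(j)}_2$ acting as $\sigma_y$ on $B'_j$ dressed with a $\sigma_z$ on a flag qubit inside $\ket{\mathrm{aux}_j}$ that selects the conjugated branch.

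Finally I would lift this pair-wise statement to the full tensor-product conclusion. The observables of different pairs $(A_j,B_j)$ act on disjoint subsystems, hence commute, so the local isometry $\Phi = \bigotimes_{j=0}^{N-1}\Phi_j$ is well defined and the pair-wise self-testing can be applied simultaneously. The action of each $\rB^{(j)}_k$ only affects its own Bell pair (and, for $k=2$, its own flag qubit inside a combined ancilla $\ket{\mathrm{aux}}$ that aggregates all the $\ket{\mathrm{aux}_j}$), yielding exactly the displayed identities. The main obstacle I expect is the last step of the single-pair argument: rigorously showing that the two inequivalent representations of the three anticommuting observables can be globally stored in a single ancillary qubit per party in a way that is compatible across all three CHSH games and across all $j$. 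Concretely, one must verify that the flag qubit introduced to handle the $\sigma_y$ ambiguity is the \emph{same} qubit for all three CHSH sub-tests (so that $\rB^{(j)}_0$ and $\rB^{(j)}_1$ genuinely act as $\sigma_z,\sigma_x$ without a stray $\sigma_z$ on the flag), which is ensured by choosing the isometry so that the Jordan-block structure induced by the anticommuting triple is simultaneously diagonalised; this is where the McKague--Yang--Scarani style proof must be adapted and is the technical crux of the lemma.
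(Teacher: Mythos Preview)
Your proposal is correct and follows essentially the same route as the paper. The paper's proof in Appendix~\ref{app:Step1NetworkAssisted} likewise separates the $6\sqrt{2}$ value into three individually maximal CHSH violations, uses the SOS decomposition~\eqref{Mchshsos} to derive the pairwise anticommutation of $\rB^{(j)}_0,\rB^{(j)}_1,\rB^{(j)}_2$, builds regularised unitary operators $\Z_\rA^{(j)},\X_\rA^{(j)},\Y_\rA^{(j)}$, applies the extended SWAP isometry (Fig.~\ref{fig:Mpkb}) with two ancilla qubits per party, and then composes across $j$ using commutation of operators acting on different Hilbert spaces. The only difference is that the paper constructs the isometry explicitly rather than invoking a black-box representation-theoretic lemma; in particular, your worry about the flag qubit being consistent across the three CHSH sub-tests is resolved there by taking the flag projector to be $(\idd\pm i\Y_\rA\X_\rA)/2$, which by construction commutes with $\Z$ and $\X$ and hence leaves their action undressed.
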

\end{tcolorbox}
\begin{proof}
For a technical version of this Lemma, and a proof, see appendix~\ref{app:Step1NetworkAssisted}.
\end{proof}

\subsubsection{Step 2: Tomography up to complex conjugation.}
The previous step allows us to self test the tomographically complete set of Pauli measurements performed by the auxiliary parties, up to a complex conjugation. We aim to use these measurements to completely characterise the state shared between the auxiliary parties when the main parties perform a teleportation Bell state measurement, on inputs $\diamond$. 
However, this requires to understand how much such a tomography-up-to-conjugation device can specify a state. This is the aim of the second step of our proof. In the following (informal) proposition, we prove that when a state is measured with tomography-up-to-conjugation devices as given in the conclusion of Lemma~\ref{lemma:Lemma13CHSHinformal}, it is characterised up to a global conjugation:

\begin{tcolorbox}
\begin{lemma}[informal]\label{lemma:Lemma2informal}
Consider a tomographically complete set of local measurements ${M'}_{a_j|x_j}^{(j)}$. Consider a second set of local measurements $M_{a_j|x_j}^{(j)}$, which are the same as ${M'}_{a_j|x_j}^{(j)}$ up to conjugation, as in Lemma~\ref{lemma:Lemma13CHSHinformal}.

Assume that the targeted state $\ket{\psi_N}$ is genuinely multipartite pure entangled. Let $\rho$ be an  unknown state such that when $\rho$ is measured with $M_{a_j|x_j}^{(j)}$, one observes the same correlations as when $\ket{\psi_N}$ is measured with ${M'}_{a_j|x_j}^{(j)}$.

Then, the purification of $\rho$ must be  
\begin{equation}
    \ket{\tilde{\Psi}} = \ket{\psi_N}\tp\ket{\xi_+} + \ket{\psi_N^*}\tp\ket{\xi_-},
\end{equation}
where $\ket{\psi_N^*}$ is the complex conjugate of $\ket{\psi_N}$, $\braket{\xi_+}{\xi_-} = 0$ and $\|\ket{\xi_+}\| + \|\ket{\xi_-}\| = 1$.
\end{lemma}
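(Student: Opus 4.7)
The plan is to leverage the structure provided by Lemma~\ref{lemma:Lemma13CHSHinformal}: after the local isometry it supplies, the auxiliary observables $B^{(j)}_0, B^{(j)}_1$ act as $\sigma_z, \sigma_x$ on the extracted qubit and trivially on the flag, while $B^{(j)}_2$ acts as $\sigma_y$ on the extracted qubit dressed with a $\sigma_z$ correction on the corresponding flag qubit. I would decompose $\rho$ in the flag's computational basis as $\rho = \sum_{f,f' \in \{0,1\}^N} \rho_{ff'} \otimes \ketbra{f}{f'}_{\mathrm{flag}}$, and observe that since the correlations probe only $Z$-type operators on the flag, they constrain only the diagonal blocks $\rho_{ff}$. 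Applying the statistical equality to every Pauli string $O_m = \bigotimes_j \sigma_{m_j}^{(j)}$, using $\sigma_y^T = -\sigma_y$ together with $\tr[(\rho_{ff})^{T_f} O_m] = (-1)^{|f\cap S(m)|}\tr[\rho_{ff} O_m]$ where $S(m) = \{j : m_j = 2\}$, and invoking the informational completeness of the Pauli basis, I obtain the central operator identity
\begin{equation}\label{plan:key}
\sum_{f \in \{0,1\}^N} (\rho_{ff})^{T_f} = \ketbra{\psi_N}{\psi_N},
\end{equation}
where $T_f$ denotes partial transpose on the parties $\{j : f_j = 1\}$.

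The core step is then to show that only $f=\mathbf{0}$ and $f=\mathbf{1}$ contribute to (\ref{plan:key}). The summand $\rho_{\mathbf{0}\mathbf{0}}$ is positive semidefinite, and $(\rho_{\mathbf{1}\mathbf{1}})^{T_{\mathbf{1}}} = \rho_{\mathbf{1}\mathbf{1}}^T$ is also positive semidefinite (full transpose preserves positivity), whereas for every mixed $f$ the partial transpose $(\rho_{ff})^{T_f}$ is generally indefinite. Because $\ket{\psi_N}$ is genuinely multipartite entangled, $\ketbra{\psi_N}{\psi_N}^{T_g}$ has strictly negative eigenvalues across every non-trivial bipartition $g$, so applying $T_g$ to (\ref{plan:key}) for each $g\in\{0,1\}^N$ produces a coupled system of $2^N$ partial-transpose identities whose joint solution under the positivity constraints $\rho_{ff}\geq 0$ forces $\rho_{ff}=0$ for every mixed $f$. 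Substituting back into (\ref{plan:key}) gives $\rho_{\mathbf{0}\mathbf{0}} + \rho_{\mathbf{1}\mathbf{1}}^T = \ketbra{\psi_N}{\psi_N}$, a sum of two positive semidefinite operators equal to a rank-one projector; this forces $\rho_{\mathbf{0}\mathbf{0}} = p_+ \ketbra{\psi_N}{\psi_N}$ and $\rho_{\mathbf{1}\mathbf{1}} = p_- \ketbra{\psi_N^*}{\psi_N^*}$ with $p_\pm \geq 0$ and $p_+ + p_- = 1$.

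To conclude, I would use positivity of the full $\rho$ on its $2\times 2$ flag sub-blocks: the vanishing of $\rho_{ff}$ for mixed $f$ implies that every off-diagonal block touching a mixed flag also vanishes, while the remaining block $\rho_{\mathbf{0}\mathbf{1}}$ must be supported on $\ket{\psi_N}$ on the left and $\ket{\psi_N^*}$ on the right. The resulting $\rho$ has rank at most two on the extracted subsystem, so purifying it and absorbing the flag kets $\ket{\mathbf{0}}, \ket{\mathbf{1}}$ together with the purification ancilla into vectors $\ket{\xi_+}, \ket{\xi_-}$ yields $\ket{\tilde\Psi} = \ket{\psi_N}\otimes\ket{\xi_+} + \ket{\psi_N^*}\otimes\ket{\xi_-}$, with $\braket{\xi_+}{\xi_-}=0$ inherited from $\braket{\mathbf{0}}{\mathbf{1}}=0$. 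The main obstacle is the elimination of mixed-flag sectors: genuine multipartite entanglement of $\ket{\psi_N}$ is essential there, since for product target states all partial transposes stay positive and one cannot rule out intermediate flag contributions by positivity alone. Making this step fully rigorous will likely require a careful combinatorial analysis of the linked partial-transpose identities, combined with entanglement-based positivity witnesses distinguishing the extremal flags $\mathbf{0}, \mathbf{1}$ from the mixed ones.
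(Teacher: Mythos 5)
Your setup is sound and matches the paper's: the decomposition into flag sectors, the transpose trick converting conjugated measurements into partial transposes, and the resulting key identity $\sum_f (\rho_{ff})^{T_f} = \proj{\psi_N}$ are exactly the paper's equation $\proj{\tilde\psi}=\sum_{\hat{\ab}}\alpha_{\hat{\ab}}\rho_{\hat{\ab}}^{T_{\hat{\ab}}}$. You also correctly identify that the crux is eliminating the mixed-flag sectors and that genuine multipartite entanglement of $\ket{\psi_N}$ must enter there. However, your argument for that elimination is a genuine gap, and you concede as much at the end: asserting that "a coupled system of $2^N$ partial-transpose identities" under positivity "forces $\rho_{ff}=0$" is not a proof, and it is not clear the route works as stated. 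Applying $T_g$ to the key identity turns the left-hand side into $\sum_f (\rho_{ff})^{T_{f\triangle g}}$, where only the $f=g$ term is manifestly positive and the right-hand side is indefinite; positivity of the individual $\rho_{ff}$ does not obviously extract a contradiction from this family of equations.

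The paper closes this gap with a specific spectral mechanism that your proposal is missing. First, for any density operator $\sigma$ and any bipartition, $\lambda_{\max}(\sigma^{T_B})\le 1$, with equality only if $\sigma$ is separable across that cut (this follows from the Schmidt-spectrum description of $\proj{\phi}^{T_B}$ for pure $\phi$). Second, taking the overlap of the key identity with $\ket{\psi_N}$ gives $\sum_f \alpha_f\beta_f=1$ with $\beta_f=\bra{\psi_N}\rho_f^{T_f}\ket{\psi_N}\le\lambda_{\max}(\rho_f^{T_f})\le 1$ and $\sum_f\alpha_f=1$, which forces $\beta_f=1$ for every contributing $f$. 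Equality then forces each $\rho_f$ to be separable across the cut defined by $f$, so $\rho_f^{T_f}$ is itself a density operator with unit overlap with $\ket{\psi_N}$, hence equals $\proj{\psi_N}$; but then $\rho_f=\proj{\psi_N}^{T_f}$, which is not positive for any mixed $f$ because $\ket{\psi_N}$ is entangled across every bipartition. This contradiction kills the mixed sectors. I recommend you replace your "coupled system" step with this saturation-plus-separability argument; the rest of your proposal (the identification of $\rho_{\mathbf{0}\mathbf{0}}$ and $\rho_{\mathbf{1}\mathbf{1}}$ with $\proj{\psi_N}$ and $\proj{\psi_N^*}$, and the purification yielding $\ket{\psi_N}\tp\ket{\xi_+}+\ket{\psi_N^*}\tp\ket{\xi_-}$ with $\braket{\xi_+}{\xi_-}=0$) then goes through essentially as in the paper.
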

\end{tcolorbox}
\begin{proof}
For a technical version of this Lemma, and a proof, see appendix~\ref{app:Step2}.
\end{proof}

Note that to apply this Lemma, we need to assume that the reference state $\ket{\psi_N}$ is genuinely multipartite entangled, i.e. non-product. In case $\ket{\psi_N}$ is separable, one can easily adapt this proposition by applying our construction to  each of the constituents of the non-genuinely multipartite entangled state $\ket{\psi_N}$, obtaining independent potential complex conjugations for each of them.

\subsubsection{Step 3: Network assisted self testing.}
We now have all the tools to perform network assisted self testing in the physical experiment. The last step of our proof focus on the case when all main parties received the input $x_j=\diamond$. In that case, they perform a (uncharacterized) Bell state measurement, which in practice teleportS the (uncharacterized) reference state $\ket{\psi_N}$ to the auxiliary parties.
Then, the auxiliary parties perform a tomographic measurement, characterised up to complex conjugation thanks to Step~1 above. 
Step~2 of our proof allows us to show that the teleported state shared by the auxiliary parties, after the main parties performed their measurement on inputs $x_j=\diamond$, is the state $\ket{\psi_N}$, up to a global conjugation. 

This provides the following informal formulation of our first theorem, which finishes our proof:

\begin{tcolorbox}
\begin{thm}[informal]\label{thm:TheoremStep3informal}
Let the state $\ket{\Psi}$ and dichotomic observables $\{\rA_j\}_{j=0}^5$ and $\{\rB_j\}_{j=0}^2$ reproduce the correlations of the reference experiment. Then, there exists a CPTP map $\Phi$ such that, up to ancillas and junk state:
$$\Phi(\ket{\Psi})=\ket{\psi_N}\otimes\ket{0\cdots 0} + \ket{\psi_N^*}\otimes\ket{1\cdots 1} $$
\end{thm}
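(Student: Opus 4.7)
The plan is to combine Lemma~\ref{lemma:Lemma13CHSHinformal} and Lemma~\ref{lemma:Lemma2informal} through the teleportation mechanism hidden inside the Bell-state measurement that each main party performs on input $x_j=\di$. Schematically: Lemma~\ref{lemma:Lemma13CHSHinformal}, applied simultaneously on all $N$ main/auxiliary pairs, pins down the geometry of every pair to a maximally entangled qubit pair plus a shared ``aux'' system, and characterises the auxiliary parties' tomographic observables $B_k^{(j)}$ up to a globally-controlled conjugation of $\sigma_y$. The main parties' Bell measurement then teleports whatever ``source'' state is hidden in their subsystems onto the auxiliary parties' qubits. Finally Lemma~\ref{lemma:Lemma2informal} turns the resulting tomographic statistics into the statement that the teleported state equals $\ket{\psi_N}$ up to a single, globally-correlated conjugation.

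\textbf{Constructing the channel $\Phi$.} I first apply the local isometries of Lemma~\ref{lemma:Lemma13CHSHinformal} at each pair $(A_j,B_j)$, producing a post-isometry state of the form $\ket{\mathrm{aux}}_{A''B''}\otimes\bigotimes_{j}\ket{\phi^+}_{A_j'B_j'}$ in which $B_0^{(j)},B_1^{(j)}$ act as $\sigma_z,\sigma_x$ on $B_j'$, while $B_2^{(j)}$ acts as $\sigma_y$ on $B_j'$ correlated with a $\sigma_z$ on a dedicated conjugation-control qubit in $B''$. Next, each main party performs her Bell measurement $\M_{a_j|\di}^{(j)}$ on her full subsystem (which, after isometry, involves $A_j'$ and part of $A''$), broadcasts the two-bit outcome $a_j$, and the auxiliary party $B_j$ applies the Pauli correction $U_{a_j}$ on $B_j'$. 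These three sub-steps together define $\Phi$.

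\textbf{Verifying the extraction.} I then feed the tomography constraint~\eqref{eq:TomographyTeleportedStateNetworkAssisted} into the post-isometry picture. Using the characterisation of the $B_k^{(j)}$ from Lemma~\ref{lemma:Lemma13CHSHinformal} and the fact that the corrections $U_{a_j}$ at the auxiliary side undo the $a_j$-dependent twist on the right-hand side, the post-correction state $\tilde{\rho}$ on $\bigotimes_j B_j'$ (purified by the aux system $B''$) satisfies
\begin{equation}
\tr\!\Bigl(\tilde{\rho}\,\bigotimes_{j}\tilde{\sigma}_{k_j}^{(j)}\Bigr) \;=\; \bra{\psi_N}\bigotimes_{j}\sigma_{k_j}^{(j)}\ket{\psi_N}
\end{equation}
for every $(k_1,\dots,k_N)\in\{0,1,2\}^N$, where $\tilde{\sigma}_{k}^{(j)}$ denotes the up-to-conjugation Pauli operator provided by Lemma~\ref{lemma:Lemma13CHSHinformal}. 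This is precisely the hypothesis of Lemma~\ref{lemma:Lemma2informal}, which yields a purification of $\tilde{\rho}$ of the form $\ket{\psi_N}\ket{\xi_+}+\ket{\psi_N^*}\ket{\xi_-}$ with $\braket{\xi_+}{\xi_-}=0$. Since, by Lemma~\ref{lemma:Lemma13CHSHinformal}, the conjugation-control qubits in $\ket{\mathrm{aux}}$ are coupled across parties only through a single shared degree of freedom, $\ket{\xi_\pm}$ sits inside the two-dimensional subspace spanned by the perfectly correlated flag strings $\ket{0}^{\otimes N}$ and $\ket{1}^{\otimes N}$, and a last relabelling unitary $\ket{\xi_+}\mapsto\ket{0}^{\otimes N}$, $\ket{\xi_-}\mapsto\ket{1}^{\otimes N}$ brings the output into the stated form.

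\textbf{Expected main obstacle.} The delicate step is the middle one: Lemma~\ref{lemma:Lemma13CHSHinformal} characterises only the three auxiliary-side observables $B_k^{(j)}$, \emph{not} the Bell-measurement operators $\M_{a_j|\di}^{(j)}$, whose action on the post-isometry state is therefore essentially unknown. The whole ``teleportation'' interpretation must hence be extracted from the correlation constraint~\eqref{eq:TomographyTeleportedStateNetworkAssisted} itself rather than from a direct description of the Bell measurement. I expect this to force a careful bookkeeping of the correlated conjugation controls in $\ket{\mathrm{aux}}$ (one per party) and, crucially, an invocation of the genuine multipartite entanglement of $\ket{\psi_N}$ — the assumption under which Lemma~\ref{lemma:Lemma2informal} applies — to collapse what a priori looks like $N$ independent per-party conjugation freedoms into a single shared one, ultimately producing the correlated flags $\ket{0}^{\otimes N}$ and $\ket{1}^{\otimes N}$ of the target state.
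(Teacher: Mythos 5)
Your proposal follows essentially the same route as the paper: apply the 3-CHSH isometries to characterise the auxiliary tomographic measurements up to conjugation, define the effective post-Bell-measurement state on the auxiliary registers through the observed correlations (since the Bell measurement itself is uncharacterised), invoke Lemma~2 to pin that state down to $\ket{\psi_N}\ket{0\cdots 0}+\ket{\psi_N^*}\ket{1\cdots 1}$, and assemble $\Phi$ from the Bell measurement, the corrections $U_{a_j}$ and the isometry $V_\rB$. One small correction: it is not Lemma~1 that couples the $N$ conjugation flags into a single shared degree of freedom (in the network-assisted case $\ket{\mathrm{aux}}$ a priori carries $2^N$ independent flag strings, correlated only between $A_j''$ and $B_j''$ for the same $j$); the collapse to the two fully correlated strings is precisely what Lemma~2 delivers via the genuine multipartite entanglement of $\ket{\psi_N}$ and the PPT argument --- as you yourself correctly note in your final paragraph.
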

\end{tcolorbox}
\begin{proof}
For a technical version of this theorem, and a proof, see appendix~\ref{app:Step3NetworkAssisted}.
\end{proof}

\section{Fully network-assisted self-testing}~\label{app:GlobalStrategyFullyNetworkAssisted}

In this appendix, we explain how our result adapts when one consider the fully network assisted framework, in which we make a supplementary assumption of source independence in the physical experiment. 
Now, we will again start from a reference experiment using state $\ket{\psi_N}$ and extra resources (in our case, the same as in our network assisted protocol), and we aim to show the existence of a CPTP maps $\Phi$ which extracts $\ket{\psi_N}$. 
The existence of $\Phi$ will be proven under the assumptions that the physical experiment can be described by quantum theory with a state created \emph{by independent sources as in the reference experiment}, that no signaling is enforced, and that the observed correlations are the same as in the reference experiment.


\subsection{Notations, definitions and setup}\label{app:NotationsFullyNetworkAssisted}
~\\
\textbf{The reference experiment for fully network assisted self-testing.}
In the case of fully network assisted self-testing, our reference experiment is almost identical to the one for network assisted self-testing. The only difference is that we consider a unique auxiliary party, which takes the role of all the auxiliaries parties in the network assisted self-testing protocol, and that this auxiliary party has two additional inputs, labeled by  $\lozenge, \blacklozenge$, on which he performs a reference measurement consisting of parallel Bell state measurements.
Hence, the reference experiment involves $N+1$ parties, decomposed into $N$ \emph{main parties} labeled $A_1, ..., A_N$ and a unique \emph{auxiliary party} labeled $B$, the reference state $\ket{\psi_N}$, and $N$ copies of the $\ket{\phi^+}=\frac{1}{\sqrt{2}}(\ket{00}+\ket{11})$ state. The $N$ main parties are sharing the reference state $\ket{\psi_N}$ together, and each main party $A_j$ shares a state $\ket{\phi^+}_{A_jB}$ with the auxiliary party $B$.

The main parties $A_j$ have the same inputs and outputs as in the network assisted self-testing reference experiments, and perform the same measurements. 
The auxiliary party $B$ receives an input which is either a vector $\yb= (y_1, \cdots, y_N)$, where $y_j\in\{0,1,2\}$, or  $\lozenge$, or $\blacklozenge$: in total, the auxiliary party has $3^N + 2$ different measurement choices.

On input $\yb = (y_1, \cdots, y_N)$, $B$ outputs a vector $\bb = (b_1, \cdots, b_N)$ where $b_j$ is obtained by measuring his share of $\ket{\phi^+}_{A_jB}$ with the Pauli operator $\sigma_{y_j}$.
On input $\lozenge$, $B$ performs a reference measurement consisting of $\lfloor N/2 \rfloor$ independent Bell state measurements performed on system pairs $(1,2)$, $(3,4)$, $\cdots$, $(2\lfloor N/2 \rfloor-1,~2\lfloor N/2 \rfloor)$, outputting a vector of pairs of bits  $\bb = ((b_1,b_2), (b_{3}, b_{4}), \cdots, (b_{2\lfloor N/2 \rfloor-1}, b_{2\lfloor N/2 \rfloor}))$.
On input $\blacklozenge$, $B$ performs a reference measurement consisting of $\lfloor N/2 \rfloor$ independent Bell state measurements performed on system pairs $(N,1)$, $(2,3)$, $\cdots$, $(2\lfloor N/2 \rfloor-2,~2\lfloor N/2 \rfloor-1)$, outputting a vector of pairs of bits  $\bb = ((b_N,b_1), (b_{2}, b_{3}), \cdots, (b_{2\lfloor N/2 \rfloor-2}, b_{2\lfloor N/2 \rfloor-1}))$. 

%
~\\
\\
\textbf{The physical experiment for fully network assisted self-testing.}
The physical experiment consists in the experiment actually performed in the laboratory, seen by an observer who should be convinced that there exist a CPTP map which can extract the state $\ket{\psi_N}$ from the experimental devices.  
This physical experiment provides the same correlations as the reference experiment, detailed below. As above, the observer does not assume anything about the internal functioning of the physical devices in the physical experiment, but does assume that the states are created according to the network causal structure. She aims to \emph{deduce} the existence of a CPTP map which can extract $\ket{\psi_N}$ from the devices in the physical experiment that give rise to the observed correlations.

With a slight abuse of notation, we again denote the full state all main parties and the auxiliary party share as $\ket{\Psi}^{A_1,\cdots,A_N,B} \equiv \ket{\Psi}$. We use the same notations for $A_j$ measurement operators as in the physical experiment of the network assisted case. We write the measurement operator of $B$ on input $\yb$, for output $\bb$, as projector $N'_{\bb|\yb}$. We also introduce the observable $\rB_{y_j,\yb}^{(j)} = \sum_{\bb}(-1)^{b_j}\N_{\bb|\yb}$.
On input $y\in\{\lozenge, \blacklozenge\}$, for output $b$, we write $B$ measurement operator as projector $N'_{b|y}$.

~\\
\\
\textbf{Reference correlations in the fully network assisted self-testing experiment.}
We assume that the correlations of the physical experiment are the same as in the reference experiment, that is, they are:
\begin{tcolorbox}[title=Reference correlations in the fully network-assisted scenario,title filled]
\begin{enumerate}
    \item In between every main party $A_j$ and the auxiliary party $B$, the following 3-CHSH inequality is maximally violated:
   \begin{align}\label{eq:3CHSHViolationFullyNetworkAssisted}
       \forall \emph{$j$}, \quad& \forall\yb,\yb',\yb''\quad\emph{s.t.}\quad y_j =0,y'_j = 1,y''_j=2, \nonumber\\
       &\CHSH_j(0,1;\yb,\yb')+\CHSH_j(2,3;\yb,\yb'')+\CHSH_j(4,5;\yb',\yb'') = 6\sqrt{2},
    \end{align}
     where $\CHSH_j(x_1,x_2;\yb,\yb')=\langle A^{(j)}_{x_1}B^{(j)}_{y_j,\yb}\rangle +\langle A^{(j)}_{x_1}B^{(j)}_{y_j',\yb'}\rangle+\langle A^{(j)}_{x_2}B^{(j)}_{y_j,\yb}\rangle-\langle A^{(j)}_{x_2}B^{(j)}_{y_j',\yb'}\rangle$.
    \item When the auxiliary party performs measurements denoted with $\lozenge$ or $\blacklozenge$, the state of the corresponding auxiliary parties is projected to one of the four Bell pairs, which imposes perfect (anti)correlation between measurement observables corresponding to the Pauli observables:
    \begin{align} \label{eq:AlignFullyNetworkAssisted}
    &\bra{\Psi} \frac{\rA_{0}^{2j-1}+\rA_{1}^{2j-1}}{\sqrt{2}}\otimes\frac{\rA_{0}^{2j}+\rA_{1}^{2j}}{\sqrt{2}}\otimes\N_{b_{2j-1},b_{2j}|\lozenge}^{(2j-1, 2j)}\ket{\Psi} = \frac{1}{4}(-1)^{b_{2j}},\\
&\bra{\Psi} \frac{\rA_{0}^{2j-1}-\rA_{1}^{2j-1}}{\sqrt{2}}\otimes\frac{\rA_{0}^{2j}-\rA_{1}^{2j}}{\sqrt{2}}\otimes\N_{b_{2j-1},b_{2j}|\lozenge}^{(2j-1, 2j)}\ket{\Psi} = \frac{1}{4}(-1)^{b_{2j-1}},\\
&\bra{\Psi} \frac{\rA_{2}^{2j-1}-\rA_{3}^{2j-1}}{\sqrt{2}}\otimes\frac{\rA_{2}^{2j}-\rA_{3}^{2j}}{\sqrt{2}}\otimes\N_{b_{2j-1},b_{2j}|\lozenge}^{(2j-1, 2j)}\ket{\Psi} = \frac{1}{4}(-1)^{b_{2j-1}\oplus b_{2j}\oplus 1},\qquad \forall j \in \{1,2,\cdots,\lfloor\frac{N}{2}\rfloor\} \\
&\bra{\Psi} \frac{\rA_{0}^{2j}+\rA_{1}^{2j}}{\sqrt{2}}\otimes\frac{\rA_{0}^{2j+1}+\rA_{1}^{2j+1}}{\sqrt{2}}\otimes\N_{b_{2j},b_{2j+1}|\blacklozenge}^{(2j, 2j+1)}\ket{\Psi} = \frac{1}{4}(-1)^{b_{2j+1}},\\
&\bra{\Psi} \frac{\rA_{0}^{2j}-\rA_{1}^{2j}}{\sqrt{2}}\otimes\frac{\rA_{0}^{2j+1}-\rA_{1}^{2j+1}}{\sqrt{2}}\otimes\N_{b_{2j},b_{2j+1}|\blacklozenge}^{(2j, 2j+1)}\ket{\Psi} = \frac{1}{4}(-1)^{b_{2j}},\\
&\bra{\Psi} \frac{\rA_{2}^{2j}-\rA_{3}^{2j}}{\sqrt{2}}\otimes\frac{\rA_{2}^{2j+1}-\rA_{3}^{2j+1}}{\sqrt{2}}\otimes\N_{b_{2j},b_{2j+1}|\blacklozenge}^{(2j, 2j+1)}\ket{\Psi} = \frac{1}{4}(-1)^{b_{2j}\oplus b_{2j+1}\oplus 1}
    \end{align}
    \item When all the main parties perform the measurement $\diamond$ and the auxiliary party does not measure $\lozenge$ or $\blacklozenge$, the statistics correspond to those of the ideal experiment obtained via the teleportation of the state $\ket{\psi_N}$ to the auxiliary party, up to a correction unitary indexed by $a_j\in\{00,01,10,11\}$, who performs a complete tomography of it with the Pauli measurements
      \begin{equation}\label{eq:TomographyTeleportedStateFullyNetworkAssisted}
        \forall k \in \{0,1,2\}, \bra{\Psi}\bigotimes_{j=1}^N\M_{a_j|\diamond}^{(j)}\otimes \rB_{k}^{j}\ket{\Psi} = \frac{1}{4^N}\bra{\psi_N}\bigotimes_{j=1}^N U_{a_j}^\dagger\sigma_{k}^{j}U_{a_j}\ket{\psi_N},
    \end{equation}
    where $\sigma_{0} \equiv \sigma_z$, $\sigma_{1} \equiv \sigma_x$, $\sigma_2 \equiv \sigma_y$, $U_{00} \equiv \idd$, $U_{01} \equiv \sigma_z$, $U_{10} \equiv \sigma_x$, $U_{11} \equiv \sigma_z\sigma_x$.
\end{enumerate}
\end{tcolorbox}
~\\
\textbf{Fully network assisted self-testing in the physical experiment.}
We introduce fully network-assisted self-testing as:

\begin{definition}[Fully network-assisted self-testing]\label{def:FNetSelfTest}
A state $\ket{\psi_N}$ is fully network-assisted self-tested in a network of $M\geq N$ parties if there exist quantum correlations $P(a_1...a_M|x_1...x_M)$ such that, for any $M$-partite quantum realisation of them in a physical experiment satisfying the network structure, there exists a CPTP map $\Phi$ that extract the state 
\begin{align}\label{extractedstatetwo2}
\Phi(\Psi)=\begin{cases} \ket{\psi_N}\\
 \mathrm{~~or}\\
    \ket{\psi_N^*}
    \end{cases}
\end{align}
\end{definition}

If such CPTP map $\Phi$ exists, we say that we can extract the reference state from the physical one. 
Note that thanks to the hypothesis of network structure for the physical experiment, we don't need to introduce a potential global conjugation.

\subsection{The proof}\label{app:ProofStrategyFullyNetworkAssisted}

Our proof of fully network-assisted self-testing is adapted from the one for network-assisted self testing, hence we simply explain the main differences for the two. 

In a first step, see appendix~\ref{app:Step1FullyNetworkAssisted}, we obtain a refined version of Lemma~\ref{lemma:Lemma13CHSHinformal}, in which we take advantage of the additional reference measurements $\lozenge,\blacklozenge$ of the auxiliary party to completely characterize the tomographic measurements, up to a conjugation which now can only be global, happening with probability one. Hence, on input $\yb$, we self test the fact that the auxiliary party is either performing the ideal $N$ independent Pauli measurements, or performing a global conjugation of these ideal $N$ independent Pauli measurements.

Then, the second step is not needed anymore, as the auxiliary party tomographic setup is ideal. It might be that all the operators are complex conjugated, which only corresponds to a convention on the definition of the square root of $-1$ being $+i$ or $-i$. Note that in the setup of Appendix~\ref{app:GlobalStrategyNetworkAssisted}, the situation is different because of the presence of the extra qubits, which somehow determine whether to use the ideal experiment or its complex conjugate.

The last step is straightforward: from completely characterised tomographic devices, we can completely determine the state that is teleported to the auxiliary party after the main parties performed their measurement on inputs $x_j=\diamond$, see~\ref{app:Step3FullyNetworkAssisted}.

\section{Self-testing a tensor product of \textit{N} maximally entangled pairs of qubits and a tomographically complete set of measurements}\label{app:Step1}

In this appendix we prove the procedure for self-testing a tensor product of $N$ maximally entangled pairs of qubits in two scenarios:
\begin{itemize}
    \item in section \ref{app:Step1NetworkAssisted} each pair is shared between two spatially separated and non-communicating parties, in our scenario they correspond to one of the auxiliary and main parties. There are no assumptions about the state shared among the $2N$ parties.
    \item in section \ref{app:Step1FullyNetworkAssisted} one auxiliary party shares with $N$ main parties $N$ maximally entangled states. The network structure is assumed: the $N$ physical states are assumed to be independent. 
\end{itemize}

\subsection{Network-assisted scenario}\label{app:Step1NetworkAssisted}

In this section, we prove the lemma related to the extraction of $N$ maximally entangled pairs of qubits based on $N$ maximal violations of the 3-CHSH inequalities, as given in eqs. \eqref{eq:3CHSHViolationNetworkAssisted}.
\setcounter{lemma}{0}
\begin{tcolorbox}
\begin{lemma}[Formal version, network-assisted case]\label{lemchshNA}
Let the state $\ket{\Psi} \in \mathcal{H}^{\mathbb{A}}\otimes\mathcal{H}^{\mathbb{B}}$ and dichotomic observables $\{\rA_j\}_{j=0}^5$ and $\{\rB_j\}_{j=0}^2$ reproduce the reference correlations given in eqs. \eqref{eq:3CHSHViolationNetworkAssisted}. 
Then there exists a local isometry $V=\bigotimes_{j=0}^{N-1} V_\rA^{(j)}\otimes V_\rB^{(j)}$ such that
\begin{align}\label{josjedna}
    V\left(\ket{\Psi}\otimes\ket{0...0}_{\mathbb{A}'\mathbb{A}''{\mathbb{B}}'{\mathbb{B}}''}\right) &= \ket{\mathrm{aux}}_{\mathbb{A}\mathbb{B}\mathbb{A}''{\mathbb{B}}''}\bigotimes_{j=0}^{N-1}\ket{\phi^+}_{A_j'B_{j}'}\\ \label{MSTNpairs}
    V\left(\bigotimes_{j=1}^N\W_\rC^{(j)}\ket{\Psi}\otimes\ket{0...0}_{\mathbb{A}'\mathbb{A}''{\mathbb{B}}'{\mathbb{B}}''}\right) &= {\sigma_z^{q(w)}}^{C_j''}\ket{\mathrm{aux}}_{\mathbb{A}\mathbb{B}\mathbb{A}''{\mathbb{B}}''}\bigotimes_{j=1}^{N}{\sigma_w}^{C_j'}\ket{\phi^+}_{A_j'B_{j}'},
\end{align}
where $\W \in \{\X,\Y,\Z\}$, $\rC \in \{\rA,\rB\}$ and $q(x) = q(z) = 0$ while $q(y) = 1$.
The junk state $\ket{\mathrm{aux}}$ has the following form:
\begin{equation}\label{eqAuxA1}
\ket{\mathrm{aux}} = \sum_\iota\ket{\xi_\iota}_{\mathbb{A}\mathbb{B}}\ket{\iota}_{\mathbb{A}''}\ket{\iota}_{{\mathbb{B}}''},
\end{equation}
where $\iota \in (0,1)^N$. Through the isometry, the measurements are self-tested up to complex conjugation. For each party, measurement $\Y_\rA$ is self-tested to be either $\sigma_y$, $-\sigma_y$ or some coherent superposition of the two. 
\end{lemma}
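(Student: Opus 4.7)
The plan is to reduce the $N$-pair claim to a well-established single-pair self-testing fact and then argue that it parallel-composes. A single 3-$\CHSH$ sum, when maximally violated on a bipartite system $(A_j,B_j)$, is known to self-test a two-qubit maximally entangled state together with three mutually anticommuting Pauli-like observables on each side~\cite{APVW16,Bowles_2018PRL}. The only residual freedom is a complex conjugation acting on $\sigma_y$, which can be represented by a classical flag qubit that is perfectly correlated between the two sides of the pair, in the spirit of~\cite{McKagueMosca}.

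The first step is to certify that on each individual pair $(A_j,B_j)$ the 3-$\CHSH$ sum still reaches its algebraic maximum $6\sqrt{2}$. Since the $j$-th 3-$\CHSH$ expression depends only on the reduced state $\rho_{A_jB_j}=\tr_{\neg A_jB_j}\proj{\Psi}$ and on the observables on that pair, and since the reference correlations in Eq.~\eqref{eq:3CHSHViolationNetworkAssisted} impose one such maximal violation for every $j$, this reduction is automatic. Applying the single-pair self-testing statement then yields, for each $j$, local isometries $V_\rA^{(j)}$ and $V_\rB^{(j)}$ (acting on $A_j$ together with fresh ancillas $A_j',A_j''$ and analogously on the $B$ side) under which $\rB_0^{(j)}$ is mapped to $\sigma_z$ on $B_j'$, $\rB_1^{(j)}$ to $\sigma_x$ on $B_j'$, and $\rB_2^{(j)}$ to $\sigma_z^{B_j''}\otimes \sigma_y^{B_j'}$, with the analogous conclusion for the six observables of $A_j$, and with a clean $\ket{\phi^+}_{A_j'B_j'}$ factoring out of the transformed state.

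Next I would combine these $N$ pairwise isometries into the global tensor product $V=\bigotimes_j V_\rA^{(j)}\otimes V_\rB^{(j)}$. Because the $V^{(j)}$ act on disjoint Hilbert space factors, they commute and may be applied in sequence, each step leaving the previously extracted Bell pairs intact; the unextracted portion of $\ket{\Psi}$ accumulates on the registers $\mathbb{A}\mathbb{B}\mathbb{A}''\mathbb{B}''$. The fact that the conjugation flag is perfectly correlated \emph{within} each pair, combined with the independence of the pairs, forces the junk state to take the claimed form
$$\ket{\mathrm{aux}}=\sum_{\iota\in\{0,1\}^N}\ket{\xi_\iota}_{\mathbb{A}\mathbb{B}}\ket{\iota}_{\mathbb{A}''}\ket{\iota}_{\mathbb{B}''},$$
with $\iota_j=0$ or $1$ recording the absence or presence of conjugation on pair $j$. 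Equation~\eqref{MSTNpairs} is then obtained by tracking how each $\W_\rC^{(j)}$ commutes through $V$: $\X$ and $\Z$ simply act as the corresponding Pauli on the primed register, while $\Y$ acts as $\sigma_y$ on the primed register together with a $\sigma_z$ on the matching flag qubit, producing exactly the factor ${\sigma_z^{q(w)}}^{C_j''}$ appearing in the statement.

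The main obstacle is the careful bookkeeping of the $\sigma_y$ ambiguity across the $N$ independent pairs. One must verify that the per-pair ambiguities are genuinely independent, yielding the full $2^N$ branches indexed by $\iota$, and that the flag qubits can be chosen consistently so that the three actions $\Y_\rA^{(j)}$, $\Y_\rB^{(j)}$, and their products all reference the same $j$-th flag pair. This is standard but delicate; the cleanest route is to write an explicit SWAP-like isometry for a single pair, check the anticommutator relations on the extracted qubit, and then invoke the commutativity of the pairwise isometries to conclude that parallel composition introduces no cross-pair artefacts.
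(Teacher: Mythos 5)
Your proposal is correct and follows essentially the same route as the paper: each maximal 3-$\CHSH$ violation is turned into a single-pair self-test (which the paper proves explicitly via an SOS decomposition of the shifted CHSH operators, regularization of the $A$-side operators into unitaries, and an explicit SWAP circuit, where you instead cite the known result), and the $N$ commuting local isometries are then tensored together while tracking the $\sigma_y$-conjugation flag qubits. One caveat: this lemma assumes no source independence, so you should not invoke ``independence of the pairs''---the junk-state form $\sum_\iota\ket{\xi_\iota}_{\mathbb{A}\mathbb{B}}\ket{\iota}_{\mathbb{A}''}\ket{\iota}_{\mathbb{B}''}$ follows purely from composing the commuting SWAP isometries, with the $\ket{\xi_\iota}$ allowed to be arbitrary and correlated across pairs.
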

\end{tcolorbox}

As described in the main text, every main party can receive one of seven inputs $x \in \{0,1,2,3,4,5,\diamond\}$,while every auxiliary party receives one of three inputs $y \in \{0,1,2\}$. To inputs $x \in \{0,1,\cdots, 5\}$ and all $y$-s,  the parties answer with outputs $a \in \{0,1\}$ and $b \in \{0,1\}$, respectively. To the input $\diamond$, the main parties answer with two-bit outputs $a \in \{00,01,10,11\}$. For the result in this section, the $\diamond$ input is not relevant.

Let us define for $x_j\leq 5$ the operators $\rA^{(j)}_{x_j} = \sum_{a_j}(-1)^{a_j}\M^{(j)}_{a_j|x_j}$, which are valid observables for the main parties.
For the auxiliary parties we analogously define the following operators $\rB_{y_j}^{(j)} = \sum_{b_j}(-1)^{b_j}\N_{b_j|y_j}$.
Let us consider a  pair of parties, $1$-st  main and the $1$-st auxiliary. We check the correlations between the measurement outcomes $a_1$  and $b_{1}$, i.e. the correlations between operators $\rA_{x_1}^{(1)}$ and $\rB_{y_1}^{(1)}$. The observables are measured on the whole state shared among the $2N$ parties $\ket{\Psi}$. The self-test is based on the well-known CHSH inequality.
Namely, two parties can maximally violate the CHSH inequality \cite{Clauser_1969} if and only if they share a maximally entangled pair of qubits and each measures an anticommuting pair of observables. From the observables available to the parties one can define the following three CHSH operators
\begin{align}\label{Mop1}
\mathcal{B}_{1}^{(1)} &=  \rA^{(1)}_{0}\rB^{(1)}_{0}  +  \rA^{(1)}_{1}\rB^{(1)}_{0}  +  \rA^{(1)}_{0}\rB^{(1)}_{1}  -  \rA^{(1)}_{1}\rB^{(1)}_{1}     \\ \label{Mop2}
\mathcal{B}_{2}^{(1)} &=  \rA^{(1)}_{2}\rB^{(1)}_{0}  +  \rA^{(1)}_{2}\rB^{(1)}_{2}  +  \rA^{(1)}_{3}\rB^{(1)}_{0}  -  \rA^{(1)}_{3}\rB^{(1)}_{2}     \\ \label{Mop3}
\mathcal{B}_{3}^{(1)} &=  \rA^{(1)}_{4}\rB^{(1)}_{1} +  \rA^{(1)}_{4}\rB^{(1)}_{2}  +  \rA^{(1)}_{5}\rB^{(1)}_{1}  -  \rA^{(1)}_{5}\rB^{(1)}_{2}.
\end{align}
The maximal quantum violation of all three inequalities corresponding to three CHSH operators is given by the Tsirelson bound $2\sqrt{2}$. Let us show it for the first CHSH operator by considering the sum-of-squares (SOS) decomposition of the so called shifted CHSH operator $2\sqrt{2}\idd - \mathcal{B}_{1}^{(1)}$:
\begin{equation}\label{Mchshsos}
    \sqrt{2}(2\sqrt{2}\idd - \mathcal{B}_{1}^{(1)}) = \left[ \rB^{(1)}_{0} - \frac{\rA^{(1)}_{0} + \rA^{(1)}_{1}}{\sqrt{2}}  \right]^2 + \left[ \rB^{(1)}_{1} - \frac{\rA^{(1)}_{0} - \rA^{(1)}_{1}}{\sqrt{2}}  \right]^2 
\end{equation} 
Hence, the state $\ket{\Psi}$ satisfying
$\bra{\Psi}\mathcal{B}_{1,\yb,\yb'}^{(1)} \ket{\Psi} = 2\sqrt{2}$ must also satisfy
\begin{align}\label{Mpetaa}
    \rB^{(1)}_{0}\ket{\Psi} &= \frac{\rA^{(1)}_{0} + \rA^{(1)}_{1}}{\sqrt{2}}\ket{\Psi} \\
    \rB^{(1)}_{1}\ket{\Psi} &= \frac{\rA^{(1)}_{0} - \rA^{(1)}_{1}}{\sqrt{2}}\ket{\Psi}
\end{align}
Since operators $\frac{\rA^{(1)}_{0} + \rA^{(1)}_{1}}{\sqrt{2}}$ and $\frac{\rA^{(1)}_{0} - \rA^{(1)}_{1}}{\sqrt{2}}$ anticommute by construction the same holds for $\rB^{(1)}_{0}$ and $\rB^{(1)}_{1}$ on the support of $\rho^{(1)} = \textrm{tr}_{\rA_2\rB_2\rA_3\rB_3\cdots \rA_{N-1}\rB_{N-1}\rA_{N}\rB_N}\ketbra{\Psi}{\Psi}$:
\begin{equation}
    \{\rB^{(1)}_{0},\rB^{(1)}_{1}\}\rho^{(1)} = 0
\end{equation}
By repeating the procedure with the two other CHSH inequalities when the maximal violation is obtained $\bra{\Psi}\mathcal{B}_{2}^{(1)} \ket{\Psi} = 2\sqrt{2}$ and $\bra{\Psi}\mathcal{B}_{3}^{(1)} \ket{\Psi} = 2\sqrt{2}$, the following relations can be obtained
\begin{align}\label{Mprvaa}
    \rB^{(1)}_{0}\ket{\Psi} &= \frac{\rA^{(1)}_2 + \rA^{(1)}_3}{\sqrt{2}}\ket{\Psi} \\ \label{Mdrugaa}
    \rB^{(1)}_{2}\ket{\Psi} &= \frac{\rA^{(1)}_2 - \rA^{(1)}_3}{\sqrt{2}}\ket{\Psi} \\ \label{Mtrecaa}
    \rB^{(1)}_{1}\ket{\Psi} &= \frac{\rA^{(1)}_4 + \rA^{(1)}_5}{\sqrt{2}}\ket{\Psi} \\ \label{Mcetvrtaa}
    \rB^{(1)}_{2}\ket{\Psi} &= \frac{\rA^{(1)}_4 - \rA^{(1)}_5}{\sqrt{2}}\ket{\Psi}
\end{align}
which imply the following anticommuting relations:
\begin{align}\label{Mmeasures}
    \{\rB^{(1)}_{0},\rB^{(1)}_{2}\}\rho^{(1)} &= 0\\ \label{Mzadnja}
    \{\rB^{(1)}_{1},\rB^{(1)}_{2}\}\rho^{(1)} &= 0
\end{align}
Let us now introduce some notation for the $1$-st auxiliary party: $\Z_\rB^{(1)} = \rB_{0}^{(1)}$, $\X_\rB^{(1)} = \rB_{1}^{(1)}$ and $\Y_\rB^{(1)} = \rB_{2}^{(1)}$. For $1$-st main party  let us introduce ${\Z'_\rA}^{(1)} = \frac{\rA^{(1)}_0+\rA^{(1)}_1}{\sqrt{2}}$, ${\X'_\rA}^{(1)} = \frac{\rA^{(1)}_0-\rA^{(1)}_1}{\sqrt{2}}$ and ${\Y'}_\rA^{(1)} = \frac{\rA^{(1)}_2-\rA^{(1)}_3}{\sqrt{2}}$. From equations \eqref{Mprvaa}-\eqref{Mcetvrtaa} we can conclude also that:
\begin{align}
    {\Z'_\rA}^{(1)}\ket{\Psi} &=  \Z_\rB^{(1)}\ket{\Psi} \\ \label{dodatna}
    {\X'_\rA}^{(1)}\ket{\Psi} &=  \X_\rB^{(1)}\ket{\Psi} \\
    {\Y'_\rA}^{(1)}\ket{\Psi} &=  \Y_\rB^{(1)}\ket{\Psi}
\end{align}
The operators ${\Z'_\rA}^{(1)}$, ${\X'_\rA}^{(1)}$ and ${\Y'_\rA}^{(1)}$ are not necessarily unitary, but one can define their regularized versions $\Z_\rA^{(1)}$, $\X_\rA^{(1)}$ and $\Y_\rA^{(1)}$ which are unitary by construction. The regularized version of the operator is obtain by rescaling all the eigenvalues to $\pm1$ and changing $0$ eigenvalues to $1$ without changing eigenvectors. This can be described by the following expression
\begin{equation}
\Z_\rA^{(1)} = {\Z'}_\rA^{(1)}/|{\Z'}_\rA^{(1)}|,    
\end{equation}
and analogously for $\X_\rA^{(1)}$ and $\Y_\rA^{(1)}$.

Through the following chain of relations we show that  ${\Z'
_\rA}^{(1)}\ket{\Psi} = \Z_\rA^{(1)}\ket{\Psi}$:

\begin{align*}
|| (\Z_\rA^{(1)}-{\Z'_\rA}^{(1)})\ket{\Psi}|| &= || (\idd-({\Z_\rA^{(1)}}^\dagger{\Z'_\rA}^{(1)}))\ket{\Psi}|| = || (\idd-|{\Z'_\rA}^{(1)})|)\ket{\Psi}||\\
&= || (\idd-|\Z_{\rB}^{(1)}{\Z'_\rA}^{(1)})|)\ket{\Psi}|| \leq || (\idd-\Z_{\rB}^{(1)}{\Z'_\rA}^{(1)})\ket{\Psi}|| = 0,
\end{align*}
where the first equality comes from the fact that ${\Z_\rA^{(1)}}^{\dagger}$ is unitary, the second equality just uses the definition of ${\Z}^{(1)}_{\rA}$. The third equality is equivalent to $|\Z_{\rB}^{(1)}\Z_{\rA}^{(1)}| = |\Z_{\rA}^{(1)}|$, which is correct because $\Z_{\rB}^{(1)}$ is unitary. The inequality is a consequence of $A \leq |A|$, and finally the last equality is the consequence of \eqref{dodatna}.
An equivalent analysis can be done for correlations between all the other pairs of main-auxiliary party. The conditions for self-testing is the observation of the maximal violations
\begin{align}\label{Mbij-1}
    \bra{\Psi}\mathcal{B}_{1}^{(j)}\ket{\Psi} = 2\sqrt{2},\qquad
    \bra{\Psi}\mathcal{B}_{2}^{(j)}\ket{\Psi} = 2\sqrt{2},\qquad
    \bra{\Psi}\mathcal{B}_{3}^{(j)}\ket{\Psi} = 2\sqrt{2},
\end{align}
%
for all  $j = 2, \cdots, N$. In the same way as for $j=1$ we can construct unitary operators $\X_{\rA}^{(j)}$, $\Z_{\rA}^{(j)}$, $\Y_{\rA}^{(j)}$, $\X_{\rB}^{(j)}$, $\Z_{\rB}^{(j)}$, $\Y_{\rB}^{(j)}$ such that:
\begin{align}\label{Mwuhan}
    \X_{\rA}^{(j)}\ket{\Psi} = \X_{\rB}^{(j)}\ket{\Psi},\qquad 
    &\Z_{\rA}^{(j)}\ket{\Psi} = \Z_{\rB}^{(j)}\ket{\Psi},\qquad 
    \Y_{\rA}^{(j)}\ket{\Psi} = \Y_{\rB}^{(j)}\ket{\Psi},\\
    \{\X_{\rA}^{(j)},\Z_{\rA}^{(j)}\}\rho^{(j)} =0,\qquad
    &\{\X_{\rA}^{(j)},\Y_{\rA}^{(j)}\}\rho^{(j)} =0,\qquad
    \{\Y_{\rA}^{(j)},\Z_{\rA}^{(j)}\}\rho^{(j)} =0,\\
    \{\X_{\rB}^{(j)},\Z_{\rB}^{(j)}\}\rho^{(j)} =0,\qquad
    &\{\X_{\rB}^{(j)},\Y_{\rB}^{(j)}\}\rho^{(j)} =0,\qquad
    \{\Y_{\rB}^{(j)},\Z_{\rB}^{(j)}\}\rho^{(j)} =0.
\end{align}
Furthermore, since operators for different parties act non-trivially on different Hilbert spaces, we have
\begin{align}\label{McommA}
[P_{\rA}^{(j)},Q_{\rA}^{(k)}]\ket{\Psi} &= 0\\  \label{McommB} [P_{\rB}^{(j)},Q_{\rB}^{(k)}]\ket{\Psi} &= 0
\end{align}
for every $j\neq k$ and $P,Q \in \{\X,\Z,\Y\}$.  Equipped will all these equations we can directly use the extended SWAP isometry $V_{\rA}\otimes V_{\rB} = \bigotimes_{j=1}^N\left(V^{(j)}_{\rA}\otimes V^{(j)}_{\rB}\right) $   (see Fig. \ref{fig:Mpkb} for $V^{(j)}_{\rA}\otimes V^{(j)}_{\rB}$). A single isometry $V^{(j)} = V^{(j)}_{\rA}\otimes V^{(j)}_{\rB}$ acts in the following way \cite{McKague2012},\cite{sixstate}:
\begin{align}\label{Mvj}
    V^{(j)}\left(\ket{\Psi}_{\mathbb{A}\mathbb{B}}\otimes\ket{0000}_{A_j'A_j''{B}_{j}'{B}_{j}''}\right) &= \ket{\xi}_{\mathbb{A}\mathbb{B}A_j''{B}_{j}''}\otimes\ket{\phi^+}_{A_j'B_{j}'},\\ \label{Mvjz}
    V^{(j)}\left(\Z_{\rC}^{(j)}\ket{\Psi}_{\mathbb{A}\mathbb{B}}\otimes\ket{0000}_{A_j'A_j''B_{j}'B_{j}''}\right) &= \ket{\xi}_{\mathbb{A}\mathbb{B}A_j''B_{j}''}\otimes\sigma_z^{C'_j}\ket{\phi^+}_{A_j'B_{j}'},\\ \label{Mvjx}
        V^{(j)}\left(\X_{\rC}^{(j)}\ket{\Psi}_{\mathbb{A}\mathbb{B}}\otimes\ket{0000}_{A_j'A_j''B_{j}'\rB_{j}''}\right) &= \ket{\xi}_{\mathbb{A}\mathbb{B}A_j''B_{j}''}\otimes\sigma_x^{C'_j}\ket{\phi^+}_{A_j'B_{j}'},\\ \label{Mvjy}
          V^{(j)}\left(\Y_{\rC}^{(j)}\ket{\Psi}_{\mathbb{A}\mathbb{B}}\otimes\ket{0000}_{A_j'A_j''B_{j}'B_{j}''}\right) &= \sigma_z^{C_j''}\ket{\xi}_{\mathbb{A}\mathbb{B}A_j''B_{j}''}\otimes\sigma_y^{C_j'}\ket{\phi^+}_{A_j'B_{j}'},
\end{align}
where $\rC \in \{\rA,\rB\}$, we introduced the notation $\mathbb{A} = A_1\cdots A_N$ and  $\mathbb{B} = B_1\cdots B_N$, and $\ket{\xi}$ takes the form 
\begin{align}\label{Mjunk}
\ket{\xi} =\ket{\xi_0}_{\mathbb{A}\mathbb{B}}\otimes\ket{00}_{A_j''B_{j}''} + \ket{\xi_1}_{\mathbb{A}\mathbb{B}}\otimes\ket{11}_{A_j''B_{j}''}. 
\end{align}
and
\begin{align}
    \ket{\xi_0} &= \frac{1}{2\sqrt{2}}(\idd + i\Y_\rA^{(j)}\X_{\rA}^{(j)})(\idd + \Z_\rA^{(j)})\ket{\Psi},\\
    \ket{\xi_1} &= \frac{1}{2\sqrt{2}}(\idd - i\Y_\rA^{(j)}\X_{\rA}^{(j)})(\idd + \Z_\rA^{(j)})\ket{\Psi}.
\end{align}
eqs. \eqref{Mvj}-\eqref{Mvjy} can be used to obtain:
\begin{align} \label{MvjzMeas}
    V^{(j)}\left(\frac{\idd \pm \Z_{\rC}^{(j)}}{2}\ket{\Psi}_{\mathbb{A}\mathbb{B}}\otimes\ket{0000}_{\rA_j'\rA_j''{\rB}_{j}'{B}_{j}''}\right) &= \ket{\xi}_{\mathbb{A}\mathbb{B}A_j''{B}_{j}''}\otimes\frac{\idd\pm\sigma_z^{C'_j}}{2}\ket{\phi^+}_{A_j'{B}_{j}'},\\ \label{MvjxMeas}
        V^{(j)}\left(\frac{\idd\pm\X_{\rC}^{(j)}}{2}\ket{\Psi}_{\mathbb{A}\mathbb{B}}\otimes\ket{0000}_{A_j'A_j''{B}_{j}'{B}_{j}''}\right) &= \ket{\xi}_{\mathbb{A}\mathbb{B}A_j''{B}_{j}''}\otimes\frac{\idd \pm \sigma_x^{C'_j}}{2}\ket{\phi^+}_{A_j'{B}_{1}'},\\  \label{MvjyMeas-1}
          V^{(j)}\left(\frac{\idd+\Y_{\rC}^{(j)}}{2}\ket{\Psi}_{\mathbb{A}\mathbb{B}}\otimes\ket{0000}_{A_j'A_j''{B}_{j}'{B}_{j}''}\right) &= \frac{\idd + \sigma_z^{C_j''}}{2}\ket{\xi}_{\mathbb{A}A_j''{B}_{j}''}\otimes\frac{\idd + \sigma_y^{C_j'}}{2}\ket{\phi^+}_{A_j'{B}_{j}'} + \frac{\idd - \sigma_z^{C_j''}}{2}\ket{\xi}_{\mathbb{A}A_j''{B}_{j}''}\otimes\frac{\idd - \sigma_y^{C_j'}}{2}\ket{\phi^+}_{A_j'{B}_{j}'}\\ \label{MvjyMeas-2}
          V^{(j)}\left(\frac{\idd-\Y_{\rC}^{(j)}}{2}\ket{\Psi}_{\mathbb{A}\mathbb{B}}\otimes\ket{0000}_{A_j'A_j''{B}_{j}'{B}_{j}''}\right) &= \frac{\idd + \sigma_z^{C_j''}}{2}\ket{\xi}_{\mathbb{A}A_j''{B}_{j}''}\otimes\frac{\idd - \sigma_y^{C_j'}}{2}\ket{\phi^+}_{A_j'{B}_{j}'} + \frac{\idd - \sigma_z^{C_j''}}{2}\ket{\xi}_{\mathbb{A}A_j''{B}_{j}''}\otimes\frac{\idd + \sigma_y^{C_j'}}{2}\ket{\phi^+}_{A_j'{B}_{j}'}
\end{align}
where again $\rC \in \{\rA,\rB\}$. We see that the action of observables $\Z_\rC$, $\X_{\rC}$ and $\Y_{\rC}$ on the physical state is mapped to the action of the Pauli observables $\sigma_z$, $\sigma_x$ and $\sigma_y$, respectively, on the maximally entangled pair of qubits, up to a possible complex conjugation. The isometry $V^{(j)}$ gives a nice interpretation of this complex conjugation freedom. Given the form of $\ket{\xi}$, see eq.~\eqref{Mjunk}, eq.~\eqref{MvjyMeas-1} can be read in the following way: before applying measurement $\Y_\rC$, a party measures ancilla $C_j''$ in the computational basis, and if the result is $0$ it measures $\sigma_y$, while if the result is $-1$ it measures $\sigma_y^*$. Thus, the measurement $\Y_\rC$ is in a way coherently controlled by $\sigma_z$.

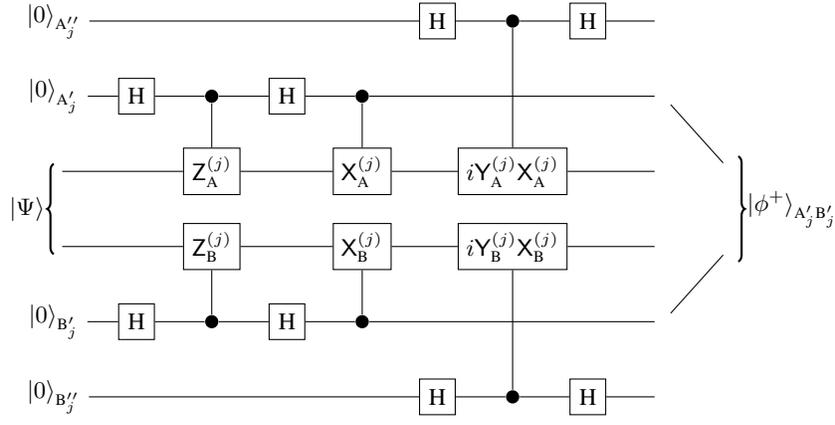
\begin{figure*}
  \centerline{
    \begin{tikzpicture}
    %
    \tikzstyle{operator} = [draw,fill=white,minimum size=1.5em] 
    \tikzstyle{phase} = [fill,shape=circle,minimum size=5pt,inner sep=0pt]
    %
    \node at (-0.1,-1) (q1) {$\ket{0}_{\rA_j'}$ };
    \node at (-0.1,0) (q11) {$\ket{0}_{\rA_j''}$ };
    \node at (-0.1,-2) (q2) { };
    \node at (-0.1,-3) (q3) { };
     \node at (-0.1,-5) (q44) {$\ket{0}_{\rB_j''}$ };
    \node at (-0.1,-4) (q4) {$\ket{0}_{\rB_j'}$ };
    \draw[decorate,decoration={brace, mirror},thick] (-0.1,-1.9) to
	node[midway,left] (bracket) {$\ket{\Psi}$}
	(-0.1,-3.1);
    %
    \node[operator] (opH11) at (1,-1) {H} edge [-] (q1);
    \node[operator] (opH21) at (1,-4) {H} edge [-] (q4);
    %
    \node[phase] (phase11) at (2,-1) {} edge [-] (opH11);
    \node[operator] (phase12) at (2,-2) {$\Z_{\rA}^{(j)}$} edge [-] (q2);
    \draw[-] (phase11) -- (phase12);
    \node[phase] (phase13) at (2,-4) {} edge [-] (opH21);
    \node[operator] (phase14) at (2,-3) {${\Z}_{\rB}^{(j)}$} edge [-] (q3);
    \draw[-] (phase13) -- (phase14);
    \node[operator] (op12) at (3,-1) {H} edge [-] (phase11);
    \node[operator] (op22) at (3,-4) {H} edge [-] (phase13);
    %
    \node[phase] (phase21) at (4,-1) {} edge [-] (op12);
    \node[operator] (phase22) at (4,-2) {$\X_{\rA}^{(j)}$} edge [-] (phase12);
    \draw[-] (phase21) -- (phase22);
    \node[phase] (phase23) at (4,-4) {} edge [-] (op22);
    \node[operator] (phase24) at (4,-3) {${\X}_{\rB}^{(j)}$} edge [-] (phase14);
    \draw[-] (phase23) -- (phase24);
 \node[operator] (opH31) at (5,0) {H} edge [-] (q11);
    \node[operator] (opH41) at (5,-5) {H} edge [-] (q44);
    \node[phase] (phase31) at (6,0) {} edge [-] (opH31);
    \node[operator] (phase32) at (6,-2) {$i\Y_{\rA}^{(j)}\X_{\rA}^{(j)}$} edge [-] (phase22);
    \draw[-] (phase31) -- (phase32);
    \node[operator] (phase33) at (6,-3) {$i{\Y}_{\rB}^{(j)}{\X}_{\rB}^{(j)}$} edge [-] (phase24);
    \node[phase] (phase34) at (6,-5) {} edge [-] (opH41);
     \draw[-] (phase33) -- (phase34);
    \node[operator] (op81) at (7,0) {H} edge [-] (phase31);
    \node[operator] (op82) at (7,-5) {H} edge [-] (phase34);
    \node (end1) at (8,-1) {} edge [-] (phase21);
    \node (end5) at (8,0) {} edge [-] (op81); 
    \node (end2) at (8,-2) {} edge [-] (phase32);
    \node (end3) at (8,-3) {} edge [-] (phase33);
    \node (end6) at (8,-5) {} edge [-] (op82);
    \node (end4) at (8,-4) {} edge [-] (phase23);
    %
 \node at (8,-1) (eend1) {};
    \node at (8.9,-2) (eend11) {};
    \draw[-] (eend1) -- (eend11);
    \node at (8,-4) (eend4) {};
    \node at (8.9,-3) (eend44) {};
    \draw[-] (eend4) -- (eend44);
    \draw[decorate,decoration={brace},thick] (9,-1.8) to 	node[right] (bracket) {$\ket{\phi^+}_{\rA_j'\rB_j'}$} 	(9,-3.2);
    %
    %
    %
    \end{tikzpicture}
  }
  \caption{The isometry for self-testing. The standard SWAP gates extracts the maximally entangled pair of qubits to the primed ancillas.}. \label{fig:Mpkb}
\end{figure*}
%



Let us now consider another isometry $V^{(k)} = V_A^{(k)}\tp V_B^{(k)}$ with $k \neq j$ and check how it acts together with $V_A^{(j)}\tp V_B^{(j)}$. Thanks to eqs. \eqref{McommA} and \eqref{McommB} we obtain:
\begin{align*}
     &V^{(k)}\left(V^{(j)}\left(\ket{\Psi}_{\mathbb{A}\mathbb{B}}\otimes\ket{0000}_{A_j'A_j''{B}_{j}'{B}_{j}''}\right) \otimes\ket{0000}_{A_k'A_k''{B}_{k}'{B}_{k}''}\right) = \\ &= V^{(k)}\left(\ket{\xi}_{\mathbb{A}\mathbb{B}A_j''{B}_{j}''}\otimes\ket{0000}_{A_k'A_k''{B}_{k}'{B}_{k}''}\right) \otimes\ket{\phi^+}_{A_j'{B}_{j}'}\\
     &= \left[V^{(k)}\left(\ket{\xi_0}_{\mathbb{A}\mathbb{B}}\otimes\ket{0000}_{A_k'A_k''{B}_{k}'{B}_{k}''}\right) \otimes\ket{00}_{\rA_j''\rB_{j}''} + V^{(k)}\left(\ket{\xi_1}_{\mathbb{A}\mathbb{B}}\ket{0000}_{A_k'A_k''{B}_{k}'{B}_{k}''}\right)\otimes\ket{11}_{\rA_j''\rB_{j}''}\right]\otimes\ket{\phi^+}_{A_j'{B}_{j}'}\\
     &= \ket{\xi'}_{\mathbb{A}\mathbb{B}A_j''A_k''B_j''B_k''}\otimes \ket{\phi^+}_{A_j'{B}_{j}'} \otimes \ket{\phi^+}_{A_k'{B}_{k}'},
\end{align*}
where
\begin{align}\label{Mjunk2}
\ket{\xi'} =\ket{\xi_{00}}_{\mathbb{A}\mathbb{B}}\otimes\ket{0000}_{A_j''B_{j}''A_k''B_{k}''} +\ket{\xi_{01}}_{\mathbb{A}\mathbb{B}}\otimes\ket{0011}_{A_j''B_{j}''A_k''B_{k}''} +  \ket{\xi_{10}}_{\mathbb{A}\mathbb{B}}\otimes\ket{1100}_{A_j''B_{j}''A_k''B_{k}''} + \ket{\xi_{11}}_{\mathbb{A}\mathbb{B}}\otimes\ket{1111}_{A_j''B_{j}''A_k''B_{k}''}. 
\end{align}
and
\begin{align}
    \ket{\xi_{00}} &= \frac{1}{2^3}(\idd + i\Y_\rA^{(j)}\X_{\rA}^{(j)})(\idd + \Z_\rA^{(j)})(\idd + i\Y_\rA^{(k)}\X_{\rA}^{(k)})(\idd + \Z_\rA^{(k)})\ket{\Psi},\\
    \ket{\xi_{01}} &= \frac{1}{2^3}(\idd + i\Y_\rA^{(j)}\X_{\rA}^{(j)})(\idd + \Z_\rA^{(j)})(\idd - i\Y_\rA^{(j)}\X_{\rA}^{(j)})(\idd + \Z_\rA^{(j)})\ket{\Psi},\\
    \ket{\xi_{10}} &= \frac{1}{2^3}(\idd - i\Y_\rA^{(j)}\X_{\rA}^{(j)})(\idd + \Z_\rA^{(j)})(\idd + i\Y_\rA^{(k)}\X_{\rA}^{(k)})(\idd + \Z_\rA^{(k)})\ket{\Psi},\\
    \ket{\xi_{11}} &= \frac{1}{2^3}(\idd - i\Y_\rA^{(j)}\X_{\rA}^{(j)})(\idd + \Z_\rA^{(j)})(\idd - i\Y_\rA^{(j)}\X_{\rA}^{(j)})(\idd + \Z_\rA^{(j)})\ket{\Psi}.
\end{align}
If we repeat the procedure for all pairs of parties, we obtain the final local isometry $V = \bigotimes_{j=0}^{N-1} V_\rA^{(j)}\otimes V_\rB^{(j)}$  such that
\begin{align}\label{josjednaa}
    V\left(\ket{\Psi}\otimes\ket{0...0}_{\mathbb{A}'\mathbb{A}''{\mathbb{B}}'{\mathbb{B}}''}\right) &= \ket{\mathrm{aux}}_{\mathbb{A}B\mathbb{A}''{\mathbb{B}}''}\bigotimes_{j=0}^{N-1}\ket{\phi^+}_{A_j'B_{j}'}\\ \label{MSTNpairss}
    V\left(\bigotimes_{j=1}^N\W_\rC^{(j)}\ket{\Psi}\otimes\ket{0...0}_{\mathbb{A}'\mathbb{A}''{\mathbb{B}}'{\mathbb{B}}''}\right) &= {\sigma_z^{q(w)}}^{C_j''}\ket{\mathrm{aux}}_{\mathbb{A}B\mathbb{A}''{\mathbb{B}}''}\bigotimes_{j=1}^{N}{\sigma_w}^{C_j'}\ket{\phi^+}_{A_j'B_{j}'},
\end{align}
where $\W \in \{\X,\Y,\Z\}$, $\rC \in \{\rA,\rB\}$ and $q(x) = q(z) = 0$ while $q(y) = 1$.
The junk state $\ket{\mathrm{aux}}$ has the following form:
\begin{equation}\label{eqAuxA11}
    \ket{\mathrm{aux}} = \sum_\iota\ket{\xi_\iota}_{\mathbb{A}B}\ket{\iota}_{\mathbb{A}''}\ket{\iota}_{{\mathbb{B}}''},
\end{equation}
where $\iota \in (0,1)^N$. Through the isometry, the measurements are self-tested up to complex conjugation. For each party measurement $\Y_\rA$ is self-tested to be either $\sigma_y$, $-\sigma_y$ or some coherent superposition of the two. 

To conclude, in this section we proved that if the three CHSH inequalities are maximally violated by $N$ pairs of main-auxiliary parties, $N$ maximally entangled pairs can be extracted. Moreover, the maximal violations allow us to self-test Pauli measurements $\sigma_x$ and $\sigma_z$, while $\sigma_y$ is self-tested up to complex conjugation, as shown in \eqref{josjedna}, and where junk state, given explicitly in \eqref{eqAuxA1} has $N$ flags, which determine whether the auxiliary party applies $\sigma_y$ or $\sigma_y^*$. In the model given here, the auxiliary party measures the flag qubits in the computational basis (hence $\sigma_z$ acting on $\ket{\mathrm{aux}}$ in \eqref{josjedna}), and if the measurement result is $0$ it applies $\sigma_y$, while if the result is $1$ it applies $\sigma_y^*$.


\subsection{Fully network-assisted scenario} \label{app:Step1FullyNetworkAssisted}



In this section, we exploit the fact that the sources present in the physical experiment are mutually independent. This allows us to join the $N$ auxiliary parties into one. We prove the following lemma, which is a refinement of the previous one with the important difference that the physical measurements are now shown to be equivalent either to the reference measurements or to their complex conjugate.

\begin{tcolorbox}
\setcounter{lemma}{0}
\begin{lemma}[Formal version, fully-network-assisted case]\label{lemmachshFNA}
Let the state $\ket{\Psi}_{\mathbb{A}\mathbb{B}} \in \mathcal{H}^{\mathbb{A}}\otimes\mathcal{H}^{\mathbb{B}}$, dichotomic observables $\{\rA_j\}_{j=0}^5$ and $\{\rB_{j,\yb}\}_{j=0}^2$ and measurements $\{\N_{b_j|\lozenge}^{2j-1,2j}, \N_{b_j|\blacklozenge}^{2j,2j+1}\}_{j=1}^{\lfloor N/2\rfloor}$ reproduce the reference correlations given in eqs. \eqref{eq:3CHSHViolationFullyNetworkAssisted} and \eqref{eq:AlignFullyNetworkAssisted}. 
Then there exists a local isometry $V=\bigotimes_{j=0}^{N-1} V_\rA^{(j)}\otimes V_\rB^{(j)}$ such that
\begin{align}\label{josjednab}
    V\left(\ket{\Psi}_{\mathbb{A}\mathbb{B}}\otimes\ket{0...0}_{\mathbb{A}'\mathbb{A}''{\mathbb{B}}'{\mathbb{B}}''}\right) &= \ket{\mathrm{aux}}_{\mathbb{A}\mathbb{B}\mathbb{A}''{\mathbb{B}}''}\bigotimes_{j=0}^{N-1}\ket{\phi^+}_{A_j'B_{j}'}\\ \label{MSTNpairsb}
   V\left(\bigotimes_{j=1}^N\W_\rB^{(j)}\ket{\Psi}_{\mathbb{A}\mathbb{B}}\otimes\ket{0\cdots 0}_{\mathbb{A}'\mathbb{A}''{\mathbb{B}}'{\mathbb{B}}''}\right) &= \begin{cases} \ket{\xi_0}_{\mathbb{A}\mathbb{B}}\ket{0\cdots0}_{\mathbb{A}''{\mathbb{B}}''}\bigotimes_{j=1}^{N}\sigma_w^{B_j'}\ket{\phi^+}_{A_j'B_{j}'},\\
    \ket{\xi_1}_{\mathbb{A}\mathbb{B}}\ket{1\cdots1}_{\mathbb{A}''{\mathbb{B}}''}\bigotimes_{j=1}^{N}{\sigma^*_w}^{B_j'}\ket{\phi^+}_{A_j'B_{j}'},
    \end{cases}
\end{align}
where $\W \in \{\X,\Y,\Z\}$, $\rC \in \{\rA,\rB\}$.
The junk state $\ket{\mathrm{aux}}$ has the following form:
\begin{equation}\label{eqAuxA12}
\ket{\mathrm{aux}} = \ket{\xi_0}_{\mathbb{A}\mathbb{B}}\ket{0\cdots0}_{\mathbb{A}''{\mathbb{B}}''} +  \ket{\xi_1}_{\mathbb{A}\mathbb{B}}\ket{1\cdots1}_{\mathbb{A}''{\mathbb{B}}''}.
\end{equation}
Through the isometry, the measurements are self-tested up to complex conjugation. For each party, measurement $\Y_\rA$ is self-tested to be either $\sigma_y$ or  $-\sigma_y$. 
\end{lemma}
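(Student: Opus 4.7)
The plan is to prove the lemma in two stages: stage one installs the same swap isometry as in the network-assisted Lemma~\ref{lemchshNA} together with a preliminary junk state carrying $N$ independent complex-conjugation flags, and stage two uses the new alignment correlations~\eqref{eq:AlignFullyNetworkAssisted} to force all $N$ flags to agree, leaving only the two branches appearing in~\eqref{eqAuxA12}. For stage one, the reference 3-CHSH identities~\eqref{eq:3CHSHViolationFullyNetworkAssisted} give a maximal violation between each main party $A_j$ and the auxiliary $B$, with auxiliary observables $\rB_{y_j,\yb}^{(j)}$ acting on the $j$-th factor of $B$'s Hilbert space (a factorisation guaranteed by the network source-independence assumption). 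The sum-of-squares argument from appendix~\ref{app:Step1NetworkAssisted} applied pair-by-pair produces regularised unitary observables $\X_\rB^{(j)},\Y_\rB^{(j)},\Z_\rB^{(j)}$ that pairwise anticommute on the support of $\rho^{(j)}=\tr_{\neq j}\proj{\Psi}$ and match the rotated main-party operators on $\ket{\Psi}$. The product swap isometry $V=\bigotimes_j V_\rA^{(j)}\otimes V_\rB^{(j)}$ then delivers~\eqref{josjednab} with an initial junk state $\ket{\mathrm{aux}}=\sum_{\iota\in\{0,1\}^N}\ket{\xi_\iota}_{\mathbb{A}\mathbb{B}}\ket{\iota}_{\mathbb{A}''}\ket{\iota}_{\mathbb{B}''}$.

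For stage two, I would push each line of~\eqref{eq:AlignFullyNetworkAssisted} through the isometry. After regularisation, the rotated operators $(\rA_0^{(j)}\pm\rA_1^{(j)})/\sqrt{2}$ and $(\rA_2^{(j)}-\rA_3^{(j)})/\sqrt{2}$ coincide with the observables $\Z_\rA^{(j)},\X_\rA^{(j)},\Y_\rA^{(j)}$ of Lemma~\ref{lemchshNA}, which traverse $V$ as $\sigma_z^{A_j'}$ and $\sigma_x^{A_j'}$ with trivial flag action in the first two cases, and as $\sigma_y^{A_j'}$ accompanied by a $\sigma_z^{A_j''}$ kick on the flag in the third. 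The three lines of~\eqref{eq:AlignFullyNetworkAssisted} attached to the $\lozenge$ input and the pair $(2j-1,2j)$ therefore force the extracted action of $\N_{\bb|\lozenge}^{(2j-1,2j)}$ on $B_{2j-1}'B_{2j}'$ to reproduce the Bell-basis stabiliser correlations $\sigma_z\sigma_z$ and $\sigma_x\sigma_x$ indexed by the outcome bits, together with a $\sigma_y\sigma_y$ correlation dressed by $\sigma_z^{A_{2j-1}''}\sigma_z^{A_{2j}''}$ on the flags; matching the observed value $\frac{1}{4}(-1)^{b_{2j-1}\oplus b_{2j}\oplus 1}$ compels that dressing to act as $+\idd$ on the support of $\ket{\mathrm{aux}}$, i.e.\ every surviving bitstring $\iota$ satisfies $\iota_{2j-1}=\iota_{2j}$. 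The same argument applied to $\blacklozenge$ yields $\iota_{2j}=\iota_{2j+1}$, so the combined constraints form a connected path on $\{1,\dots,N\}$ admitting only the constant solutions $\iota=0\cdots 0$ and $\iota=1\cdots 1$. Hence $\ket{\mathrm{aux}}$ reduces to~\eqref{eqAuxA12}; on the $\ket{0\cdots 0}$ branch $\bigotimes_j\Y_\rB^{(j)}$ acts as $\bigotimes_j\sigma_y$, on the $\ket{1\cdots 1}$ branch as $\bigotimes_j\sigma_y^*$, and since $\sigma_x,\sigma_z$ are real the choices $\W\in\{\X,\Z\}$ act identically on both branches, giving~\eqref{MSTNpairsb}.

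The main technical obstacle lies in the passage from the single-observable CHSH analysis of Lemma~\ref{lemchshNA} to the joint Bell-basis projectors $\N_{\bb|\lozenge}$ and $\N_{\bb|\blacklozenge}$, which are not a priori products of local observables on the extracted $B_{2j-1}'B_{2j}'$ sector. To make the deduction rigorous one has to expand these projectors in the operator basis generated by the anticommuting unitaries $\Z_\rB^{(j)},\X_\rB^{(j)},\Y_\rB^{(j)}$ and verify that the three scalar equations per cluster in~\eqref{eq:AlignFullyNetworkAssisted} pin down each projector uniquely on the extracted Bell-pair sector, up only to the global-conjugation freedom that~\eqref{eqAuxA12} records.
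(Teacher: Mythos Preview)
Your two-stage plan is the right skeleton and matches the paper's structure, but there are two issues: one methodological and one a genuine omission.

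\medskip
\textbf{Stage two: how to handle the Bell projectors.} Your proposed route---pushing $\N_{\bb|\lozenge}$ through the isometry and then trying to pin down its extracted action on $B_{2j-1}'B_{2j}'$---is not the path the paper takes, and your last paragraph correctly flags that this is where the difficulty lies. The paper never characterises the Bell projectors at all. Instead it uses only that the $\Ss_{m,b}:=\sum_{\bb:b_m=b}\N_{\bb|\lozenge}$ are orthogonal projectors summing to $\idd$ and commute with the main-party operators. From the four scalar correlations in each block of~\eqref{eq:AlignFullyNetworkAssisted} one checks that the states $\{\ket{\Psi},\Z_\rA^{(2m-1)}\Z_\rA^{(2m)}\ket{\Psi},\X_\rA^{(2m-1)}\X_\rA^{(2m)}\ket{\Psi},\Y_\rA^{(2m-1)}\Y_\rA^{(2m)}\ket{\Psi}\}$ are orthonormal and that Cauchy--Schwarz saturates to give
\[
\Ss_{m,00}\ket{\Psi}=\tfrac14\bigl(\idd+\Z_\rA^{(2m-1)}\Z_\rA^{(2m)}+\X_\rA^{(2m-1)}\X_\rA^{(2m)}-\Y_\rA^{(2m-1)}\Y_\rA^{(2m)}\bigr)\ket{\Psi},
\]
and analogously for the other outcomes. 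Inverting and using orthogonality of the $\Ss_{m,b}$ yields the single operator relation
\[
\X_\rA^{(2m-1)}\X_\rA^{(2m)}\Z_\rA^{(2m-1)}\Z_\rA^{(2m)}\ket{\Psi}=-\,\Y_\rA^{(2m-1)}\Y_\rA^{(2m)}\ket{\Psi},
\]
entirely on the main-party side. Pushed through $V$ this becomes $\sigma_z^{A_{2m-1}''}\sigma_z^{A_{2m}''}\ket{\mathrm{aux}}=\ket{\mathrm{aux}}$, giving $\iota_{2m-1}=\iota_{2m}$ directly. Your approach of expanding $\N_{\bb|\lozenge}$ in a Pauli-like basis might be made to work, but it is substantially more laborious and you have not actually carried it out; the saturation argument bypasses the obstacle you identified.

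\medskip
\textbf{The missing final step.} More seriously, you stop once $\ket{\mathrm{aux}}$ reduces to the two-term superposition~\eqref{eqAuxA12}. That is not enough: equation~\eqref{MSTNpairsb} is stated as two exclusive \emph{cases}, and the last sentence of the lemma says $\Y_\rA$ is self-tested to be \emph{either} $\sigma_y$ \emph{or} $-\sigma_y$, not a coherent mixture. You must still show that one of $\ket{\xi_0},\ket{\xi_1}$ vanishes. The paper closes this with a second use of the network assumption: since $\ket{\Psi}=\bigl(\bigotimes_j\ket{\Psi_j}_{A_jB_j}\bigr)\otimes\ket{\tilde\Psi}_{\tilde{\mathbb A}}$, the left-hand side of~\eqref{josjednab} is a product across the partition $A_1B_1|A_2B_2|\cdots|A_NB_N$ (the local isometry preserves this). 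But if both $\ket{\xi_0}$ and $\ket{\xi_1}$ were nonzero, the right-hand side $\bigl(\ket{\xi_0}\ket{0\cdots0}+\ket{\xi_1}\ket{1\cdots1}\bigr)\otimes\ket{\phi^+}^{\otimes N}$ would be entangled across that partition through the flag registers $\mathbb{A}''\mathbb{B}''$, a contradiction. This collapse to a single branch is precisely what distinguishes the fully network-assisted lemma from Lemma~\ref{lemchshNA}, and it is absent from your proposal.
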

\end{tcolorbox}

Because of the quantum network geometry, we made the assumption that the physical state has the form
\begin{equation}\label{networkstate}
    \ket{\Psi}_{\mathbb{A}\mathbb{B}} = \left(\bigotimes_{j=1}^N\ket{\Psi_j}_{A_jB_j}\right)\otimes\ket{\tilde{\Psi}}_{\tilde{A}_1\cdots\tilde{A}_N} \equiv \left(\bigotimes_{j=1}^N\ket{\Psi_j}_{A_jB_j}\right)\otimes\ket{\tilde{\Psi}}_{\tilde{\mathbb{A}}} 
\end{equation}
In the following, we omit the label ${\mathbb{A}\mathbb{B}}$ from the physical state. As described in the main text, another difference in the network architecture is that $N$ auxiliary parties are merged into one. In what follows, we present the proof that $N$ maximal violations of the extended CHSH inequality each violated by one main and the auxiliary party self-test the tensor product of $N$ maximally entangled pairs of qubits. The proof is similar to the one presented in section~\ref{app:Step1NetworkAssisted}, adapted to the fact that now there are no $N$ auxiliary parties, but only one.

In the first part of the proof we focus on the self-testing of $N$ maximally entangled qubit pairs and for the moment disregard the inputs $\diamond$, $\lozenge$ and $\blacklozenge$.
Let us now define the operator $\rA^{(j)}_{x_j} = \sum_{a_j}(-1)^{a_j}\M^{(j)}_{a_j|x_j}$. For the auxiliary party we define the following operators $\rB^{(j)}_{y_j,\mathbf{y}} = \sum_{\mathbf{b}}(-1)^{b_j}\N_{\mathbf{b}|\mathbf{y}}$.
All the operators we defined are valid measurement observables.
Let us consider a  pair of parties, $1$-st  main and the auxiliary. We check the correlations between the measurement outcomes $a_1$  and $b_{1}$, i.e. the correlations between operators $\rA_{x_1}^{(1)}$ and $\rB_{y_1,\mathbf{y}}^{(1)}$. As operator $\rB_{y_1,\mathbf{y}}^{(1)}$ acts on the whole Hilbert space of the auxiliary party, we have to consider the whole state shared among the $N+1$ parties $\ket{\Psi} = \left(\bigotimes_{j=1}^N\ket{\Psi_j}_{A_jB_j}\right)\otimes\ket{\tilde{\Psi}}_{\tilde{\mathbb{A}}} $. As in the previous proof, the self-test is based on the extended CHSH inequality. For $i\geq 2$, let $y_i \in \{0,1,2\}$ and consider $\yb=(0,y_2, ..., y_N)$, $\yb'=(1,y'_2, ..., y'_N)$ and $\yb''=(2,y''_2, ..., y''_N)$.
From the observables available to the parties one can define the following three CHSH operators
\begin{align}\label{op1}
\mathcal{B}_{1,\yb,\yb'}^{(1)} &=  \rA^{(1)}_{0}\rB^{(1)}_{0,\mathbf{y}}  +  \rA^{(1)}_{1}\rB^{(1)}_{0,\mathbf{y}}  +  \rA^{(1)}_{0}\rB^{(1)}_{1,\mathbf{y}'}  -  \rA^{(1)}_{1}\rB^{(1)}_{1,\mathbf{y}'}     \\ \label{op2}
\mathcal{B}_{2,\yb,\yb''}^{(1)} &=  \rA^{(1)}_{2}\rB^{(1)}_{0,\mathbf{y}}  +  \rA^{(1)}_{2}\rB^{(1)}_{2,\mathbf{y}''}  +  \rA^{(1)}_{3}\rB^{(1)}_{0,\mathbf{y}}  -  \rA^{(1)}_{3}\rB^{(1)}_{2,\mathbf{y}''}     \\ \label{op3}
\mathcal{B}_{3,\yb',\yb''}^{(1)} &=  \rA^{(1)}_{4}\rB^{(1)}_{1,\mathbf{y}'} +  \rA^{(1)}_{4}\rB^{(1)}_{2,\mathbf{y}''}  +  \rA^{(1)}_{5}\rB^{(1)}_{1,\mathbf{y}'}  -  \rA^{(1)}_{5}\rB^{(1)}_{2,\mathbf{y}''}.
\end{align}
All three CHSH operators attain the Tsirelson bound $2\sqrt{2}$. The SOS decomposition of the shifted CHSH operator $2\sqrt{2}\idd - \mathcal{B}_{1,\yb,\yb'}^{(j)}$ reads:
\begin{equation}\label{chshsos}
    \sqrt{2}(2\sqrt{2}\idd - \mathcal{B}_{1,\yb,\yb'}^{(1)}) = \left[ \rB^{(1)}_{0,\yb} - \frac{\rA^{(1)}_{0} + \rA^{(1)}_{1}}{\sqrt{2}}  \right]^2 + \left[ \rB^{(1)}_{1,\yb'} - \frac{\rA^{(1)}_{0} - \rA^{(1)}_{1}}{\sqrt{2}}  \right]^2 
\end{equation}
Hence, the state $\ket{\Psi}$ satisfying
$\bra{\Psi}\mathcal{B}_{1,\yb,\yb'}^{(1)} \ket{\Psi} = 2\sqrt{2}$ must also satisfy
\begin{align}\label{petaa}
    \rB^{(1)}_{0,\yb}\ket{\Psi} &= \frac{\rA^{(1)}_{0} + \rA^{(1)}_{1}}{\sqrt{2}}\ket{\Psi} \\ \label{sestaa}
    \rB^{(1)}_{1,\yb'}\ket{\Psi} &= \frac{\rA^{(1)}_{0} - \rA^{(1)}_{1}}{\sqrt{2}}\ket{\Psi}
\end{align}
The equations
\begin{equation}
\frac{\rA^{(1)}_{0} \pm \rA^{(1)}_{1}}{\sqrt{2}}\ket{\Psi} = \left(\frac{\rA^{(1)}_{0} \pm \rA^{(1)}_{1}}{\sqrt{2}}\ket{\Psi_1}_{A_1B_1}\right)\bigotimes_{j=2}^N\ket{\Psi_j}_{A_jB_j}\otimes\ket{\tilde{\Psi}}_{\tilde{\mathbb{A}}}
\end{equation}
imply
\begin{align}
    \rB^{(1)}_{0,\yb}\ket{\Psi} &= \left(\rB^{(1)}_{0,\yb}\ket{\Psi_1}_{A_1B_1}\right)\bigotimes_{j=2}^N\ket{\Psi_j}_{A_jB_j}\otimes\ket{\tilde{\Psi}}_{\tilde{\mathbb{A}}}\\
    \rB^{(1)}_{1,\yb'}\ket{\Psi} &= \left(\rB^{(1)}_{1,\yb'}\ket{\Psi_1}_{A_1B_1}\right)\bigotimes_{j=2}^N\ket{\Psi_j}_{A_jB_j}\otimes\ket{\tilde{\Psi}}_{\tilde{\mathbb{A}}}
\end{align}
Operators  $\rB^{(1)}_{0,\yb'}$ and $\rB^{(1)}_{1,\yb'}$ act nontrivially only on $\ket{\Psi_1}_{A_1B_1}$ because they act on the full state $\ket{\Psi}$ in the same way as $(\rA_0^{(1)}\pm \rA_1^{(1)})/\sqrt{2}$, which cannot affect the states $\ket{\Psi_j}$ for $j\neq 1$.

Since the operators $\frac{\rA^{(1)}_{0} + \rA^{(1)}_{1}}{\sqrt{2}}$ and $\frac{\rA^{(1)}_{0} - \rA^{(1)}_{1}}{\sqrt{2}}$ anticommute by construction, the same holds for $\rB^{(1)}_{0,\yb}$ and $\rB^{(1)}_{1,\yb'}$ on the support of $\rho^{(1)}_B = \textrm{tr}_{A_1}\ketbra{\Psi_1}{\Psi_1}$:
\begin{equation}
    \{\rB^{(1)}_{0,\yb},\rB^{(1)}_{1,\yb'}\}\rho_B^{(1)} = 0
\end{equation}
By repeating the procedure with the two other CHSH inequalities that attain the maximal violation $\bra{\Psi}\mathcal{B}_{2,\yb,\yb''}^{(1)} \ket{\Psi} = 2\sqrt{2}$ and $\bra{\Psi}\mathcal{B}_{3,\yb',\yb''}^{(1)} \ket{\Psi} = 2\sqrt{2}$, the following relations can be obtained
\begin{align}\label{prvaa}
    \rB^{(1)}_{0,\yb}\ket{\Psi} &= \frac{\rA^{(1)}_2 + \rA^{(1)}_3}{\sqrt{2}}\ket{\Psi} \\ \label{drugaa}
    \rB^{(1)}_{2,\yb''}\ket{\Psi} &= \frac{\rA^{(1)}_2 - \rA^{(1)}_3}{\sqrt{2}}\ket{\Psi} \\ \label{trecaa}
    \rB^{(1)}_{1,\yb'}\ket{\Psi} &= \frac{\rA^{(1)}_4 + \rA^{(1)}_5}{\sqrt{2}}\ket{\Psi} \\ \label{cetvrtaa}
    \rB^{(1)}_{2,\yb''}\ket{\Psi} &= \frac{\rA^{(1)}_4 - \rA^{(1)}_5}{\sqrt{2}}\ket{\Psi}
\end{align}
which imply the following anticommuting relations:
\begin{align}\label{measures}
    \{\rB^{(1)}_{0,\yb},\rB^{(1)}_{2,\yb''}\}\rho_B^{(1)} &= 0\\ \label{zadnja}
    \{\rB^{(1)}_{1,\yb'},\rB^{(1)}_{2,\yb''}\}\rho_B^{(1)} &= 0,
\end{align}
and also
\begin{align}\label{onemore}
    \rB^{(1)}_{2,\yb''}\ket{\Psi} &= \left(\rB^{(1)}_{2,\yb''}\ket{\Psi_1}_{A_1B_2}\right)\bigotimes_{j=2}^N\ket{\Psi_j}_{A_jB_j}\otimes\ket{\tilde{\Psi}}_{\tilde{\mathbb{A}}} .
\end{align}

If equalities $\bra{\Psi}\mathcal{B}_{1,\yb,\yb'}^{(1)} \ket{\Psi} = 2\sqrt{2}$,  $\bra{\Psi}\mathcal{B}_{2,\yb,\yb''}^{(1)} \ket{\Psi} = 2\sqrt{2}$ and $\bra{\Psi}\mathcal{B}_{3,\yb',\yb''}^{(1)} \ket{\Psi} = 2\sqrt{2}$ are satisfied for all $\yb$, $\yb'$ and $\yb''$,  eqs. \eqref{petaa}-\eqref{onemore} hold for all such $\yb$, $\yb'$ and $\yb''$.  
Let us now introduce some notation for the auxiliary party: $\Z_\rB^{(1)} = \rB_{0,\mathbf{0}}^{(1)}$, $\X_\rB^{(1)} = \rB_{1,\mathbf{1}}^{(1)}$ and $\Y_\rB^{(1)} = \rB_{2,\mathbf{2}}^{(1)}$, where $\mathbf{0} = (0,\cdots, 0)$, $\mathbf{1} = (1,\cdots, 1)$ and $\mathbf{2} = (2,\cdots, 2)$. For the $1$-st main party  let us introduce ${\Z'}_\rA^{(1)} = \frac{\rA^{(1)}_0+\rA^{(1)}_1}{\sqrt{2}}$, ${\X'}_\rA^{(1)} = \frac{\rA^{(1)}_0-\rA^{(1)}_1}{\sqrt{2}}$ and ${\Y'}_\rA^{(1)} = \frac{\rA^{(1)}_2-\rA^{(1)}_3}{\sqrt{2}}$. From equations \eqref{petaa}, \eqref{sestaa},\eqref{prvaa}- \eqref{cetvrtaa} we can conclude also that:
\begin{align}
    {\Z'}_\rA^{(1)}\ket{\Psi} &=  \frac{\rA^{(1)}_2 + \rA^{(1)}_3}{\sqrt{2}}\ket{\Psi} \\
    {\X'}_\rA^{(1)}\ket{\Psi} &=  \frac{\rA^{(1)}_4 + \rA^{(1)}_5}{\sqrt{2}}\ket{\Psi} \\
    {\Y'}_\rA^{(1)}\ket{\Psi} &=  \frac{\rA^{(1)}_4 - \rA^{(1)}_5}{\sqrt{2}}\ket{\Psi} .
\end{align}
Similarly, as eqs. \eqref{petaa}-\eqref{cetvrtaa} hold for all $\yb_0$, $\yb'$ and $\yb''$, we can conclude:
\begin{align}
    \Z_{\rB}^{(1)}\ket{\Psi} &= \rB_{0,\yb}^{(1)}\ket{\Psi},\qquad \forall \yb|y_1 = 0, \\
    \X_{\rB}^{(1)}\ket{\Psi} &= \rB_{1,\yb'}^{(1)}\ket{\Psi},\qquad \forall \yb'|y'_1 = 1,\\
    \Y_{\rB}^{(1)}\ket{\Psi} &= \rB_{2,\yb''}^{(1)}\ket{\Psi},\qquad \forall \yb''|y''_1 = 2.\\
\end{align}
The operators ${\Z'}_\rA^{(1)}$, ${\X'}_\rA^{(1)}$ and ${\Y'}_\rA^{(1)}$ are not necessarily unitary, but we can repeat the regularization procedure from the previous proof to show that there exist unitary operators $\Z_\rA^{(1)}$, $\X_\rA^{(1)}$ and $\Y_\rA^{(1)}$ such that:
\begin{align*}
\Z_\rA^{(1)}\ket{\Psi} = {\Z'_\rA}^{(1)}\ket{\Psi}, \qquad \X_\rA^{(1)}\ket{\Psi} = {\X'_\rA}^{(1)}\ket{\Psi}, \qquad \Y_\rA^{(1)}\ket{\Psi} = {\Y'_\rA}^{(1)}\ket{\Psi}.
\end{align*}
An equivalent analysis can be done for correlations between all other main parties an the auxiliary party. Observing maximal violations:
\begin{align}\label{bij-1}
    \bra{\Psi}\mathcal{B}_{1,\mathbf{y},\mathbf{y}'}^{(j)}\ket{\Psi} &= 2\sqrt{2},\\
    \bra{\Psi}\mathcal{B}_{2,\mathbf{y},\mathbf{y}''}^{(j)}\ket{\Psi} &= 2\sqrt{2},\\ \label{bij-3}
    \bra{\Psi}\mathcal{B}_{3,\mathbf{y}',\mathbf{y}''}^{(j)}\ket{\Psi} &= 2\sqrt{2},
\end{align}
%
for all $\mathbf{y}$, $\mathbf{y}'$, $\mathbf{y}''$ and $j = 2, \cdots, N$. In the same way as for $j=1$ we can construct unitary operators $\X_{\rA}^{(j)}$, $\Z_{\rA}^{(j)}$, $\Y_{\rA}^{(j)}$, $\X_{\rB}^{(j)}$, $\Z_{\rB}^{(j)}$, $\Y_{\rB}^{(j)}$ such that:
\begin{align}\label{wuhan}
    \X_{\rA}^{(j)}\ket{\Psi} = \X_{\rB}^{(j)}\ket{\Psi},\qquad 
    &\Z_{\rA}^{(j)}\ket{\Psi} = \Z_{\rB}^{(j)}\ket{\Psi},\qquad
    \Y_{\rA}^{(j)}\ket{\Psi} = \Y_{\rB}^{(j)}\ket{\Psi},\\ \label{dodatnaa}
    \{\X_{\rA}^{(j)},\Z_{\rA}^{(j)}\} =0,\qquad
    &\{\X_{\rA}^{(j)},\Y_{\rA}^{(j)}\} =0,\qquad
    \{\Y_{\rA}^{(j)},\Z_{\rA}^{(j)}\} =0,\\
    \{\X_{\rB}^{(j)},\Z_{\rB}^{(j)}\}\rho_B^{(j)} =0,\qquad
    &\{\X_{\rB}^{(j)},\Y_{\rB}^{(j)}\}\rho_B^{(j)} =0,\qquad
    \{\Y_{\rB}^{(j)},\Z_{\rB}^{(j)}\}\rho_B^{(j)} =0,\\ \label{Bcommutes} P_\rB^{(j)}\ket{\Psi} = &\left(P_\rB^{(j)}\ket{\Psi_j}_{A_jB_j}\right)\bigotimes_{k\neq j}\ket{\Psi_k}_{A_kB_k}\otimes\ket{\tilde{\Psi}}_{\tilde{\mathbb{A}}}\qquad P \in \{\Z,\X,\Y\} \end{align}
eqs. \eqref{Bcommutes} imply:
\begin{equation}\label{Bcommutesall}
   [P_{\rB}^{(j)},Q_{\rB}^{(k)}]\ket{\Psi} = 0, 
\end{equation}
for all $j \neq k$ and $P,Q \in \{\X,\Z,\Y\}$. The operators of different main parties commute since they act non-trivially on different Hilbert spaces. Equipped with all these equations we can directly use the extended SWAP isometry $V_{\rA}\otimes V_{\rB} =  \left(\bigotimes_{j=1}^NV^{(j)}_{\rA}\right)\otimes\left(\prod_{j=1}^N V^{(j)}_{\rB}\right) $; see Fig. \ref{fig:Mpkb} for $V^{(j)} = V^{(j)}_\rA\otimes V_\rB^{(j)}$.
Due to \eqref{Bcommutesall} and the fact that different isometries $V^{(j)}$ all mutually commute we can get the decomposition:
\begin{equation}\label{decompose}
    V_{\rA}\otimes V_{\rB}\left(\ket{\Psi} \otimes \bigotimes_{j=1}^N\ket{0000}_{A'_jB'_jA''_jB''_j}\right) = \bigotimes_{j=1}^NV^{(j)}\left(\ket{\Psi_j}_{A_jB_j} \otimes \ket{0000}_{A'_jB'_jA''_jB''_j}\right)
\end{equation}

A single isometry $V^{(j)}$ acts in the following way:
\begin{align}\label{vj}
    V^{(j)}\left(\ket{\Psi_j}_{A_jB_j}\otimes\ket{0000}_{A_j'A_j''{B}_{j}'{B}_{j}''}\right) &= \ket{\xi_j}_{A_jB_jA_j''{B}_{j}''}\otimes\ket{\phi^+}_{A_j'B_{j}'},\\ \label{vjz}
    V^{(j)}\left(\Z_{\rC}^{(j)}\ket{\Psi_j}_{A_jB_j}\otimes\ket{0000}_{A_j'A_j''B_{j}'B_{j}''}\right) &= \ket{\xi_j}_{A_jB_jA_j''B_{j}''}\otimes{\sigma_z}^{C'_j}\ket{\phi^+}_{A_j'B_{j}'},\\ \label{vjx}
        V^{(j)}\left(\X_{\rC}^{(j)}\ket{\Psi}_{A_jB_j}\otimes\ket{0000}_{A_j'A_j''B_{j}'B_{j}''}\right) &= \ket{\xi_j}_{A_jB_jA_j''B_{j}''}\otimes{\sigma_x}^{C'_j}\ket{\phi^+}_{A_j'B_{j}'},\\ \label{vjy}
         V^{(j)}\left(\Y_{\rC}^{(j)}\ket{\Psi_j}_{A_jB_j}\otimes\ket{0000}_{A_j'A_j''B_{j}'B_{j}''}\right) &= {\sigma_z}^{C_j''}\ket{\xi_j}_{{A}_jB_jA_j''B_{j}''}\otimes{\sigma_y}^{C_j'}\ket{\phi^+}_{A_j'B_{j}'},
\end{align}
where $\rC \in \{\rA,\rB\}$ and  $\ket{\xi_j}$ takes the form 
\begin{align}\label{junk}
\ket{\xi_j} =\ket{\xi^j_0}_{{A}_jB_j}\otimes\ket{00}_{A_j''B_{j}''} + \ket{\xi^j_1}_{{A}_jB_j}\otimes\ket{11}_{A_j''B_{j}''}. 
\end{align}
and
\begin{align}
    \ket{\xi^j_0}_{A_jB_j} &= \frac{1}{2\sqrt{2}}(\idd + i\Y_\rA^{(j)}\X_{\rA}^{(j)})(\idd + \Z_\rA^{(j)})\ket{\Psi_j}_{A_jB_j},\\
    \ket{\xi^j_1}_{A_jB_j} &= \frac{1}{2\sqrt{2}}(\idd - i\Y_\rA^{(j)}\X_{\rA}^{(j)})(\idd + \Z_\rA^{(j)})\ket{\Psi_j}_{A_jB_j}.
\end{align}
eqs. \eqref{vj}-\eqref{vjy} can be used to obtain:
\begin{align} \label{vjzMeas}
    V^{(j)}\left(\frac{\idd \pm \Z_{\rC}^{(j)}}{2}\ket{\Psi_j}_{A_jB_j}\otimes\ket{0000}_{A_j'A_j''{B}_{j}'{B}_{j}''}\right) &= \ket{\xi_j}_{{A}_jB_jA_j''{B}_{j}''}\otimes\frac{\idd\pm\sigma_z^{C'_j}}{2}\ket{\phi^+}_{A_j'{B}_{j}'},\\ \label{vjxMeas}
        V^{(j)}\left(\frac{\idd\pm\X_{\rC}^{(j)}}{2}\ket{\Psi_j}_{A_jB_j}\otimes\ket{0000}_{A_j'A_j''{B}_{j}'{B}_{j}''}\right) &= \ket{\xi_j}_{{A}_jB_jA_j''{B}_{j}''}\otimes\frac{\idd \pm \sigma_x^{C'_j}}{2}\ket{\phi^+}_{A_j'{B}_{j}'},\\ \label{vjyMeas-}
          V^{(j)}\left(\frac{\idd\pm\Y_{\rC}^{(j)}}{2}\ket{\Psi_j}_{A_jB_j}\otimes\ket{0000}_{A_j'A_j''{B}_{j}'{B}_{j}''}\right) &= \frac{\idd + {\sigma_z}^{C_j''}}{2}\ket{\xi_j}_{{A}_jB_jA_j''{B}_{j}''}\otimes\frac{\idd \pm {\sigma_y}^{C_j'}}{2}\ket{\phi^+}_{A_j'{B}_{j}'} + \\  &\qquad\qquad +\frac{\idd - {\sigma_z}^{C_j''}}{2}\ket{\xi_j}_{{A}_jB_jA_j''{B}_{j}''}\otimes\frac{\idd \mp {\sigma_y}^{C_j'}}{2}\ket{\phi^+}_{A_j'{B}_{j}'},
\end{align}
where again $\rC \in \{\rA,\rB\}$. These eqs. correspond to the self-testing of measurements. As we see, performing measurements $\Z$, $\X$ and $\Y$ on the physical state is equivalent to measuring $\sigma_z$, $\sigma_x$ and $\sigma_y$ (up to complex conjugation) on the extracted reference state. 
Finally, eqs. \eqref{decompose}-\eqref{vjy} imply:
\begin{align}\label{foound}
    V\left(\ket{\Psi}_{\mathbb{A}\mathbb{B}}\otimes\ket{0...0}_{\mathbb{A}'\mathbb{A}''{\mathbb{B}}'{\mathbb{B}}''}\right) &= \ket{\mathrm{aux}}_{\mathbb{A}\mathbb{B}\mathbb{A}''{\mathbb{B}}''}\bigotimes_{j=0}^{N-1}\ket{\phi^+}_{A_j'B_{j}'}\\ \label{STNpairs}
    V\left(\bigotimes_{j=1}^N\W_\rC^{(j)}\ket{\Psi}_{\mathbb{A}\mathbb{B}}\otimes\ket{0...0}_{\mathbb{A}'\mathbb{A}''{\mathbb{B}}'{\mathbb{B}}''}\right) &= {\sigma_z^{q(w)}}^{C_j''}\ket{\mathrm{aux}}_{\mathbb{A}\mathbb{B}\mathbb{A}''{\mathbb{B}}''}\bigotimes_{j=1}^{N}{\sigma_w}^{C_j'}\ket{\phi^+}_{A_j'B_{j}'},
\end{align}
where $\W \in \{\X,\Y,\Z\}$, $\rC \in \{\rA,\rB\}$ and $q(x) = q(z) = 0$ while $q(y) = 1$, $\mathbb{A}' = A'_1,\cdots, A_N'$,$\mathbb{}'' = A''_1,\cdots, A_N''$, $\mathbb{B}' = B'_1,\cdots, B_N'$,$\mathbb{B}'' = B''_1,\cdots, B_N''$, .
The junk state $\ket{\mathrm{aux}}$ has the following form:
\begin{equation}
\ket{\mathrm{aux}} = \sum_\iota\ket{\xi_\iota}_{\mathbb{A}\mathbb{B}}\ket{\iota}_{\mathbb{A}''}\ket{\iota}_{{\mathbb{B}}''},
\end{equation}
where $\iota \in (0,1)^N$, and
\begin{equation}
    \ket{\xi_\iota}_{\mathbb{A}\mathbb{B}} = \bigotimes_{j = 1}^N\ket{\xi_{\iota(j)}}_{A_jB_j}
\end{equation}
where $\iota(j)$ is the $j$-th element of $\iota$.

Up to now we have not taken into account the correlations of the measurements denoted with $\lozenge$ and $\blacklozenge$. The isometry maps the expression $\otimes_{j=1}^N\Y^{(j)}_{\rA}\ket{\Psi}$ to a coherent superposition of $2^N$ expressions, with different elements of the superposition being labeled with a particular string $\iota$. Whether the extracted observable corresponding to $\Y^{(j)}_{\rA}$ is $\sigma_y$ or $\sigma_y^*=-\sigma_y$  in an element of the superposition depends on the corresponding value $\iota(j)$: if $\iota(j) = 0$ there is no complex conjugation (\emph{i.e.} $\sigma_y$ is extracted), while $\iota(j) = 1$ implies that the complex conjugation is applied (\emph{i.e.} $\sigma_y^*$ is extracted). The complex conjugations of the extracted measurements corresponding to different parties are independent from each other. The correlations of the measurements $\lozenge$ and $\blacklozenge$ will remove this independence, and only two possibilities will remain: either only the elements of the superposition corresponding to extracted $\sigma_y$ will remain, or only those corresponding to extracted $\sigma_y^*$.  For this we reproduce the proof from \cite{Bowles_2018}.
Remember that in the ideal case the measurements denoted with $\lozenge$ and $\blacklozenge$ are the Bell state measurements: when the input is $\lozenge$ the auxiliary party should measure in the Bell basis the following pairs of qubits $(1,2)$, $(3,4)$, $\cdots,(N-1,N)$, while the input $\blacklozenge$ "tells" the auxiliary party to measure in the Bell basis qubit-pairs $(2,3)$, $(4,5)$, $\cdots, (N,1)$. When the auxiliary party receives the outcome corresponding to the projector $\proj{\Phi_{{+}}}$ applied on systems $1$ and $2$, the two corresponding main parties should have the perfect correlation between physical observables corresponding to reference observables $\sigma_{z}$ or $\sigma_{x}$ and perfectly anticorrelated  physical observables corresponding to reference observable $\sigma_{{y}}$. Notice that complex conjugation of $\sigma_{{y}}$ is equivalent to swapping the outcomes since $\sigma_{{y}}^* = -\sigma_{{y}}$. But if the main parties do this to their observables corresponding to $\sigma_{{y}}$ independently from each other, the correlations compatible with Bell state measurement outcomes will be lost. Note that the main parties are never asked to measure exactly Pauli observables but we can still work with correlations of their observables whose linear combinations give exactly Pauli observables. 

For $m=1,\cdots,n$ let us define the following operators
\begin{align}
\Ss_{m,b^*}=\sum_{\bb:b_m=b^*}\N^{(j)}_{\bb|\lozenge} , \quad \T_{m,b^*}=\sum_{\bb:b_m=b^*}\N^{(j)}_{\bb|\blacklozenge},
\end{align}
which are the projectors onto the the subspace corresponding to $b_m=b^*$ for the two measurements.

\def\arraystretch{1.5}
\begin{table}\label{bsmtab1}
\begin{tabular}{ |c | c | c | c | c |} 
\hline
 & $\quad \idd \quad$ &$\Z_{\rA}^{(2m-1)} \Z_{\rA}^{(2m)}$ & $\X^{(2m-1)}_{\rA}\X^{(2m)}_{\rA}$ & $\Y^{(2m-1)}_{\rA}\Y^{(2m)}_{\rA}$ \\ 
\hline
$\Ss_{m,00}$&$\frac{1}{4}$ &$\frac{1}{4}$ & $\frac{1}{4}$ & $-\frac{1}{4}$ \\ 
\hline
$\Ss_{m,01}$& $\frac{1}{4}$ &$\frac{1}{4}$ & $-\frac{1}{4}$ & $\frac{1}{4}$ \\ 
\hline
$\Ss_{m,10}$&$\frac{1}{4}$ & $-\frac{1}{4}$ & $\frac{1}{4}$  & $\frac{1}{4}$ \\ 
\hline
$\Ss_{m,11}$ & $\frac{1}{4}$ &$-\frac{1}{4}$ & $-\frac{1}{4}$  & $-\frac{1}{4}$ \\ 
\hline
\end{tabular}\qquad\qquad
\begin{tabular}{ |c | c | c | c | c |} 
\hline
 & $\quad \idd \quad$ &$\Z^{(2m)}_{\rA} \Z^{(2m+1)}_{\rA}$ & $\X^{(2m)}_{\rA}\X^{(2m+1)}_{\rA}$ & $\Y^{(2m)}_{\rA}\Y^{(2m+1)}_{\rA}$ \\ [0.6ex]
\hline
$\T_{m,00}$&$\frac{1}{4}$ &$\frac{1}{4}$ & $\frac{1}{4}$ & $-\frac{1}{4}$ \\ 
\hline
$\T_{m,01}$& $\frac{1}{4}$ &$\frac{1}{4}$ & $-\frac{1}{4}$ & $\frac{1}{4}$ \\ 
\hline
$\T_{m,10}$&$\frac{1}{4}$ & $-\frac{1}{4}$ & $\frac{1}{4}$  & $\frac{1}{4}$ \\ 
\hline
$\T_{m,11}$ & $\frac{1}{4}$ &$-\frac{1}{4}$ & $-\frac{1}{4}$  & $-\frac{1}{4}$ \\ 
\hline
\end{tabular}
\caption{\label{tablecors} Elements of the table give correlation $\bra{\Psi}C\otimes R\ket{\Psi}$ where $C$ is the operator labelling the column and $R$ the operator labelling the row.}
\end{table}
In what follows we explore the constraints on the physical experiment imposed by the correlations given in Table \ref{tablecors}. The norm of the states $\Ss_{m,b}\ket{\Psi}$ and $\T_{m,b}\ket{\Psi}$ is equal to $\frac{1}{2}$. Hence, we can write
\begin{equation}\label{sim}
\Ss_{m,00}\ket{\Psi} \sim \frac{1}{4}\left(\ket{\Psi} +  \Z_{\rA}^{(2m-1)}\Z_{\rA}^{(2m)}\ket{\Psi} + \X_{\rA}^{(2m-1)}\X_{\rA}^{(2m)}\ket{\Psi} - \Y_{\rA}^{(2m1)}\Y_{\rA}^{(2m)}\ket{\Psi}\right).
\end{equation}
Note that the set of states $\{\ket{\Psi}, \Z_{\rA}^{(2m-1)}\Z_{\rA}^{(2m)}\ket{\Psi}, \X_{\rA}^{(2m-1)}\X_{\rA}^{(2m)}\ket{\Psi},\Y_{\rA}^{(2m-1)}\Y_{\rA}^{(2m)}\ket{\Psi}\}$ is orthonormal and it can form a basis for $\mathcal{H}^{A_{2m-1}}\tp \mathcal{H}^{A_{2m}}\tp\mathcal{H}^{B_{2m-1}}\tp \mathcal{H}^{B_{2m}}$. Moreover $\Ss_{m,0}\ket{\Psi}$  has the same norm as the expression from the right hand side of $\sim$  in eq. (\ref{sim})  which implies
\begin{equation}\label{M0}
\Ss_{m,00}\ket{\Psi} = \frac{1}{4}\left(\ket{\Psi} +  \Z_{\rA}^{(2m-1)}\Z_{\rA}^{(2m)}\ket{\Psi} + \X_{\rA}^{(2m-1)}\X_{\rA}^{(2m)}\ket{\Psi} - \Y_{\rA}^{(2m-1)}\Y_{\rA}^{(2m)}\ket{\Psi}\right).
\end{equation}
In the same way we can obtain the following equations:
\begin{eqnarray}\label{M1}
\Ss_{m,01}\ket{\Psi} &=& \frac{1}{4}\left(\ket{\Psi} +  \Z_{\rA}^{(2m-1)}\Z_{\rA}^{(2m)}\ket{\Psi} - \X_{\rA}^{(2m-1)}\X_{\rA}^{(2m)}\ket{\Psi} + \Y_{\rA}^{(2m-1)}\Y_{\rA}^{(2m)}\ket{\Psi}\right),\\ \label{M2}
\Ss_{m,10}\ket{\Psi} &=& \frac{1}{4}\left(\ket{\Psi} -  \Z_{\rA}^{(2m-1)}\Z_{\rA}^{(2m)}\ket{\Psi} + \X_{\rA}^{(2m-1)}\X_{\rA}^{(2m)}\ket{\Psi} + \Y_{\rA}^{(2m-1)}\Y_{\rA}^{(2m)}\ket{\Psi}\right),\\ \label{M3}
\Ss_{m,11}\ket{\Psi} &=& \frac{1}{4}\left(\ket{\Psi} -  \Z_{\rA}^{(2m-1)}\Z_{\rA}^{(2m)}\ket{\Psi} - \X_{\rA}^{(2m-1)}\X_{\rA}^{(2m)}\ket{\Psi} - \Y_{\rA}^{(2m-1)}\Y_{\rA}^{(2m)}\ket{\Psi}\right),\\ \label{N0}
\T_{m,00}\ket{\Psi} &=& \frac{1}{4}\left(\ket{\Psi} +  \Z_{\rA}^{(2m-1)}\Z_{\rA}^{(2m)}\ket{\Psi} + \X_{\rA}^{(2m-1)}\X_{\rA}^{(2m)}\ket{\Psi} - \Y_{\rA}^{(2m-1)}\Y_{\rA}^{(2m)}\ket{\Psi}\right),\\  \label{N1}
\T_{m,01}\ket{\Psi} &=& \frac{1}{4}\left(\ket{\Psi} +  \Z_{\rA}^{(2m-1)}\Z_{\rA}^{(2m)}\ket{\Psi} - \X_{\rA}^{(2m-1)}\X_{\rA}^{(2m)}\ket{\Psi} + \Y_{\rA}^{(2m-1)}\Y_{\rA}^{(2m)}\ket{\Psi}\right),\\  \label{N2}
\T_{m,10}\ket{\Psi} &=& \frac{1}{4}\left(\ket{\Psi} -  \Z_{\rA}^{(2m-1)}\Z_{\rA}^{(2m)}\ket{\Psi} + \X_{\rA}^{(2m-1)}\X_{\rA}^{(2m)}\ket{\Psi} + \Y_{\rA}^{(2m-1)}\Y_{\rA}^{(2m)}\ket{\Psi}\right),\\  \label{N3}
\T_{m,11}\ket{\Psi} &=& \frac{1}{4}\left(\ket{\Psi} -  \Z_{\rA}^{(2m-1)}\Z_{\rA}^{(2m)}\ket{\Psi} - \X_{\rA}^{(2m-1)}\X_{\rA}^{(2m)}\ket{\Psi} - \Y_{\rA}^{(2m-1)}\Y_{\rA}^{(2m)}\ket{\Psi}\right).
\end{eqnarray}

Equations (\ref{M0}-\ref{M3}) can be used to derive
\begin{subequations}
\begin{eqnarray}\label{Zz}
\Z_{\rA}^{(2m-1)}\Z_{\rA}^{(2m)}\ket{\Psi} &=& \left(\Ss_{m,00} + \Ss_{m,01} - \Ss_{m,10} - \Ss_{m,11}\right)\ket{\Psi}, \\ \label{Xx}
\X_{\rA}^{(2m-1)}\X_{\rA}^{(2m)}\ket{\Psi} &=& \left(\Ss_{m,00} - \Ss_{m,01} + \Ss_{m,10} - \Ss_{m,11}\right)\ket{\Psi}, \\ \label{Yy}
\Y_{\rA}^{(2m-1)}\Y_{\rA}^{(2m)}\ket{\Psi} &=& \left(-\Ss_{m,00} + \Ss_{m,01} + \Ss_{m,10} - \Ss_{m,11}\right)\ket{\Psi}.
\end{eqnarray}
\end{subequations}
The set $\{\Ss_{m,b}\}_{m,b}$ is orthogonal and all its elements commute with all the operators from $\{\Z_{\rA}^{(j)},\X_{\rA}^{(j)}\}_{j}$, so the last set of equations implies
\begin{eqnarray}\nonumber
\X_{\rA}^{(2m-1)}\X_{\rA}^{(2m)}\Z_{\rA}^{(2m-1)}\Z_{\rA}^{(2m)}\ket{\Psi} &=& \X_{\rA}^{(2m-1)}\X_{\rA}^{(2m)}\left(\Ss_{m,00} + \Ss_{m,01} - \Ss_{m,10} - \Ss_{m,11}\right)\ket{\Psi}\\ \nonumber
&=& \left(\Ss_{m,00} + \Ss_{m,01} - \Ss_{m,10} - \Ss_{m,11}\right)\left(\Ss_{m,00} - \Ss_{m,01} + \Ss_{m,10} - \Ss_{m,11}\right)\ket{\Psi}\\ \nonumber
&=& \left(\Ss_{m,00}- \Ss_{m,01} - \Ss_{m,10} + \Ss_{m,11} \right)\ket{\Psi}\\ \label{XZY}
&=& -\Y_{\rA}^{(2m-1)}\Y_{\rA}^{(2m)}\ket{\Psi}.
\end{eqnarray}
Similarly, from equations (\ref{N0}-\ref{N3}) we get
\begin{equation}\label{XZY+1}
\X_{\rA}^{(2m-1)}\X_{\rA}^{(2m)}\Z_{\rA}^{(2m-1)}\Z_{\rA}^{(2m)}\ket{\Psi} = -\Y_{\rA}^{(2m-1)}\Y_{\rA}^{(2m)}{\Psi}.
\end{equation}
Let us check how eq. (\ref{XZY}) for $m=1$ affects vector $\ket{\xi_\iota} = \tp_{j=1}^n(\idd + (-1)^{\iota(j)}i\Y_{\rA}^{(j)}\X_{\rA}^{(j)})(\idd + \Z_{\rA}^{(j)})\ket{\Psi}$:
\begin{equation*}
\ket{\xi_\iota} = L_{\textrm{rest}}\tp\left(\idd + (-1)^{\iota(1)}i\Y_{\rA}^{(1)}\X_{\rA}^{(1)}\right)\left(\idd + \Z_{\rA}^{(1)}\right)\tp\left(\idd + (-1)^{\iota(2)}i\Y_{\rA}^{(2)}\X_{\rA}^{(2)}\right)\left(\idd + \Z_{\rA}^{(2)}\right) \ket{\Psi}
\end{equation*}
where $L_{\textrm{rest}} = \tp_{j=3}^N (\idd + (-1)^{\iota(j)}i\Y_{\rA}^{(j)}\X_{\rA}^{(j)})(\idd + \Z_{\rA}^{(j)})$. Without loss of generality assume $\iota(1) \neq \iota(2)$. In the following expression for the sake of  brevity we omit $L_{\textrm{rest}}$. Then $\ket{\xi_\iota}$ reads
\begin{eqnarray*}
\ket{\Psi} \pm i\Y_{\rA}^{(2)}\X_{\rA}^{(2)}{\Psi} &+& \Z_{\rA}^{(2)}\ket{\Psi} \pm i\Y_{\rA}^{(2)}\X_{\rA}^{(2)}\Z_{\rA}^{(2)}\ket{\Psi}
\mp i\Y_{\rA}^{(2)}\X_{\rA}^{(2)}\ket{\Psi} + \Y_{\rA}^{(1)}\X_{\rA}^{(1)}\Y_{\rA}^{(2)}\X_{\rA}^{(2)}\ket{\Psi} \mp i\Y_{\rA}^{(1)}\X_{\rA}^{(1)}\Z_{\rA}^{(2)}\ket{\Psi} + \\ + \Y_{\rA}^{(1)}\X_{\rA}^{(1)}\Y_{\rA}^{(2)}\X_{\rA}^{(2)}\Z_{\rA}^{(2)}\ket{\Psi} &+& 
\Z_{\rA}^{(1)}\ket{\Psi} \pm i\Z_{\rA}^{(1)}\Y_{\rA}^{(2)}\X_{\rA}^{(2)}\ket{\Psi} + \Z_{\rA}^{(1)}\Z_{\rA}^{(2)}\ket{\Psi} \pm  i\Z_{\rA}^{(1)}\Y_{\rA}^{(2)}\X_{\rA}^{(2)}\Z_{\rA}^{(2)}\ket{\Psi}
\mp i\Y_{\rA}^{(1)}\X_{\rA}^{(1)}\Z_{\rA}^{(1)}\ket{\Psi} + \\ + \Y_{v}^{(1)}\X_{\rA}^{(1)}\Z_{\rA}^{(1)}\Y_{\rA}^{(2)}\X_{\rA}^{(2)}\ket{\Psi} &+& \mp i\Y_{\rA}^{(1)}\X_{\rA}^{(1)}\Z_{\rA}^{(1)}\Z_{\rA}^{(2)}\ket{\Psi} + \Y_{\rA}^{(1)}\X_{\rA}^{(1)}\Z_{\rA}^{(1)}\Y_{\rA}^{(2)}\X_{\rA}^{(2)}\Z_{\rA}^{(2)}\ket{\Psi}.
\end{eqnarray*}
Let us rearrange eq. (\ref{XZY}) for the case $m=1$. It can be as a sum of eight terms each being equal to $0$ as shown below.
\begin{align}\label{lancel}
\begin{split}
\ket{\Psi} + \Y_{\rA}^{(1)}\X_{\rA}^{(1)}\Z_{\rA}^{(1)}\Y_{\rA}^{(2)}\X_{\rA}^{(2)}\Z_{\rA}^{(2)}\ket{\Psi} = 0, &\qquad  \Y_{\rA}^{(2)}\X_{\rA}^{(2)}\ket{\Psi} + \Y_{\rA}^{(1)}\X_{\rA}^{(1)}\Z_{\rA}^{(1)}\Z_{\rA}^{(2)}\ket{\Psi} = 0, \\
\Z_{\rA}^{(2)}\ket{\Psi} + \Y_{\rA}^{(1)}\X_{\rA}^{(1)}\Z_{\rA}^{(1)}\Y_{\rA}^{(2)}\X_{\rA}^{(2)}\ket{\Psi} = 0, &\qquad  \Y_{\rA}^{(2)}\X_{\rA}^{(2)}\Z_{\rA}^{(2)}\ket{\Psi} + \Y_{\rA}^{(1)}\X_{\rA}^{(1)}\Z_{\rA}^{(1)}\ket{\Psi} = 0, \\ 
\Y_{\rA}^{(1)}\X_{\rA}^{(1)}\ket{\Psi} + \Z_{\rA}^{(1)}\Y_{\rA}^{(2)}\X_{\rA}^{(2)}\Z_{\rA}^{(2)}\ket{\Psi} = 0, &\qquad  \Y_{\rA}^{(1)}\X_{\rA}^{(1)}\Y_{\rA}^{(2)}\X_{\rA}^{(2)}\ket{\Psi} + \Z_{\rA}^{(1)}\Z_{\rA}^{(2)}\ket{\Psi} = 0, \\
\Y_{\rA}^{(1)}\X_{\rA}^{(1)}\Z_{\rA}^{(2)}\ket{\Psi} + \Z_{\rA}^{(1)}\Y_{\rA}^{(2)}\X_{\rA}^{(2)}\ket{\Psi} = 0,&\qquad  \Y_{\rA}^{(1)}\X_{\rA}^{(1)}\Y_{\rA}^{(2)}\X_{\rA}^{(2)}\Z_{\rA}^{(2)}\ket{\Psi} + \Z_{\rA}^{(1)}\ket{\Psi} = 0.
\end{split}
\end{align}
To obtain these equations we used eq. (\ref{XZY}), commutation relations between the operators of different main parties, expressions \eqref{dodatnaa},  and the fact that operators $R_{\rA}^{(k)}$ for $R \in \{X,Y,Z\}$ are unitary and Hermitian.\\ 

Assuming $\iota(1) \neq \iota(2)$ implies $\ket{\xi_\iota} = 0$. Analogously,  eq. (\ref{XZY}) can be used to prove that $\ket{\iota} = 0$ if there exists $m$ such that $\iota(2m-1) \neq \iota(2m)$.  Similarly, starting from eq. (\ref{XZY+1}) one can show that $\ket{\xi_{\iota}} = 0$ if there exists a value of $m$ such that $\iota(2m) \neq \iota(2m+1)$. The only two states $\ket{\iota}$ which satisfy $\iota(2m-1) = \iota(2m) = \iota(2m+1)$ are $\ket{\iota} = \ket{0\dots 0}$ and $\ket{\iota} = \ket{1\dots 1}$. This means that
\begin{equation}\label{last}
\ket{\textrm{aux}} = \ket{\xi_{0}}_{\mathbb{A}\mathbb{B}}\tp\ket{0\dots 0}_{\mathbb{A}''\mathbb{B}''} + \ket{\xi_{1}}_{\mathbb{A}\mathbb{B}}\tp\ket{1\dots 1}_{\mathbb{A}\mathbb{B}''},
\end{equation}
With this form of the junk state $\ket{\textrm{aux}}$ we see that the right hand side of eq. \eqref{foound} is an entangled state across partition $A_1B_1|A_2B_2|\cdots|A_NB_N$, while the left hand side is a product state across the same bipartition. This implies that either $\ket{\xi_0} = 0$ or $\ket{\xi_1} = 0$. With this insight the final self-testing expression is:

\begin{align}\label{foound1}
    V\left(\ket{\Psi}_{\mathbb{A}\mathbb{B}}\otimes\ket{0\cdots 0}_{\mathbb{A}'\mathbb{A}''{\mathbb{B}}'{\mathbb{B}}''}\right) &= \begin{cases}
    \ket{\xi_0}_{\mathbb{A}\mathbb{B}}\ket{0\cdots0}_{\mathbb{A}''{\mathbb{B}}''}\bigotimes_{j=0}^{N-1}\ket{\phi^+}_{A_j'B_{j}'}\\
    \ket{\xi_1}_{\mathbb{A}\mathbb{B}}\ket{1\cdots1}_{\mathbb{A}''{\mathbb{B}}''}\bigotimes_{j=0}^{N-1}\ket{\phi^+}_{A_j'B_{j}'}
    \end{cases}
    \\ \label{STNpairs1}
    V\left(\bigotimes_{j=1}^N\W_\rB^{(j)}\ket{\Psi}_{\mathbb{A}\mathbb{B}}\otimes\ket{0\cdots 0}_{\mathbb{A}'\mathbb{A}''{\mathbb{B}}'{\mathbb{B}}''}\right) &= \begin{cases} \ket{\xi_0}_{\mathbb{A}\mathbb{B}}\ket{0\cdots0}_{\mathbb{A}''{\mathbb{B}}''}\bigotimes_{j=1}^{N}\sigma_w^{B_j'}\ket{\phi^+}_{A_j'B_{j}'},\\
    \ket{\xi_1}_{\mathbb{A}\mathbb{B}}\ket{1\cdots1}_{\mathbb{A}''{\mathbb{B}}''}\bigotimes_{j=1}^{N}{\sigma^*_w}^{B_j'}\ket{\phi^+}_{A_j'B_{j}'},
    \end{cases}
\end{align}
To conclude, in this appendix we showed that $N$ maximally entangled pairs of qubits can be extracted if the auxiliary party maximally violates three CHSH inequalities with each of the main parties. The same isometry that extracts $N$ maximally entangled pairs extracts either the reference measurements, or their complex conjugates, but it is impossible to distinguish between the two cases.

\section{A useful proposition}\label{app:Step2}

In this section, we prove an important proposition dealing with quantum state tomography with partially characterised measurements. In the previous sections we saw that measurements can be self-tested only up to complex conjugation. Consider measurement correlations obtained by applying a tomographically complete set of measurements on a pure entangled state $\ket{\tilde{\psi}}$. The proposition tells that if the same correlations are obtained when all the measurements from the tomographically complete set are characterized up to complex conjugation, the underlying state is in the most general case a coherently controlled superposition of $\ket{\tilde{\psi}}$ and $\ket{\tilde{\psi}^*}$. 
This proposition is necessary for obtaining the final self-testing result. Before stating the proposition, let us introduce some notation. 
First, given an (unknown) reference state $\ket{\tilde{\psi}}$ and measurements $\{\{\M'_{a_i|x_i}\}_{a_i}\}_{i,x_i}$, let us define the reference behaviour:
\begin{equation}
    \tilde{p}(\ab|\xb) = \bra{\tilde{\psi}}\M'_{a_1|x_1}\tp \cdots \tp \M'_{a_N|x_N}\ket{\tilde{\psi}}.
\end{equation}
If the (known) set of measurements $\{\{\M'_{a_i|x_i}\}_{a_i}\}_{i,x_i}$  is tomographically complete, knowing the full probability distribution $\{\tilde{p}(\ab|\xb)\}$ allows us to retrieve the exact form of the reference state $\ket{\tilde{\psi}}$. For the fixed set of measurements $\{\{\M'_{a_i|x_i}\}_{a_i}\}_{i,x_i}$ define the set of quantum realisable behaviours 
\begin{equation}
P_Q\left(\{\M'_{a_i|x_i}\}\right) = \bigg\{\{\tilde{p}(\ab|\xb)\}\bigg|\exists \rho \geq 0, \textrm{tr}\rho = 1, \tilde{p}(\ab|\xb) = \textrm{tr}\left[\left(\M'_{a_1|x_1}\tp \cdots \tp \M'_{a_N|x_N}\right)\rho\right] \bigg\}.
\end{equation}
Note that $P_Q(\{\M'_{a_i|x_i}\})$ is convex and its extremal points are the behaviours obtained by measuring pure states. Before stating the theorem, let us recall that the pure state $\ket{\tilde{\psi}} \in \mathcal{H}^{\rA_1}\otimes\cdots\otimes\mathcal{H}^{A_N}$ is genuinely multipartite entangled~\cite{Seevinck2001} if it is not product across any bipartition.  

\begin{tcolorbox}
\begin{prop}[Formal version]\label{prop}
Consider an unknown state $\rho^{\mathbb{A}'\mathbb{A}\mathbb{A}''}$  and  local measurements of the form 
\begin{equation}
    \M_{a_i|x_i} = {\M'}_{a_i|x_i}^{\rA'_i}\tp{\hat{\M}_{+}}^{{\rA}_i{\rA}''_i} + {{\M'}^*}_{a_i|x_i}^{\rA_i}\tp{\hat{\M}_{-}}^{{\rA}_i{\rA}''_i},
\end{equation}
where the set $\{\M'_{a_i|x_i}\}$ is fully characterized tomographically complete and $\M_+ + \M_- = \idd$, that give rise to the behaviour $\{\tilde{p}(\ab|\xb)\}$. If there exists a genuinely multipartite entangled state $\ket{\tilde{\psi}}$ such that
\begin{equation}\label{escofier}
    \tilde{p}(\ab|\xb) = \bra{\tilde{\psi}}\M'_{a_1|x_1}\tp \cdots \tp \M'_{a_N|x_N}\ket{\tilde{\psi}}, \qquad \forall\ab,\xb
\end{equation}
the purification of $\rho^{\mathbb{A}'\mathbb{A}\mathbb{A}''}$ must be  
\begin{equation}
    \ket{\tilde{\Psi}}^{\mathbb{A}'\mathbb{A}\mathbb{A}''\rP} = \ket{\tilde{\psi}}^{\mathbb{A}'}\tp\ket{\xi_+}^{\mathbb{A}\mathbb{A}''\rP} + \ket{\tilde{\psi}^*}^{\mathbb{A}'}\tp\ket{\xi_-}^{\mathbb{A}\mathbb{A}''\rP},
\end{equation}
where $\braket{\xi_+}{\xi_-} = 0$ and $\|\ket{\xi_+}\| + \|\ket{\xi_-}\| = 1$.
\end{prop}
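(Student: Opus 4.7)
The strategy is to exploit the coherent-conjugation structure of the measurements together with tomographic completeness to pin down the possible purifications of $\rho$, using GME of $\ket{\tilde\psi}$ at the very end to collapse the decomposition to two terms.

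\textbf{Step 1 — Flag decomposition of the purification.} I would first read each local measurement $\M_{a_i|x_i} = {\M'}_{a_i|x_i}\tp\hat{\M}_+ + {\M'}^*_{a_i|x_i}\tp\hat{\M}_-$ as a two-stage process: party $i$ measures the flag $\hat{\M}_\pm$ on $\rA_i\rA_i''$ with outcome $\iota_i \in \{+,-\}$, then applies ${\M'}_{a_i|x_i}$ or ${\M'}^*_{a_i|x_i}$ on $\rA_i'$ according to $\iota_i$. After Naimark-dilating so that $\hat{\M}_\pm$ become orthogonal projectors, any purification decomposes as $\ket{\tilde\Psi}=\sum_{\iota\in\{+,-\}^N}\ket{\Psi_\iota}$ with $\ket{\Psi_\iota}=(\idd^{\mathbb{A}'}\tp\bigotimes_i\hat{\M}_{\iota_i}\tp\idd^\rP)\ket{\tilde\Psi}$ supported on mutually orthogonal flag subspaces.

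\textbf{Step 2 — Tomographic equation.} Let $\rho_\iota = \tr_{\mathbb{A}\mathbb{A}''\rP}\proj{\Psi_\iota}\geq 0$ on $\mathbb{A}'$. Using $\tr[\sigma\,{\M'}^*] = \tr[\sigma^T \M']$ on each party $i$ with $\iota_i=-$, I would rewrite the observed correlations as $\tr\!\big[(\sum_\iota \rho_\iota^{T_\iota})\bigotimes_i {\M'}_{a_i|x_i}\big]$, where $T_\iota$ denotes partial transpose on the parties with $\iota_i=-$. Matching these with $\bra{\tilde\psi}\bigotimes_i {\M'}_{a_i|x_i}\ket{\tilde\psi}$ for all $\ab,\xb$ and invoking tomographic completeness of $\{{\M'}_{a_i|x_i}\}$ yields the operator identity
\begin{equation*}
\sum_\iota \rho_\iota^{T_\iota} = \proj{\tilde\psi}.
\end{equation*}

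\textbf{Step 3 — Eliminating mixed $\iota$ (the main obstacle).} The hard part is to argue that the GME assumption on $\ket{\tilde\psi}$ forces $\rho_\iota=0$ for every $\iota \notin \{+^N,-^N\}$. For $\iota=+^N$ and $\iota=-^N$ the summand $\rho_\iota^{T_\iota}$ stays PSD (full transpose preserves positivity), while for any mixed $\iota$ it is a non-trivial partial transpose across the bipartition $(\iota^{-1}(+),\iota^{-1}(-))$ and generally fails positivity. Isolating the PSD summands one writes $\rho_{+^N}+\rho_{-^N}^T = \proj{\tilde\psi} - \sum_{\iota \text{ mixed}}\rho_\iota^{T_\iota}$, so the right-hand side must itself be PSD and dominated by the rank-one projector $\proj{\tilde\psi}$. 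Since $\ket{\tilde\psi}$ is GME, it is NPT across \emph{every} non-trivial bipartition $(\iota^{-1}(+),\iota^{-1}(-))$; I would exploit this by testing the identity on product vectors of the form $\ket{\alpha_S}\tp\ket{\beta_{S^c}^*}$ (for which $\bra{\alpha\beta^*}\rho_\iota^{T_{S^c}}\ket{\alpha\beta^*}=\bra{\alpha\beta}\rho_\iota\ket{\alpha\beta}\geq 0$) and arguing that a non-zero mixed $\rho_\iota$ would either force $\rho_\iota$ to be separable across its partition — producing, via $\rho_\iota^{T_\iota}$, a rank-one product contribution incompatible with the GME structure of $\proj{\tilde\psi}$ — or introduce negative eigenvalues that cannot be cancelled by the remaining mixed terms without violating the rank-one bound.

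\textbf{Step 4 — Reading off the purification form.} Once the mixed-$\iota$ terms vanish, $\rho_{+^N}+\rho_{-^N}^T = \proj{\tilde\psi}$ is a sum of two PSD operators equal to a rank-one projector; extremality in the PSD cone then gives $\rho_{+^N}=\lambda_+\proj{\tilde\psi}$ and $\rho_{-^N}^T=\lambda_-\proj{\tilde\psi}$ with $\lambda_++\lambda_-=1$, hence $\rho_{-^N}=\lambda_-\proj{\tilde\psi^*}$. Because $\ket{\Psi_{+^N}}$ and $\ket{\Psi_{-^N}}$ purify rank-one reduced states on $\mathbb{A}'$, the uniqueness of purification (up to isometry on the purifying system) produces $\ket{\Psi_{+^N}}=\ket{\tilde\psi}\tp\ket{\xi_+}$ and $\ket{\Psi_{-^N}}=\ket{\tilde\psi^*}\tp\ket{\xi_-}$ with $\|\ket{\xi_\pm}\|^2=\lambda_\pm$, while $\braket{\xi_+}{\xi_-}=0$ is inherited from the orthogonality of the $+^N$ and $-^N$ flag subspaces on $\mathbb{A}\mathbb{A}''$. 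Reassembling $\ket{\tilde\Psi}=\ket{\Psi_{+^N}}+\ket{\Psi_{-^N}}$ yields the announced form.
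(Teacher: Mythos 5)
Your Steps 1, 2 and 4 follow the paper's proof essentially verbatim: the flag decomposition of the purification, the transpose trick $\tr[\sigma {\M'}^{*}]=\tr[\sigma^{T}\M']$, tomographic completeness yielding the operator identity $\sum_\iota\rho_\iota^{T_\iota}=\proj{\tilde\psi}$, and the final extremality-plus-uniqueness-of-purification argument are all sound and match the paper. The genuine gap is exactly where you flag it, in Step 3. The argument you sketch --- that a non-zero mixed $\rho_\iota$ would ``introduce negative eigenvalues that cannot be cancelled by the remaining mixed terms without violating the rank-one bound'' --- is not a proof: a priori the negative parts of several mixed terms $\rho_\iota^{T_\iota}$ could cancel against one another or be absorbed by the two positive terms, and testing the identity on product vectors $\ket{\alpha_S}\otimes\ket{\beta^{*}_{S^c}}$ only gives non-negativity of each summand on a product subset, which controls neither the negative spectrum nor the support of the mixed terms.

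The missing ingredient is the spectral lemma on partial transposes that the paper proves first: if $\ket{\phi}=\sum_i\lambda_i\ket{s_i}\ket{r_i}$, then $\proj{\phi}^{T_B}$ has eigenvalues $\lambda_i^2$ and $\pm\lambda_i\lambda_{i'}$, whence for any density operator $\rho$ one has $\lambda_{\max}(\rho^{T_B})\le 1$ with equality only if $\rho$ is separable across the cut. Sandwiching $\sum_\iota\rho_\iota^{T_\iota}=\proj{\tilde\psi}$ with $\proj{\tilde\psi}$ gives $\sum_\iota\bra{\tilde\psi}\rho_\iota^{T_\iota}\ket{\tilde\psi}=1$; each overlap is at most $\tr\,\rho_\iota$ by the lemma, and the traces sum to one, so every overlap must saturate. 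Saturation forces each $\rho_\iota$ to be separable across its $\iota$-partition, hence $\rho_\iota^{T_\iota}$ is again a subnormalized density operator whose overlap with $\ket{\tilde\psi}$ equals its trace, so $\rho_\iota^{T_\iota}=(\tr\,\rho_\iota)\proj{\tilde\psi}$ and therefore $\rho_\iota=(\tr\,\rho_\iota)\proj{\tilde\psi}^{T_\iota}$. For a mixed flag string this operator is not positive semidefinite, because a genuinely multipartite entangled pure state is NPT across every nontrivial bipartition; hence $\tr\,\rho_\iota=0$ for all mixed $\iota$, and only the all-$+$ and all-$-$ branches survive. This chain --- saturation of the overlap bound, forced separability, forced equality with $\proj{\tilde\psi}$, contradiction with NPT --- is the step your sketch does not supply.
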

\end{tcolorbox}
\begin{proof}

We start the proof by providing a lemma that has been proven in various places, including \cite{Hildebrand} and \cite{Johnston}. In the following, we fix a computational basis. For a vector $\ket{r}$, we define its complex conjugate vector  $\ket{r^*}$ obtained by complex conjugating all coordinates of $\ket{r}$ in the computational basis.
\begin{lem}\label{lemmaSM}
Let $\ket{\phi}$ be a pure quantum state with Schmidt decomposition $\ket{\phi} = \sum_{i=1}^m\lambda_i\ket{s_i}\ket{r_i}$. Then the spectrum of $\proj{\phi}^{T_B}$, where $(\cdot)^{T_B}$ denotes the partial transposition, is:
\begin{align}
    \lambda_i^2,& \qquad \textrm{for}\quad 1 \leq i \leq m, \qquad \textrm{corresponding} \quad \textrm{to} \quad \textrm{the}\quad\textrm{eigenvector}\qquad \ket{s_i}\ket{r_i^*} \\
    \pm\lambda_i\lambda_{i'},& \qquad \textrm{for}\quad 1 \leq i \neq i' \leq m, \qquad \textrm{corresponding} \quad \textrm{to} \quad \textrm{the}\quad \textrm{eigenvector}\qquad (\ket{s_i}\ket{r_{i'}^*} \pm \ket{s_{i'}}\ket{r_{i}^*})/\sqrt{2}.
\end{align}
\end{lem}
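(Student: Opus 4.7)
The plan is a direct spectral computation in a carefully chosen basis. First, expand $\proj{\phi} = \sum_{i,j} \lambda_i\lambda_j \ket{s_i}\bra{s_j} \otimes \ket{r_i}\bra{r_j}$ from the given Schmidt decomposition and apply the partial transpose on $B$ in the fixed computational basis, using the identity $(\ket{r_i}\bra{r_j})^T = \ket{r_j^*}\bra{r_i^*}$. This yields
$$\proj{\phi}^{T_B} = \sum_{i,j} \lambda_i\lambda_j \, \ket{s_i}\bra{s_j} \otimes \ket{r_j^*}\bra{r_i^*}.$$

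The key move is to work in the basis $\{\ket{s_i}\otimes\ket{r_j^*}\}_{1\leq i,j\leq m}$ rather than the naive Schmidt basis. Since complex conjugation of an orthonormal set is still orthonormal, this is indeed an orthonormal basis for the $m^2$-dimensional support we care about. A one-line calculation then gives
$$\proj{\phi}^{T_B}\bigl(\ket{s_k}\otimes\ket{r_l^*}\bigr) = \lambda_k\lambda_l\,\ket{s_l}\otimes\ket{r_k^*},$$
so the operator acts on this basis as a weighted swap of the two indices.

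From this block structure the spectrum reads off immediately. The diagonal blocks ($k=l$) are one-dimensional and already diagonal, yielding eigenvalue $\lambda_k^2$ with eigenvector $\ket{s_k}\ket{r_k^*}$. Each off-diagonal pair ($k\neq l$) spans a two-dimensional invariant subspace on which the operator acts as $\lambda_k\lambda_l$ times $\bigl(\begin{smallmatrix}0&1\\1&0\end{smallmatrix}\bigr)$, whose eigenvectors are $(\ket{s_k}\ket{r_l^*}\pm\ket{s_l}\ket{r_k^*})/\sqrt{2}$ with eigenvalues $\pm\lambda_k\lambda_l$. Counting gives $m$ diagonal eigenvectors plus $2\binom{m}{2}=m(m-1)$ off-diagonal ones, totaling $m^2$ and exhausting the support; mutual orthogonality across different $(k,l)$-pairs is automatic from the orthonormality of $\{\ket{s_i}\otimes\ket{r_j^*}\}$.

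There is no genuine obstacle here, as the argument is a linear-algebra bookkeeping exercise once the right basis is chosen; the only subtlety worth flagging is that the transpose must be taken in the same computational basis used to define $\ket{r_j^*}$, so that the identity $(\ket{r_i}\bra{r_j})^T=\ket{r_j^*}\bra{r_i^*}$ holds on the nose. Everything else is bookkeeping, and the resulting eigensystem matches the lemma statement verbatim.
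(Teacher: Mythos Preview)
Your proof is correct and follows essentially the same approach as the paper: both compute the partial transpose explicitly as $\sum_{k,k'}\lambda_k\lambda_{k'}\ketbra{s_k}{s_{k'}}\otimes\ketbra{r_{k'}^*}{r_k^*}$, verify the claimed eigenvectors directly, and count to $m^2$ to conclude. Your presentation is slightly more structured in that you first compute the action on the basis $\{\ket{s_k}\ket{r_l^*}\}$ and then diagonalize the resulting $1\times1$ and $2\times2$ blocks, whereas the paper simply checks each eigenvector equation in one line, but this is the same argument.
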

\begin{proof}
The partial transpose of $\proj{\phi}$ is 
\begin{equation}
    \proj{\phi}^{T_B} = \sum_{k,k'= 1}^m \lambda_k\lambda_{k'}\ketbra{s_k}{s_{k'}}\ketbra{r_{k'}^*}{r_{k}^*}
\end{equation}
This operator clearly satisfies
\begin{align}
    \proj{\phi}^{T_B}\ket{s_i}\ket{r_i^*} &= \lambda_i^2\ket{s_i}\ket{r_i^*}\\
    \proj{\phi}^{T_B}\frac{\ket{s_i}\ket{r_{i'}^*} \pm \ket{s_{i'}}\ket{r_{i}^*}}{\sqrt{2}} &= \pm\lambda_i\lambda_{i'}\frac{\ket{s_i}\ket{r_{i'}^*} \pm \ket{s_{i'}}\ket{r_{i}^*}}{\sqrt{2}}
\end{align}
This gives in total $m + 2\frac{m(m-1)}{2} = m^2$ orthonormal eigenvectors, hence completes the proof.
\end{proof}

This then gives the following corollary:
\begin{cor}\label{usefulcor}
Let $\rho$ be a density operator acting on the Hilbert space $\mathcal{H}_{A}\otimes\mathcal{H}_{B}$ and $(\cdot)^{T_{B}}$ be the partial transposition operation on the space $\mathcal{H}_{B}$. Then the spectrum of $\rho^{T_{B}}$ has values contained in $[-\frac{1}{2},1]$. Furthermore, if there exists a state $\ket{\xi}\in\mathcal{H}_{A}\otimes\mathcal{H}_{B}$ such that $\bra{\xi}\rho^{T_{B}}\ket{\xi}=1$, then $\rho$ is a separable state of the form
\begin{equation}
\rho=\sum_{j}p_{j}\ket{\phi^{A}_{j}}\bra{\phi^{A}_{j}}\otimes\ket{\phi^{B}_{j}}\bra{\phi^{B}_{j}},
\end{equation}
where $\ket{\phi^{A}_{j}}\in\mathcal{H}_{A}$ and $\ket{\phi^{B}_{j}}\in\mathcal{H}_{B}$.
\end{cor}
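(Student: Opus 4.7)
The plan is to reduce both claims directly to Lemma~\ref{lemmaSM} by writing $\rho$ as a convex combination of pure states and applying the lemma term by term. To prove the spectrum bound, I would fix an arbitrary unit vector $\ket{\xi}$ and any convex decomposition $\rho = \sum_k q_k \proj{\phi_k}$, so that $\bra{\xi}\rho^{T_B}\ket{\xi} = \sum_k q_k \bra{\xi}\proj{\phi_k}^{T_B}\ket{\xi}$. By Lemma~\ref{lemmaSM}, the eigenvalues of $\proj{\phi_k}^{T_B}$ are $\lambda_i^2$ and $\pm\lambda_i\lambda_{i'}$, where the $\lambda_i$ are the Schmidt coefficients of $\ket{\phi_k}$. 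Since $\sum_i \lambda_i^2 = 1$ with $\lambda_i\geq 0$, one has $\lambda_i^2\leq 1$ and, by AM--GM, $\lambda_i\lambda_{i'} \leq (\lambda_i^2+\lambda_{i'}^2)/2 \leq 1/2$. Hence every term in the convex sum lies in $[-1/2,1]$, and convexity closes the bound.

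For the second part, my plan is to exploit saturation of the upper bound. Taking any spectral decomposition $\rho = \sum_k q_k \proj{\phi_k}$ with $q_k > 0$ and $\sum_k q_k = 1$, the hypothesis $\bra{\xi}\rho^{T_B}\ket{\xi}=1$ becomes $\sum_k q_k c_k = 1$, where $c_k := \bra{\xi}\proj{\phi_k}^{T_B}\ket{\xi} \leq 1$ by the first part applied to the pure state $\proj{\phi_k}$. Since the $q_k$ form a probability distribution, this forces $c_k = 1$ for every $k$ with $q_k > 0$, so $\ket{\xi}$ is a $+1$ eigenvector of each $\proj{\phi_k}^{T_B}$. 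Lemma~\ref{lemmaSM} then forces each $\ket{\phi_k}$ to have Schmidt rank one, i.e.\ $\ket{\phi_k} = \ket{\phi_k^A}\ket{\phi_k^B}$, and substituting back gives $\rho = \sum_k q_k \proj{\phi_k^A}\otimes\proj{\phi_k^B}$, the required separable form.

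The main subtlety to verify is that eigenvalue $1$ of $\proj{\phi_k}^{T_B}$ can arise only from a rank-one Schmidt decomposition. An eigenvalue of the form $\lambda_i^2 = 1$ immediately forces $\lambda_i = 1$, which via $\sum_j \lambda_j^2 = 1$ kills all remaining Schmidt coefficients, whereas $\lambda_i\lambda_{i'} = 1$ for $i\neq i'$ would demand $\lambda_i = \lambda_{i'} = 1$, contradicting $\lambda_i^2 + \lambda_{i'}^2 \leq 1$. Beyond this small case check, the argument is essentially immediate from the spectral structure already provided by Lemma~\ref{lemmaSM}, so I do not expect any serious obstacle; the content of the corollary is really just reading off what saturation of the bound from part one means for the pure-state pieces of $\rho$.
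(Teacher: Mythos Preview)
Your proposal is correct and follows essentially the same route as the paper: both arguments decompose $\rho$ into pure states, invoke Lemma~\ref{lemmaSM} to bound the spectrum of each $\proj{\phi_k}^{T_B}$ by $[-\tfrac12,1]$, pass the bound through by convexity, and then read off separability from saturation of the upper bound. The only cosmetic difference is that you bound $\lambda_i\lambda_{i'}\leq \tfrac12$ via AM--GM, whereas the paper uses $\lambda_{i'}^2\leq 1-\lambda_i^2$ and maximizes $\lambda_i^2(1-\lambda_i^2)$; these are equivalent.
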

\begin{proof}
For a symmetric operator $\sigma$, let $\lambda_{\textrm{max}}(\sigma)$ be its maximal eigenvalue. Consider the spectral decomposition of $\rho=\sum_{i}p_{i}\ket{\phi_{i}}\bra{\phi_{i}}$.
We have:
\begin{eqnarray}\label{eqeig}
\lambda_{\textrm{max}}(\rho^{T_{B}})&\leq&\sum_{i}p_{i}\lambda_{\textrm{max}}(\ket{\phi_{i}}\bra{\phi_{i}}^{T_{B}})\nonumber\\
&\leq&1.
\end{eqnarray}

It follows from Lemma \ref{lemmaSM} that equality in \ref{eqeig} is achieved if and only if there is a single term in the Schmidt decomposition of each state $\ket{\phi_{i}}$. Equivalently, this implies that $\ket{\phi_{i}}$ is a product state of the form $\ket{\phi_{i}^{A}}\ket{\phi_{i}^{B}}$, and thus $\rho$ is separable with respect to the partition into spaces $\mathcal{H}_{A}$ and $\mathcal{H}_{B}$. Therefore, if there exists a state $\ket{\xi}\in\mathcal{H}_{A}\otimes\mathcal{H}_{B}$ such that $\bra{\xi}\rho^{T_{B}}\ket{\xi}=1$, then the maximum eigenvalue of $\rho^{T_{B}}$ must be $1$, and $\rho$ is also separable in this way.

It remains to be shown that the smallest eigenvalue of $\rho^{T_{B}}$, $\lambda_{\textrm{min}}$, is greater than or equal to $-\frac{1}{2}$.
From Lemma \ref{lemmaSM}, there exists $\lambda_{i}$ and $\lambda_{j}$ such that $\lambda_{\textrm{min}}=-\lambda_{i}\lambda_{j}$. Now note that $\lambda_{\textrm{min}}^{2}=\lambda_{i}^{2}\lambda_{j}^{2}$ and $\sum_{k}\lambda_{k}^{2}=1$ (by definition), hence $\lambda_{j}^{2}\leq 1-\lambda_{i}^{2}$.
Thus, $\lambda_{min}^{2}\leq \lambda_{i}^{2}(1-\lambda_{i}^{2})\leq \frac{1}{4}$. Therefore, $\lambda_{\textrm{min}}\geq -\frac{1}{2}$, concluding the proof.
\end{proof}

Let us introduce the notation $\hat{a}_i \in \{+,-\}$ for all $i$, $\hat{\ab} = (\hat{a}_1,\cdots, \hat{a}_N)$, ${\M'}_{a_i|x_i,+} = {\M'}_{a_i|x_i}$ and ${\M'}_{a_i|x_i,-} = {\M'}^*_{a_i|x_i}$. The behaviour $\tilde{p}(\ab|\xb)$ can be written in the following way
\begin{align}\label{born}
    \tilde{p}(\ab|\xb) &= \sum_{\hat{\ab}}\textrm{tr}\left[\bigotimes_{i=1}^N\left({\M'}^{\rA'_i}_{a_i|x_i,\hat{a}_i}\tp\hat{\M}^{{\rA}_i{\rA}''_i}_{\hat{a}_i}\right)\rho\right]\\
    &= \sum_{\hat{\ab}}\alpha_{\hat{\ab}}\textrm{tr}\left[\bigotimes_{i=1}^N{\M'}^{\rA'_i}_{a_i|x_i,\hat{a}_i}\rho_{\hat{\ab}}\right]\\
    &= \sum_{\hat{\ab}}\alpha_{\hat{\ab}}\textrm{tr}\left[\bigotimes_{i=1}^N{\M'}^{\rA'_i}_{a_i|x_i}\rho_{\hat{\ab}}^{T_{\hat{\ab}}}\right]\\
    &=\textrm{tr}\left[\bigotimes_{i=1}^N{\M'}^{\rA'_i}_{a_i|x_i}\left(\sum_{\hat{\ab}}\alpha_{\hat{\ab}}\rho_{\hat{\ab}}^{T_{\hat{\ab}}}\right)\right]\\
    &=\textrm{tr}\left[\bigotimes_{i=1}^N{\M'}^{\rA'_i}_{a_i|x_i}\tilde{\rho}\right]
\end{align}
where, 
\begin{align}
\alpha_{\hat{\ab}} &= \textrm{tr}\left[\bigotimes_{i=1}^N\idd^{\rA'_i} \tp\hat{\M}_{\hat{a}_i}^{{\rA}_i{\rA}''_i}\rho\right] \\
\rho_{\hat{\ab}} &= \frac{1}{\alpha_{\hat{\ab}}}\textrm{tr}_{{\mathbb{A}}\mathbb{A}''}\left[\bigotimes_{i=1}^N\idd^{\rA'_i} \tp\hat{\M}^{{\rA}_i{\rA}''_i}_{\hat{a}_i}\rho\right]\\
\tilde{\rho}&=\sum_{\hat{\ab}}\alpha_{\hat{\ab}}\rho_{\hat{\ab}}^{T_{\hat{\ab}}}
\label{tintin}
\end{align}
and 
$(\cdot)^{T_{\hat{\ab}}}$ denotes the transpose of an operator with respect to all the parties $A_i$ such that $\hat{a}_i = -$. This transposition can be partial or full, depending on the string $\hat{\ab}$: if  $\hat{a}_i = -$ for all $i$ then the whole operator is transposed; if $\hat{a}_i = +$ for all $i$ then no transpose is applied to any of the parties spaces; otherwise, the transposition is partial. For every $\hat{\ab}$ for which the value $\alpha_{\hat{\ab}}$ is non-zero, $\rho_{\hat{\ab}}$ \added{is} a valid quantum state. If $\alpha_{\hat{\ab}} = 0$ we introduce the convention $\rho_{\hat{\ab}} = 0$. Furthermore $\sum_{\hat{\ab}}\alpha_{\hat{\ab}} = 1$. Given eq. \eqref{escofier} and the fact that the set $\{\M'_{a_i|x_i}\}$ is tomographically complete, it must be that
\begin{equation}\label{laure}
    \proj{\tilde{\psi}} = \sum_{\hat{\ab}}\alpha_{\hat{\ab}}\rho_{\hat{\ab}}^{T_{\hat{\ab}}}.
\end{equation}
Let $\Pi_{\tilde{\psi}}:=\tilde{\ket{\psi}}\tilde{\bra{\psi}}$,  $\beta_{\hat{\ab}}:=\tr{\Pi_{\tilde{\psi}}\rho_{\hat{\ab}}^{T_{\hat{\textbf{a}}}}}$ and $\omega_{\hat{\ab}}:=\rho_{\hat{\ab}}^{T_{\hat{\textbf{a}}}}-\beta_{\hat{\ab}}\tilde{\ket{\psi}}\tilde{\bra{\psi}}$. 
Equivalently
\begin{equation}
\rho_{\hat{\ab}}^{T_{\hat{\textbf{a}}}}=\beta_{\hat{\ab}}\tilde{\ket{\psi}}\tilde{\bra{\psi}}+\omega_{\hat{\ab}}.
\end{equation}

By \eqref{laure} we have 
\begin{eqnarray}
\tilde{\ket{\psi}}\tilde{\bra{\psi}}&=&\Pi_{\tilde{\psi}}\tilde{\ket{\psi}}\tilde{\bra{\psi}}\Pi_{\tilde{\psi}}\\
&=&\sum_{\hat{\ab}}\alpha_{\hat{\ab}}\Pi_{\tilde{\psi}}\rho_{\hat{\ab}}^{T_{\hat{\ab}}}\Pi_{\tilde{\psi}}\\
&=&\sum_{\hat{\ab}}\alpha_{\hat{\ab}}\beta_{\hat{\ab}}\tilde{\ket{\psi}}\tilde{\bra{\psi}}.
\end{eqnarray}
This then implies that 
\begin{equation}\label{normalise}
\sum_{\hat{\ab}}\alpha_{\hat{\ab}}\beta_{\hat{\ab}}=1.
\end{equation}
We have $\beta_{\hat{\ab}}\leq\lambda_{\max}(\rho_{\hat{\ab}}^{T_{\hat{\ab}}})$ and $\sum \alpha_{\hat{\ab}} =1$, hence Corollary \ref{usefulcor} and \eqref{normalise} imply that $\beta_{\hat{\ab}}=1$ for all $\hat{\ab}$ such that $\alpha_{\hat{\ab}}\neq 0$. 

Consider a given $\hat{\ab}$ such that $\alpha_{\hat{\ab}}\neq 0$.
We have just proven that 
\begin{equation}
\rho_{\hat{\ab}}^{T_{\hat{\textbf{a}}}}=\tilde{\ket{\psi}}\tilde{\bra{\psi}}+\omega_{\hat{\ab}},
\end{equation}
In the case where $\hat{\ab}=(+,...,+)$ or $\hat{\ab}=(-, ..., -)$, $\rho_{\hat{\ab}}^{T_{\hat{\textbf{a}}}}$ is a density matrix, hence $\omega_{\hat{\ab}}=0$.
Let us now show that this is also the case for the remaining values of $\hat{\ab}$.

Consider the following partition of the two sets of parties $\Omega_{+}:\{i|\hat{a}_{i}=+\}$ and $\Omega_{-}:\{j|\hat{a}_{i}=-\}$. Operator
$\rho_{\hat{\ab}}^{T_{\hat{\ab}}}$ is the partial transpose of $\rho_{\hat{\ab}}$ over space $\Omega_{-}$. 
As $\beta_{\hat{\ab}}=\tilde{\bra{\psi}}\rho_{\hat{\ab}}^{T_{\hat{\ab}}}\tilde{\ket{\psi}}=1$, Corollary \ref{usefulcor} implies that $\rho_{\hat{\ab}}$ is separable across the partition into parties $\Omega_{+}$ and $\Omega_{-}$. In other words
\begin{equation}
\rho_{\hat{\ab}}=\sum_{i}p_{i}\bigotimes_{j\in\Omega_{+}}\ket{\phi^{j}_{i}}\bra{\phi^{j}_{i}}\bigotimes_{k\in\Omega_{-}}\ket{\phi^{k}_{i}}\bra{\phi^{k}_{i}}.
\end{equation}
In particular, for all $\hat{\ab}$, $\rho_{\hat{\ab}}^{T_{\hat{\textbf{a}}}}$ is a density operator, hence as before we have $\rho_{\hat{\ab}}^{T_{\hat{\textbf{a}}}}=\tilde{\ket{\psi}}\tilde{\bra{\psi}}$. 
Now we show that $\alpha_{\hat{\ab}}=0$ unless $\hat{\ab}=(+,...,+)$ or $\hat{\ab}=(-, ..., -)$. Assume it is not the case. The state $\tilde{\ket{\psi}}$ is genuinely multipartite entangled. \emph{i.e.} entangled across all bipartitions of the $N$ parties, and thus whenever $\hat{\ab} \not\in \{(+,\cdots, +),(-,\cdots,-)\}$, $\rho_{\hat{\ab}}=\tilde{\ket{\psi}}\tilde{\bra{\psi}}^{T_{\hat{\textbf{a}}}}$ is not positive due to the PPT criterion for pure states. Hence to avoid the contradiction we conclude that  $\alpha_{\hat{\ab}}=0$ unless $\hat{\ab}=(+,...,+)$ or $\hat{\ab}=(-, ..., -)$. 

Therefore from the expression in \ref{born}, we derive the state $\tilde{\rho}$ to be 
\begin{equation}
\tilde{\rho}=\alpha\rho_{+}+(1-\alpha)\rho_{-}^{*},
\end{equation}
where $\rho_{+}:=\proj{\tilde{\psi}}$ and $\rho_{-}:=\proj{\tilde{\psi}^{*}}$.

Thus the purification of the shared state $\tilde{\rho}$ is
\begin{equation}
    \ket{\tilde{\Psi}}^{\mathbb{A}'\mathbb{A}\mathbb{A}''\rP} = \ket{\tilde{\psi}}^{\mathbb{A}'}\tp\ket{\xi_+}^{\mathbb{A}\mathbb{A}''\rP} + \ket{\tilde{\psi}^*}^{\mathbb{A}'}\tp\ket{\xi_-}^{\mathbb{A}\mathbb{A}''\rP},
\end{equation}
where $\M_+^{\rA_i\rA''_i}\ket{\tilde{\Psi}}^{\mathbb{A}'\mathbb{A}\mathbb{A}''\rP} = \ket{\tilde{\psi}}^{\mathbb{A}'}\tp\ket{\xi_+}^{\mathbb{A}\mathbb{A}''\rP}$ and  $\M_-^{\rA'_i}\ket{\tilde{\Psi}}^{\mathbb{A}'\mathbb{\rA}\mathbb{\rA}''\rP} = \ket{\tilde{\psi}^*}^{\mathbb{A}'}\tp\ket{\xi_-}^{\mathbb{A}\mathbb{A}''\rP}$. Since we do not bound the dimension of the spaces in which the measurement operators $\{\M_+,\M_-\}$ act, they can be considered to be projective, which implies $\braket{\tilde{\psi}}{\tilde{\psi}^*}\braket{\xi_+}{\xi_-} = 0$. Moreover,  $\M_++\M_- = \idd$ implies  $\|\ket{\xi_+}\| + \|\ket{\xi_-}\| = 1$.


\end{proof}

\section{Self-testing }\label{app:Step3}

\subsection{Network assisted self-testing}\label{app:Step3NetworkAssisted}
In this section, we prove Theorem 1 from the main text. First we give a rigorous statement of the theorem. 
\setcounter{thm}{0}
\begin{tcolorbox}
\begin{thm}[Formal version, network-assisted case]
Let the state $\ket{\Psi} \in \mathcal{H}^{\mathbb{A}}\otimes\mathcal{H}^{\mathbb{B}}$, dichotomic observables $\{\rA_j\}_{j=0}^5$ and $\{\rB_j\}_{j=0}^2$ and measurements $\left\{\N_{b_j|\lozenge}^{2j-1,2j}, \N_{b_j|\blacklozenge}^{2j,2j+1}\right\}_{j=1}^{\lfloor N/2\rfloor}$, $\left\{\M_{a_j|\diamond}\right\}_{j=1}^N$ reproduce all reference correlations given in eqs. \eqref{eq:3CHSHViolationNetworkAssisted} and \eqref{eq:TomographyTeleportedStateNetworkAssisted}. Then there exists a CPTP map $\Phi$ such that:
\begin{align}
    \Phi(\ket{\Psi}\otimes\ket{0,\cdots,0}^{\mathbb{B}'\mathbb{B}''}) &= \sum_{a_j}\bigotimes U_{a_j}^{B_j'}\left[\bigotimes_{j=1}^N\M_{a_j|\diamond}\otimes\idd^{\mathbb{B}\mathbb{B}'\mathbb{B}''}V_{\rB}(\ket{\Psi}\otimes \ket{0,\cdots,0}^{\mathbb{B}'\mathbb{B}''})\right]\\
    &= (\ket{\psi_N}^{\mathbb{B}'}\otimes\ket{0\cdots 0}^{\mathbb{B}''} + \ket{\psi_N^*}^{\mathbb{B}'}\otimes\ket{1\cdots 1}^{\mathbb{B}''})\otimes \ket{\textrm{aux}}^{\mathbb{A}\mathbb{B}},
\end{align}
where $U_{a_j}$ are the correcting unitaries defined as $U_{00} = \idd$, $U_{01} = \sigma_z$, $U_{10} = \sigma_x$ and $U_{11} = \sigma_x\sigma_z$, and $V_B$ is the local isometry from Lemma \ref{lemchshNA}.
The junk state $\ket{\mathrm{aux}}$ has the following form:
\begin{equation}\label{eqAuxapp}
\ket{\mathrm{aux}} = \sum_\iota\ket{\xi_\iota}_{\mathbb{A}B}\ket{\iota}_{\mathbb{A}''}\ket{\iota}_{{\mathbb{B}}''},
\end{equation}
where $\iota \in (0,1)^N$.
\end{thm}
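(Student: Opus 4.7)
The plan is to combine Lemma~\ref{lemchshNA}, which self-tests the $N$ auxiliary Pauli measurements up to complex conjugation, with the tomography-up-to-conjugation Proposition~\ref{prop} of Appendix~\ref{app:Step2}. First I apply Lemma~\ref{lemchshNA}: the maximal violation of the three CHSH inequalities between every main/auxiliary pair produces a local isometry $V=V_{\rA}\otimes V_{\rB}$ that extracts, on the primed registers $\mathbb{A}'\mathbb{B}'$, the tensor product of $N$ maximally entangled pairs $\bigotimes_j\ket{\phi^+}_{A_j'B_j'}$, while mapping the auxiliary observables $\rB_0^{(j)},\rB_1^{(j)},\rB_2^{(j)}$ to the reference operators $\sigma_z^{B_j'},\sigma_x^{B_j'},\sigma_y^{B_j'}$, the last one up to a $\sigma_z^{B_j''}$ flag correction, cf.~Eqs.~\eqref{vjzMeas}--\eqref{vjyMeas-2}.

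Next I focus on the subroutine $x_j=\diamond$ for every main party. Applying only $V_{\rB}$ on the auxiliary side, measuring the main parties with $\M_{a_j|\diamond}$, and correcting with $U_{a_j}^{B_j'}$ defines the CPTP map $\Phi$ of the statement. Using the conversion rules of Lemma~\ref{lemchshNA} one can commute the auxiliary Pauli observables through the isometry and rewrite the reference correlations~\eqref{eq:TomographyTeleportedStateNetworkAssisted} in terms of the isometric image. A direct computation shows that, for every outcome string $\ab$ and every Pauli string $(k_1,\dots,k_N)$, measuring the state obtained on $\mathbb{B}'$ after the teleportation-plus-correction step with $\bigotimes_j\sigma_{k_j}^{B_j'}\otimes\sigma_z^{q(k_j)\,B_j''}$ reproduces exactly the tomographic statistics of $\ket{\psi_N}$. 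Equivalently, the resulting state on $\mathbb{B}'\mathbb{B}''$ satisfies the hypotheses of Proposition~\ref{prop} with the trusted tomographically complete set taken to be the ideal Pauli operators on $\mathbb{B}'$ and the measurements on $\mathbb{B}''$ implementing the flag-controlled conjugation.

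Since $\ket{\psi_N}$ is assumed genuinely multipartite entangled, Proposition~\ref{prop} then forces the purification of the state produced by $\Phi$ to take exactly the coherent form
\begin{equation}
\bigl(\ket{\psi_N}^{\mathbb{B}'}\otimes\ket{0\cdots 0}^{\mathbb{B}''}+\ket{\psi_N^*}^{\mathbb{B}'}\otimes\ket{1\cdots 1}^{\mathbb{B}''}\bigr)\otimes\ket{\mathrm{aux}}^{\mathbb{A}\mathbb{B}},
\end{equation}
where the junk state $\ket{\mathrm{aux}}$ inherits the structure of Eq.~\eqref{eqAuxapp} from Lemma~\ref{lemchshNA}. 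This is precisely the expression appearing in the theorem, and summing the classically-controlled corrections $U_{a_j}^{B_j'}$ over the outcomes $\ab$ yields the CPTP map $\Phi$ as stated.

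I expect the delicate step to be the second one: matching the $2^N$-fold flag ambiguity delivered by Lemma~\ref{lemchshNA}, where each auxiliary party can \emph{independently} toggle between $\sigma_y$ and $\sigma_y^*$ conditioned on its own flag $B_j''$, with the output of Proposition~\ref{prop}, which only allows two extremal, globally correlated conjugation configurations to survive. Genuine multipartite entanglement of $\ket{\psi_N}$ is essential here: it is precisely the ingredient that makes the partial transposes appearing in the PPT-based argument inside the Proposition nonpositive for every intermediate flag string $\iota\notin\{0^N,1^N\}$, thereby ruling out the partially-conjugated crossed terms and collapsing the decomposition to the two flag configurations $\ket{0\cdots 0}^{\mathbb{B}''}$ and $\ket{1\cdots 1}^{\mathbb{B}''}$ claimed in the theorem.
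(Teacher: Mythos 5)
Your proposal is correct and follows essentially the same route as the paper: self-test the $N$ maximally entangled pairs and the up-to-conjugation Pauli tomography via Lemma~\ref{lemchshNA}, push the $\diamond$-input teleportation statistics through the isometry to land in the hypotheses of Proposition~\ref{prop}, and let the genuine multipartite entanglement of $\ket{\psi_N}$ (through the PPT argument inside the proposition) kill all partially conjugated flag strings, leaving only the two globally correlated branches before assembling $\Phi$ from the outcome-controlled corrections $U_{a_j}^{B_j'}$. The ``delicate step'' you flag is exactly the one the paper resolves with Proposition~\ref{prop}, so no gap remains.
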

\end{tcolorbox}

Lemma \ref{lemchshNA} provides us with self-testing of $N$ maximally entangled pairs of qubits based on the reproduction of correlations given in Eq. \eqref{eq:3CHSHViolationNetworkAssisted}.
Let us now consider the case when every main party gets input $\diamond$, while all the auxiliary parties gets some $N$-trit input. Given that auxiliary parties obtain outputs $b_1, \cdots, b_N$  the full state satisfies the following expression (cf. \eqref{MSTNpairs})
\begin{align}\label{vm}
    V\left(\bigotimes_{j=1}^{N}\N_{b_j|y_j}\ket{\Psi}\otimes\ket{0\cdots 0}_{\mathbb{A}'\mathbb{A}''\mathbb{B}'\mathbb{B}''}\right) =  \sum_{\iota}\ket{\xi_{\iota}}_{\mathbb{A}\mathbb{B}}\otimes\ket{\iota\iota}_{\mathbb{A}''{\mathbb{B}}''}\bigotimes_{j=1}^{N} {{\N'}_{\tilde{b}_j|y_j }}\ket{\phi^+}_{A_j'B'_{j}},
\end{align}
%
where $\tilde{b}_j = b_j \oplus 1$ if $\iota_j = 1$ and $y_j = 2$, and $\tilde{b}_j = b_j$ otherwise. The equation is the same as eq. \eqref{MSTNpairs}, written in terms of the measurement operators, rather than observables. Using $\tilde{b}_j$ instead of $b_j$ allows us to have a more compact equation. Namely, $\tilde{b}_j$ is different than $b_j$ only when the input is $y_j = 2$ and only in the elements of the sum corresponding to $\iota$ such that $\iota_j = 1$. In other words, in the elements of the sum labelled by $\iota$ such that $\iota_j = 1$, measurement $\N_{b_j|2}$ is mapped to $\N'_{1-b_j|2}$. We have also introduced the following notation for the reference measurements
\begin{align}
    \N'_{b|0} \equiv \frac{\idd + (-1)^b\sigma_z}{2}, \qquad \N'_{b|1} \equiv \frac{\idd + (-1)^b\sigma_x}{2}, \qquad \N'_{b|2} \equiv \frac{\idd + (-1)^b\sigma_y}{2},
\end{align}
The expression \eqref{vm} together with  \eqref{josjedna}
implies that 
\begin{align}
    V_\rB^{(j)}\left(\N_{b_j|y_j}\otimes\idd^{B_j'B_j''}\right){V_\rB^{(j)}}^\dagger =  \idd^{{B_j}}\otimes\proj{0}^{B_j''}\otimes {\N'}_{{b}_j|y_j }^{B_j'} + \idd^{{B_j}}\otimes\proj{1}^{B_j''}\otimes {\N'}_{\tilde{b}_j|y_j }^{B_j'},
\end{align}
holds on the support of $\rho_B = \tr_\mathbb{A}[\proj{\Psi}]\otimes\proj{0\cdots0}_{\mathbb{B}'\mathbb{B}''}$.
Equivalently, for a tensor product of physical measurement operators, the following relation holds on the support of $\rho_B$
\begin{align}
    V_B\left(\bigotimes_{j=1}^N\N_{b_j|y_j}\otimes\idd^{\mathbb{B}'\mathbb{B}''}\right)V_B^\dagger =  \sum_\iota\idd^{\mathbb{B}}\otimes\proj{\iota}^{\mathbb{B}''} \bigotimes_{j=1}^N{N'}_{\tilde{b}_j|y_j }^{B_j'}.
\end{align}
Observe now the following expression
\begin{align}
    p(\bb|\ab,\diamond,y) &= \frac{1}{p(\ab|\diamond)}\textrm{tr}\left[\left(\bigotimes_{j=1}^N\M_{a_j|\diamond}\otimes\N_{\bb|\yb}\right)\proj{\Psi}\right]\\
    &= \frac{1}{p(\ab|\diamond)}\textrm{tr}\left[\sum_\iota\idd^{\mathbb{B}}\otimes\proj{\iota}^{\mathbb{B}''} \bigotimes_{j=1}^N{\N'}_{\tilde{b}_j|y_j }^{B_j'} \textrm{tr}_{\mathbb{A}}\left(\bigotimes_{j=1}^N\M_{a_j|\diamond}\otimes\idd^{\mathbb{B}\mathbb{B}'\mathbb{B}''}V_\rB\left(\proj{\Psi}\otimes \proj{0,\cdots,0}^{\mathbb{B}'\mathbb{B}''}\right)\right)\right]\\
    &= \textrm{tr}\left[\sum_\iota\proj{\iota}^{\mathbb{B}''} \bigotimes_{j=1}^N{\N'}_{\tilde{b}_j|y_j }^{B_j'}\tilde{\Psi}^{\mathbb{B}'\mathbb{B}'' }_{\ab}\right],
\end{align}
where we introduced notation:
\begin{equation}
    \tilde{\Psi}_{\ab}^{\mathbb{B}'\mathbb{B}'' } = \frac{1}{p(\ab|\diamond)}\textrm{tr}_{\mathbb{A}\mathbb{B}}\left[\bigotimes_{j=1}^N\M_{a_j|\diamond}\otimes\idd^{\mathbb{B}\mathbb{B}'\mathbb{B}''}V_\rB\left(\proj{\Psi}\otimes \proj{0\cdots0}^{\mathbb{B}'\mathbb{B}''}\right)\right],
\end{equation}
and $\ab = a_1,\cdots, a_N$ and $\N_{\bb|\yb} = \bigotimes_{j=1}\N_{b_j|y_j}^{(j)}$.

Assume now that the probabilities $p(\bb|\ab,\diamond,\yb)$ satisfy the condition
\begin{equation}\label{condition}
    p(\bb|\ab,\diamond,\yb) = \textrm{tr}\left[U_{\ab}\proj{\psi_N}U_{\ab}^\dagger\bigotimes_{j=0}^{N-1}\N'_{b_j|y_j}\right]
\end{equation}
and $p(\ab|\yb) = 1/2^N$, for all $\ab$ and $\yb$, \emph{i.e.} the reference correlations given in \eqref{eq:TomographyTeleportedStateNetworkAssisted}. The correcting unitaries are defined as $U_{\ab} = \bigotimes_{a_j}U_{a_j}$ where:
\begin{equation}
    U_{a_j} = \sigma_z^{a_{j,1}}\sigma_x^{a_{j,2}},
\end{equation}
and $a_{j,1}$ and $a_{j,2}$ are two bits defining the two-bit outcome $a_j$. Thus, Proposition \ref{prop} implies that for every set of outcomes $\ab$ the state $\ket{\tilde{\Psi}_\ab}^{\mathbb{B}'\mathbb{B}'' }$ satisfies
\begin{equation}
    \ket{\tilde{\Psi}_\ab}^{\mathbb{B}'\mathbb{B}''} = U_\ab^{\mathbb{B}'}\otimes\idd^{\mathbb{B}''}(\ket{\psi_N}^{\mathbb{B}'}\otimes\ket{0\cdots 0}^{\mathbb{B}''} + \ket{\psi_N^*}^{\mathbb{B}'}\otimes\ket{1\cdots 1}^{\mathbb{B}''}).
\end{equation}
This result is all what we need to find the final isometry, inspired by the isometry used in \cite{MDIST}. The isometry $\Phi$ defined in the following way
\begin{align}
    \Phi(\ket{\Psi}\otimes\ket{0,\cdots,0}^{\mathbb{B}'\mathbb{B}''}) &= \sum_{a_j} \left[\bigotimes_{j=1}^N\M_{a_j|\diamond}\otimes U_{a_j}^{B_j'}\otimes\idd^{\mathbb{B}\mathbb{B}''}V_{\rB}(\ket{\Psi}\otimes \ket{0,\cdots,0}^{\mathbb{B}'\mathbb{B}''})\right]\\
    &= (\ket{\psi_N}^{\mathbb{B}'}\otimes\ket{0\cdots 0}^{\mathbb{B}''} + \ket{\psi_N^*}^{\mathbb{B}'}\otimes\ket{1\cdots 1}^{\mathbb{B}''})\otimes \ket{\textrm{aux}}^{\mathbb{A}\mathbb{B}},
\end{align}
gives us exactly the result we needed. 

\subsection{Self-testing in the fully network-assisted scenario}\label{app:Step3FullyNetworkAssisted}

In this section, we prove Theorem 2 from the main text, which we state here in a its rigorous form:
\setcounter{thm}{0}
\begin{tcolorbox}
\begin{thm}[Formal version, fully-network-assisted case]
Let the state $\ket{\Psi} \in \mathcal{H}^{\mathbb{A}}\otimes\mathcal{H}^{\mathbb{B}}$, dichotomic observables $\{\rA_j\}_{j=0}^5$ and $\{\rB_j\}_{j=0}^2$ and measurements $\left\{\N_{b_j|\lozenge}^{2j-1,2j}, \N_{b_j|\blacklozenge}^{2j,2j+1}\right\}_{j=1}^{\lfloor N/2\rfloor}$, $\left\{\M_{a_j|\diamond}\right\}_{j=1}^N$ reproduce all reference correlations given in eqs. \eqref{eq:3CHSHViolationFullyNetworkAssisted}, \eqref{eq:AlignFullyNetworkAssisted} and \eqref{eq:TomographyTeleportedStateFullyNetworkAssisted}. Then there exists a CPTP map $\Phi$ such that:
\begin{align}
    \Phi\left(\frac{1}{p(\bb|\yb)}\tr_{\mathbb{B}\mathbb{B}'\mathbb{B}''}\left[    V\left(\otimes_{j=1}^N\N_{b_j|y_j}^{(j)}\ket{\Psi}_{\mathbb{A}\mathbb{B}}\otimes\ket{0\cdots 0}_{\mathbb{A}'\mathbb{A}''{\mathbb{B}}'{\mathbb{B}}''}\right)\right]\right) = \begin{cases} \xi^{\mathbb{A}\mathbb{A}''\mathbb{A}'}\otimes\ket{\psi_N}^{\bar{\mathbb{A}}},\\
    \xi^{\mathbb{A}\mathbb{A}''\mathbb{A}'}\otimes\ket{\psi_N^*}^{\bar{\mathbb{A}}},
    \end{cases}
\end{align}
where $V$ is the local isometry figuring in \eqref{josjednab}.
\end{thm}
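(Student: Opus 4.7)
The plan is to combine Lemma~\ref{lemmachshFNA} with the teleportation and tomography conditions of the reference experiment, exploiting the fact that the refined lemma already yields a binary dichotomy: after the local isometry $V$, either all $N$ auxiliary $\Y_\rB^{(j)}$ self-test as $\sigma_y$ or they all self-test as $\sigma_y^*$, with no coherent superposition between these two options. Consequently the freedom $\alpha \in [0,1]$ present in network-assisted self-testing (Theorem~\ref{thm:MainResult1}) collapses to the boundary $\alpha \in \{0,1\}$ demanded by Definition~\ref{def:FullNetSelfTest}, so Proposition~\ref{prop} is only invoked in a degenerate form with a deterministic flag.

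First I would apply $V$ to the physical state and use eq.~\eqref{josjednab} to rewrite $V\ket{\Psi}$ as either $\ket{\xi_0}\ket{0\cdots 0}\bigotimes_j \ket{\phi^+}_{A_j' B_j'}$ or $\ket{\xi_1}\ket{1\cdots 1}\bigotimes_j\ket{\phi^+}_{A_j' B_j'}$; eq.~\eqref{MSTNpairsb} correspondingly shows that the auxiliary party's physical measurements act on the primed registers either as the reference Paulis $\sigma_w^{B_j'}$ or as their complex conjugates ${\sigma_w^*}^{B_j'}$. In both branches the two-qubit pairs $(A_j',B_j')$ are prepared in noiseless $\ket{\phi^+}$ states, so the scenario reduces precisely to the MDI-tomography setup of Fig.~\ref{fig:DITomography}(C) acting on a now-known resource.

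Next I would invoke the teleportation correlations eq.~\eqref{eq:TomographyTeleportedStateFullyNetworkAssisted}. Conditional on the main parties inputting $\diamond$ and obtaining outcomes $\ab = (a_1, \ldots, a_N)$, the uncharacterised Bell measurements $\M_{a_j|\diamond}$ teleport the unknown main-party state onto the primed auxiliary registers $\mathbb{B}'$ up to Pauli corrections $U_{\ab} = \bigotimes_j U_{a_j}$. Since the auxiliary's tomography on $\mathbb{B}'$ reproduces the statistics of $U_\ab \ket{\psi_N}$ measured in the Pauli bases, and its measurements are now known to be either Paulis or their complex conjugates globally, tomographic uniqueness forces the teleported state to be $U_\ab\ket{\psi_N}$ in the first branch and $U_\ab\ket{\psi_N^*}$ in the second. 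The CPTP map $\Phi$ is then assembled as in Appendix~\ref{app:Step3NetworkAssisted}: it reads out the outcomes $\ab$, applies the corrections $U_{a_j}^\dagger$ to each $B_j'$ coherently, and traces out the junk registers, outputting either $\ket{\psi_N}$ or $\ket{\psi_N^*}$.

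The main obstacle I anticipate is verifying that the branching remains genuinely disjoint when the main parties' Bell measurements act on $\ket{\Psi}$, i.e. that no coherences between the two branches reappear during teleportation. Because the isometry $V$ entangles the main-party systems with the auxiliary flag registers only through the self-test of the $\ket{\phi^+}$ pairs, and because the two branches of Lemma~\ref{lemmachshFNA} live in mutually orthogonal supports of the $\mathbb{A}''\mathbb{B}''$ flag system, the Bell measurements (which act on $\mathbb{A}$ alone and leave the flag untouched) cannot create cross terms between them. Making this step airtight --- essentially the argument that the decomposition \eqref{decompose} survives the conditioning on $\diamond$ --- is the technical heart of the proof, and is the place where source-independence plays its decisive role in replacing the $\alpha$-superposition of Theorem~\ref{thm:MainResult1} with a clean binary alternative.
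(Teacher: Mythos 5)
Your argument is sound and does establish the intended self-testing statement, but it takes a genuinely different route from the paper's proof of this theorem. You run the \emph{teleportation} direction: condition on the main parties' Bell outcomes $\ab$, use the self-tested Paulis on $\mathbb{B}'$ together with tomographic uniqueness (Proposition~\ref{prop} in its degenerate, deterministic-flag form) to pin the teleported state down to $U_{\ab}\ket{\psi_N}$ or $U_{\ab}\ket{\psi_N^*}$, and then undo the corrections on $\mathbb{B}'$. That is essentially the construction the paper uses for the \emph{network-assisted} case in Appendix~\ref{app:Step3NetworkAssisted}, transplanted here with the $\alpha$-superposition collapsed by Lemma~\ref{lemmachshFNA}. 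The paper's proof in Appendix~\ref{app:Step3FullyNetworkAssisted} instead runs the \emph{steering} direction: the auxiliary party's self-tested Pauli measurements on halves of the $\ket{\phi^+}$ pairs remotely prepare a tomographically complete set of states $\nu_{b_j|y_j}$ on the registers $A_j'$; the probabilities $p(\ab|\bb,\diamond,\yb)$ then reconstruct the effective measurement operators $\tilde{\M}_{\ab}$ of eq.~\eqref{effmeasop}, which are forced to equal $\tfrac{1}{d^N}U_{\ab}^\dagger\proj{\psi_N}U_{\ab}$ as in eq.~\eqref{mdiselftest}, and the MDI isometry of~\cite{MDIST} (Fig.~\ref{isomulti}) extracts $\ket{\psi_N}$ on the \emph{main-party} side into $\bar{\mathbb{A}}$. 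Your approach buys uniformity with Theorem~1's proof; the paper's buys exact agreement with the formal statement, which conditions on $(\bb,\yb)$ and applies $\Phi$ to the main parties' residual state with output on $\bar{\mathbb{A}}$ — your $\Phi$ acts on the auxiliary registers and outputs on $\mathbb{B}'$, so it realizes the informal Theorem~2 but not the displayed equation literally. Also note that your worry about cross terms between branches is moot: Lemma~\ref{lemmachshFNA} already asserts that one of $\ket{\xi_0},\ket{\xi_1}$ vanishes (see eq.~\eqref{malotru}), so there is only ever one branch and no coherence can reappear.
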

\end{tcolorbox}


As announced, exploiting the network constraints, a multipartite state can be self-tested up to complex conjugation. That is, contrary to the previous case, the self-testing isometry does not map a physical state to a controlled coherent superposition of the reference state and the complex conjugate of the reference state. Rather, it maps the state either to the reference state or to its complex conjugate. We follow here the result of Lemma \ref{lemmachshFNA}. Consider the expression given therein (cf. \eqref{MSTNpairsb}) 
\begin{align}\label{malotru}
    V\left(\bigotimes_{j=1}^N\N_{b_j|y_j}^{(j)}\ket{\Psi}_{\mathbb{A}\mathbb{B}}\otimes\ket{0\cdots 0}_{\mathbb{A}'\mathbb{A}''{\mathbb{B}}'{\mathbb{B}}''}\right) &= \begin{cases} \ket{\xi_0}_{\mathbb{A}\mathbb{B}}\ket{0\cdots0}_{\mathbb{A}''{\mathbb{B}}''}\bigotimes_{j=1}^{N}\N'_{b_j|y_j}\ket{\phi^+}_{A_j'B_{j}'},\\
    \ket{\xi_1}_{\mathbb{A}\mathbb{B}}\ket{1\cdots1}_{\mathbb{A}''{\mathbb{B}}''}\bigotimes_{j=1}^{N}{\N'}_{b_j|y_j}^{*}\ket{\phi^+}_{A_j'B_{j}'},
    \end{cases}
\end{align}
To avoid long expressions with the two cases, we concentrate on the first, since the derivation is the same, up to complex conjugation. Observe the following expression:
\begin{align}
    p(\ab|\bb,\diamond,\yb) &= \frac{1}{p(\bb|\yb)}\textrm{tr}\left[\left(\bigotimes_{j=1}^N\M_{a_j|\diamond}\otimes\N_{b_j|y_j}\right)\ket{\Psi}\right]\\
    &= \frac{1}{p(\bb|\yb)}\textrm{tr}\left[V_\rA^{\dagger}\left(\bigotimes_{j=1}^N\M_{a_j|\diamond}\otimes\idd^{\mathbb{A}'\mathbb{A}''}\right)\ket{\xi_0}_{\mathbb{A}\mathbb{B}}\ket{0\cdots0}_{\mathbb{A}''{\mathbb{B}}''}\bigotimes_{j=1}^{N}\N'_{b_j|y_j}\ket{\phi^+}_{A_j'B_{j}'}\right]\\
    &= \textrm{tr}\left[V_\rA^{\dagger}\left(\bigotimes_{j=1}^N\M_{a_j|\diamond}\otimes\idd^{\mathbb{A}'\mathbb{A}''}\right)\xi^0_{\mathbb{A}}\otimes\proj{0\cdots0}_{\mathbb{A}''}\bigotimes_{j=1}^{N}\nu_{b_j|y_j}^{A_j'}\right],
\end{align}
where $\xi^0_{\mathbb{A}} = \tr_{\mathbb{B}}[\proj{\xi_0}_{\mathbb{A}\mathbb{B}}]$ and $\nu_{b_j|y_j}^{A_j'} = \tr_{B_j}[\N'_{b_j|y_j}\proj{\phi^+}_{A_j'B_{j}'}]$. Introducing the effective measurement operator
\begin{equation}\label{effmeasop}
    \tilde{\M}_{\ab}^{\mathbb{A}'} = \tr_{\mathbb{A}\mathbb{A}''}\left[V_\rA^{\dagger}\left(\bigotimes_{j=1}^N\M_{a_j|\diamond}\otimes\idd^{\mathbb{A}'\mathbb{A}''}\right)\xi^0_{\mathbb{A}}\otimes\proj{0\cdots0}_{\mathbb{A}''}\otimes\idd^{\mathbb{A}'}\right],
\end{equation}
we obtain
\begin{equation}
    p(\ab|\bb,\diamond,\yb) = \tr\left[\bigotimes_{j=1}^N\nu_{b_j|y_j}^{A_j'} \tilde{\M}_{\ab}^{\mathbb{A}'}\right]
\end{equation}

Since the set $\{\otimes_{j=1}^N\nu_{b_j|y_j}^{A_j'}\}_{\bb,\yb}$ is tomographically complete, the set of probabilities $\{p(\ab|\bb,\diamond,\yb)\}_{\ab,\bb,\yb}$ allows us to exactly reconstruct the effective measurement operators $\{\tilde{\M}_{\ab}\}_{\ab}$. If the operators have the form:
\begin{equation}\label{mdiselftest}
    \tilde{\M}_{\ab} = \frac{1}{d^N}U_{\ab}^\dagger\proj{\psi_N}U_{\ab}U_{\ab}.
\end{equation}
The correcting unitaries are defined as  $U_{\ab} = \bigotimes_{a_j}U_{a_j}$ where:
\begin{equation}
    U_{a_j} = \sigma_z^{a_{j,1}}\sigma_x^{a_{j,2}},
\end{equation}
and $a_{j,1}$ and $a_{j,2}$ are two bits defining the two-bit outcome $a_j$. With the result \eqref{mdiselftest}, we can directly use the isometry from \cite{MDIST} to extract the reference state $\ket{\psi_N}$. One branch of the isometry $\Phi_j$ is given on Fig. \ref{isomulti}. The isometry $\Phi = \otimes_{j=1}^N\Phi_j$ is applied to the state $\frac{1}{p(\bb|\yb)}\tr_{\mathbb{B}\mathbb{B}'\mathbb{B}''}[    V(\otimes_{j=1}^N\N_{b_j|y_j}^{(j)}\ket{\Psi}_{\mathbb{A}\mathbb{B}}\otimes\ket{0\cdots 0}_{\mathbb{A}'\mathbb{A}''{\mathbb{B}}'{\mathbb{B}}''})]$. Using eqs. \eqref{effmeasop} and \eqref{mdiselftest} we obtain:
\begin{equation}
    \Phi\left(\frac{1}{p(\bb|\yb)}\tr_{\mathbb{B}\mathbb{B}'\mathbb{B}''}\left[    V\left(\otimes_{j=1}^N\N_{b_j|y_j}^{(j)}\ket{\Psi}_{\mathbb{A}\mathbb{B}}\otimes\ket{0\cdots 0}_{\mathbb{A}'\mathbb{A}''{\mathbb{B}}'{\mathbb{B}}''}\right)\right]\right) = \xi^{\mathbb{A}\mathbb{A}''\mathbb{A}'}\otimes\ket{\psi_N}^{\bar{\mathbb{A}}},  
\end{equation}
where $\bar{\mathbb{A}} = \bar{A}_1\cdots \bar{A}_N$. Taking into account the second case from \eqref{malotru} we obtain the final result:
\begin{align}
    \Phi\left(\frac{1}{p(\bb|\yb)}\tr_{\mathbb{B}\mathbb{B}'\mathbb{B}''}\left[    V\left(\otimes_{j=1}^N\N_{b_j|y_j}^{(j)}\ket{\Psi}_{\mathbb{A}\mathbb{B}}\otimes\ket{0\cdots 0}_{\mathbb{A}'\mathbb{A}''{\mathbb{B}}'{\mathbb{B}}''}\right)\right]\right) = \begin{cases} \xi^{\mathbb{A}\mathbb{A}''\mathbb{A}'}\otimes\ket{\psi_N}^{\bar{\mathbb{A}}},\\
    \xi^{\mathbb{A}\mathbb{A}''\mathbb{A}'}\otimes\ket{\psi_N^*}^{\bar{\mathbb{A}}}
    \end{cases}
\end{align}

\begin{figure}
  \centerline{
    \begin{tikzpicture}[thick]
    %
    \tikzstyle{operator} = [draw,fill=white,minimum size=1.5em] 
    \tikzstyle{operator2} = [draw,fill=white,minimum height=1.8cm]
    \tikzstyle{phase} = [fill,shape=circle,minimum size=5pt,inner sep=0pt]
    \tikzstyle{circlewc} = [draw,minimum width=0.3cm]
    %
    \node at (0,0) (q1) {$\ket{0}^{\tilde{\rA}_j}$};
    \node at (0,-1) (q2) {$\ket{0}^{{\rA}_j'}$};
    \node at (0,-2.5) (q3) {$\ket{\Psi}^{\rA_0\rA_1 \cdots \rA_{N-1}}$};
    \node at (0, -2) (qex1) {};
    \node at (3.40, -1) (qex3) {};
    \node at (3.40, -2) (qex4) {};
    \node at (2.80, -1) (qex5) {};
    \node at (2.80, -2) (qex6) {};
    %
    \node[operator] (op11) at (1,0) {$H$} edge [-] (q1);
    %
    \node[circlewc] (phase11) at (2,-1) {$\sigma_x$} edge [-] (q2);
    \node[phase] (circlewc12) at (2,0) {} edge [-] (op11);
    \draw[-] (phase11) -- (circlewc12);
    \draw[-] (phase11) -- (qex5);
    \draw[-] (qex1)-- (qex6);
    %
    \node[operator] (op31) at (3,0) {$U_{a_j}$} edge [-] (circlewc12);
    \node[operator2] (operator32) at (3, -1.5) {$\M_{a_j|\diamond}$}; 
[-] (circlewc42);
    \draw[-] (op31) -- (operator32);
    %
    \node (end1) at (5,0) {} edge [-] (op31);
    \node (end2) at (5,-1) {} edge [-] (qex3);
    \node (end3) at (5,-2) {} edge [-] (qex4);
    \end{tikzpicture}
  }
  \caption{
   A circuit representing one branch of the isometry $\Phi_j$. $H$ is the Hadamard gate, defined as $H\ket{k} = (\ket{0}+(-1)^k\ket{1})/\sqrt{2}$. It takes as an input the state $\ket{\Psi}^{\rA_1 \cdots \rA_N}$ and each party performs a unitary operation $U_{a_j}$ conditioned on the outcome of the measurement $\M_{a_j}$. 
 }
  \label{isomulti}
\end{figure}
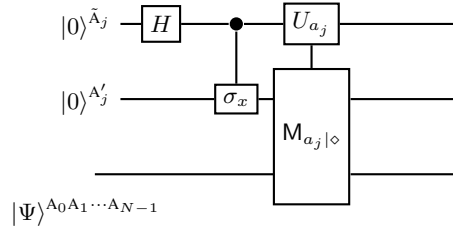

\section{Extension to qudit states}\label{app:ExtensionQudits}

In previous appendices we gave the procedures for network assisted and fully network assisted self-testing of all qubit pure entangled states. The procedures can readily be generalized to self-testing of all pure entangled states whose local dimensions are larger than two. The generalization procedure relies on the method for device-independent witnessing of all entangled states presented in \cite{Bowles_2018PRL}. In this appendix we give the outline of the proof in the network-assisted scenario. 

The reference experiment for self-testing a qudit $N$-partite state $\ket{\psi_d}$ involves $N$ main parties sharing the reference state, and $N$ auxiliary parties, each sharing a maximally entangled pair of qudits $\ket{\phi_+} = \frac{1}{\sqrt{d}}\sum_{i=0}^{d-1}\ket{ii}$ with one of the main parties. 
If the main parties perform qudit Bell state measurements on their shares of $\ket{\psi_d}$ and $\ket{\phi_+}$, they can teleport the reference state to the auxiliary parties. On their side, the auxiliary parties can apply a tomographically complete set of measurements on the teleported state and learn its form.

In a similar way like in the qubit case, all steps of the procedure have to be self-tested. The first step is self-testing of maximally entangled pair of qudits, and a tomographically complete set of measurements performed by an auxiliary party. The procedure for this is given in \cite{Bowles_2018}. Here we just recapitulate that this can be done by first encoding $\ket{\psi_d}$ into a larger Hilbert space $\otimes_{i=1}^N\mathcal{H}_i$, where $\mathcal{H}_i = \mathbb{C}^k$ and $k = 2^{\ceil{\log_2{d}}}$. Then, for self-testing of every $\ket{\psi_d}$ and a tomographically complete set of measurements we use the maximal violation of $\ceil{\log_2{d}}$ $3$-CHSH inequalities in parallel. The details of self-testing procedure  are described in \cite{Bowles_2018}. This procedure characterises a tomographically complete set of measurements applied by the auxiliary parties, up to complex conjugation. Then, using Proposition \ref{prop} it is possible to self-test the state teleported to the auxiliary parties if the physical correlations correspond to those obtained when performing the tomographically complete set of measurements on the teleported reference state.


\end{document}